\def\b{\begin{equation}}
\def\e{\end{equation}}
\numberwithin{equation}{section}
\newtheorem{theorem}{Theorem}[section]
\newtheorem{prop}[theorem]{Proposition}
\newtheorem{lem}[theorem]{Lemma}
\newtheorem{cor}[theorem]{Corollary}
{\theoremstyle{definition}
\newtheorem{definition}[theorem]{Definition}
\newtheorem{rem}[theorem]{Remark}
\newtheorem{ex}[theorem]{Example}}
\def\wt{{\rm wt}\,}
\def\b{\begin{equation}}
\def\e{\end{equation}}
\title{Hyperelliptic sigma functions and the Kadomtsev-Petviashvili equation\footnote{This is the accepted version of an article published in Physica D: Nonlinear Phenomena. 
The final published version is available at: https://doi.org/10.1016/j.physd.2025.134819}}
\author{Takanori Ayano\footnote{Osaka Central Advanced Mathematical Institute, Osaka Metropolitan University, Osaka, Japan. \newline \hspace{3ex} Email: ayano@omu.ac.jp} \hspace{1ex} and \hspace{1ex} Victor M. Buchstaber\footnote{Steklov Mathematical Institute of Russian Academy of Sciences, Moscow, Russia. \newline \hspace{3ex} Email: buchstab@mi-ras.ru
\newline \hspace{3ex} Key words: hyperelliptic sigma function, hyperelliptic function, KP equation. 
\newline \hspace{3ex} MSC classes: 14H42, 14K25, 14H70, 14H81.}}
\date{}
\begin{document}
\maketitle

\begin{flushright}
\textit{Dedicated to the memory of Vladimir Evgenievich Zakharov.}
\end{flushright}

\begin{abstract}
In this paper, a theory of hyperelliptic functions based on multidimensional sigma functions is developed and explicit formulas for hyperelliptic solutions to the Kadomtsev-Petviashvili equations KP-I and KP-II are obtained. 
The long-standing problem of describing the dependence of these solutions on the variation of the coefficients of the defining equation of a hyperelliptic curve, which are integrals of the equations, is solved.
\end{abstract}

\section{Introduction}

The Kadomtsev-Petviashvili equation (1970) is one of the most famous (2+1)-equations in the theory of nonlinear waves (cf. \cite{Kadomtsev-Petviashvili-1970}).  
It is a natural generalization of the (1+1)-Korteweg-de Vries equation (1895). 
In \cite{Zakharov-Shabat-1974} and \cite{Zakharov-Shabat-1979}, Zakharov and Shabat described solutions to the KP equation with the condition that they decrease rapidly at infinity. 
For a discussion of the integrability of the KP-II equation and non-integrability of the KP-I equation, see \cite{Zakharov-Schulman-1988}. 
In \cite{Novikov}, Sergei Novikov developed the theory of finite-zone integration of equations of mathematical physics. 
In \cite{Novikov}, it was noted that ``Our work is based on certain simple but fundamental algebraic properties of equations
admitting the Lax representation which are strongly degenerate in the problem with rapidly decreasing
functions (for $x\to\pm\infty$), and have therefore not been noted. Finally, it is essential to note the nonlinear
``superposition law for waves" for the KdV equation which in the periodic case has an interesting algebraic-geometric
interpretation."
In \cite{Its-Matveev-1975}, Its and Matveev constructed a solution to the KdV equation in terms of the Riemann theta function defined by the lattice of periods of holomorphic differentials on an algebraic curve of arbitrary genus 
and winding vectors defined by periods of abelian differentials of the second kind. 
Krichever added one direction vector in the theta functional solution of Its and Matveev to the KdV equation and constructed a solution to the KP equation (cf. \cite{Harnad-Enolski-2011}, \cite{Krichever-1977}). 
Krichever's result on solutions to the KP equation led Novikov to the famous conjecture, an approach to solve the Riemann-Schottky problem in terms of the KP equation, \cite{Krichever-2008}, which was solved by T. Shiota \cite{Shiota-1986}. 
In \cite{Zhao-Fan-Luo-2016}, solutions to the KP equation in terms of the Riemann theta function associated with a hyperelliptic curve were considered. 
In our work, explicit formulas for solutions to the Kadomtsev-Petviashvili equations KP-I and KP-II are obtained in terms of hyperelliptic functions. 
The hyperelliptic functions associated with a hyperelliptic curve with one infinite point are defined by the logarithmic derivatives of the sigma function associated with the curve. 
This hyperelliptic sigma function is completely determined by the coefficients of the defining equation of the curve. 
The hyperelliptic functions associated with a hyperelliptic curve with two infinite points are defined by using the Abel-Jacobi map. 
The hyperelliptic functions associated with the hyperelliptic curve with two infinite points are determined by the coefficients of the defining equation of the curve and a branch point of the curve.  
We have not only obtained an explicit form of solutions but also related the coefficients of the defining equation of the curve to physical parameters.  
Our solutions require scaling, i.e., multiplication by scalars of variables and functions, and linear transformations of variables of the ``traveling wave" type. 
To describe the behavior of the sigma function with shifts by periods, its expression through the theta function is used (see Proposition \ref{2025.2.23.18765432042224455}). 
The remarkable fact is that the sigma function associated with a hyperelliptic curve with one infinite point is a solution to the system of multidimensional heat equations in a nonholonomic frame, 
which is completely determined by the coefficients of the defining equation of the curve (cf. \cite{BL-2004-Heat-Equations}). 
In this case, our solutions to the KP equation are determined only by the coefficients of the defining equation of the curve and the well-known problem of constructing real-valued solutions is solved.

For a positive integer $g$, let $C$ be the hyperelliptic curve of genus $g$ defined by 
\begin{equation}
Y^2=X^{2g+1}+\lambda_2X^{2g}+\lambda_4X^{2g-1}+\cdots+\lambda_{4g}X+\lambda_{4g+2}, \qquad \lambda_i\in\mathbb{C}.\label{2025.2.3.0897634143}
\end{equation}
We assign weights for $X$, $Y$, and $\lambda_i$ as $\wt(X)=2$, $\wt(Y)=2g+1$, and $\wt(\lambda_i)=i$.  
The equation (\ref{2025.2.3.0897634143}) has the homogeneous weight $4g+2$ with respect to the coefficients $\lambda_i$ and the variables $X,Y$.  
Let $V$ be the hyperelliptic curve of genus $g$ defined by 
\begin{equation}
y^2=\nu_{0}x^{2g+2}+\nu_{2}x^{2g+1}+\cdots+\nu_{4g+2}x+\nu_{4g+4}, \quad \nu_i\in\mathbb{C},\quad\nu_{0}\neq0.\label{2025.2.3.08976341433461200}
\end{equation}
We assign weights for $x$, $y$, and $\nu_i$ as $\wt(x)=2$, $\wt(y)=2g+2$, and $\wt(\nu_i)=i$.  
The equation (\ref{2025.2.3.08976341433461200}) has the homogeneous weight $4g+4$ with respect to the coefficients $\nu_i$ and the variables $x,y$. 
We consider the hyperelliptic functions $\wp_{i,j}(u_1,u_3,\dots,u_{2g-1})$ with $i,j=1,3,\dots,2g-1$ associated with the curve $C$, which are meromorphic functions on $\mathbb{C}^g$, 
and assign weights for $u_i$ and $\wp_{i,j}$ as $\wt(u_i)=-i$ and $\wt(\wp_{i,j})=i+j$.  
We consider the hyperelliptic functions $\mathcal{P}_{i,j}(v_{2g},v_{2g-2},\dots,v_2)$ with $i,j=2,4,\dots,2g$ associated with the curve $V$, which are meromorphic functions on $\mathbb{C}^g$, 
and assign weights for $v_i$ and $\mathcal{P}_{i,j}$ as $\wt(v_i)=-i$ and $\wt(\mathcal{P}_{i,j})=i+j$.  
Let us describe our solutions to the KP equation. 
We consider the case $g\ge3$, assume $\lambda_{4g+2}\neq0$, and for $g\ge4$ take constants $b_i\in\mathbb{C}$ with $1\le i\le g-3$. 
Let 
\[\varphi(t_1,t_2,t_3)=-2\wp_{2g-1,2g-1}\left(b_1,\dots,b_{g-3}, \mathfrak{c}t_3, \mathfrak{d}t_2, t_1+\mathfrak{e}t_2\right)-\mathfrak{f},\]
where 
\[\mathfrak{c}=-16\lambda_{4g+2},\quad \mathfrak{d}=2\sqrt{-3\lambda_{4g+2}},\quad \mathfrak{e}=\frac{\lambda_{4g}}{\sqrt{-3\lambda_{4g+2}}},\quad \mathfrak{f}=\frac{2}{3}\lambda_{4g-2}+\frac{\lambda_{4g}^2}{18\lambda_{4g+2}}.\]
For the curve $C$, we assign weights for $t_i$ with $i=1,2,3$ as $\wt(t_i)=(1-2g)i$. We have $\wt(\varphi)=4g-2$.  
In Corollary \ref{2024.12.6.1}, we prove that the function $\varphi$ satisfies the KP-$\mathrm{I}$ equation
\[\partial_{t_1}(\partial_{t_3}\varphi+6\varphi\partial_{t_1}\varphi+\partial_{t_1}^3\varphi)=\partial_{t_2}^2\varphi,\]
where $\partial_{t_i}=\partial/\partial t_i$. 
In \cite{BEL-2000}, it was pointed out that if $g\ge3$, under certain restrictions on the coefficients of the defining equation of the curve, the function $\wp_{2g-1,2g-1}$ is a solution to the KP equation. 
In Corollary \ref{2024.12.6.1} of our paper, we give a simple explicit condition on the coefficients of the defining equation of the curve and under this condition we prove that the function $\wp_{2g-1,2g-1}$ is a solution to the KP equation. 
We consider the case $g\ge3$ and for $g\ge4$ take constants $c_i\in\mathbb{C}$ with $1\le i\le g-3$. 
Let 
\[\psi(t_1,t_2,t_3)=-2\mathcal{P}_{2,2}\left(c_1,\dots,c_{g-3},\alpha t_3, \beta t_2, t_1+\gamma t_2\right)-\delta,\]
where 
\[\alpha=-16\nu_{0},\quad \beta=2\sqrt{-3\nu_{0}},\quad \gamma=\frac{\nu_{2}}{\sqrt{-3\nu_{0}}},\quad \delta=\frac{2}{3}\nu_{4}+\frac{\nu_{2}^2}{18\nu_{0}}.\]
For the curve $V$, we assign weights for $t_i$ with $i=1,2,3$ as $\wt(t_i)=-2i$. We have $\wt(\psi)=4$. 
We derive the differential relations between the hyperelliptic functions $\mathcal{P}_{i,j}$ explicitly for any $g$ (see Theorem \ref{2024.10.22.1111}). 
In Corollary \ref{4}, by using these differential relations, we prove that the function $\psi$ satisfies the KP-$\mathrm{I}$ equation
\[\partial_{t_1}(\partial_{t_3}\psi+6\psi\partial_{t_1}\psi+\partial_{t_1}^3\psi)=\partial_{t_2}^2\psi.\]
Our solutions to the KP equation use not only differentiation operators but also argument shift operators.  
Our solutions to the KP equation are consistent with the grading. 
Grading is the fundamental difference between the sigma function and the theta function.
The theta function does not allow grading since its arguments are normalized. 
We can obtain solutions to the KP-II equation (see Remark \ref{2025.1.30.1}). 

In \cite{Kl1} and \cite{Kl2}, Klein generalized the Weierstrass elliptic sigma function to the multidimensional sigma functions associated with hyperelliptic curves. 
On this problem, Klein published 3 works (1886--1890). 
Pay attention to the papers \cite{B-1898} and \cite{B} by Baker. 
In 1923, a 3-volume collection of Klein's scientific works was published. 
There is no doubt that Klein knew Baker's results. 
However, in this collection Klein emphasized that the theory of hyperelliptic functions was still far from complete. 
Klein and Baker did not discuss the equations of mathematical physics. 
The development of the theory of multidimensional sigma functions in the direction of applications to problems of mathematical physics began with the works of Buchstaber, Enolski, and Leykin (cf. \cite{BEL-97-1}, \cite{BEL-97-2}, \cite{BEL-2012}, \cite{BEL-2018}).  
Over the past 30 years, a number of authors have successfully joined in the development of the classical results of Klein and Baker with applications in mathematical physics.

In \cite{B}, Baker derived a fundamental formula on the hyperelliptic functions associated with the curve $C$. 
Baker did not set the problem of describing all relations in the field of meromorphic functions on the Jacobian variety of the hyperelliptic curve. 
In \cite{BEL-97-1}, this problem was set and the generators of the ideal of relations between the hyperelliptic functions associated with the curve $C$ were described explicitly.  
The solution to this problem made it possible to specify the equations describing the Jacobian variety of the hyperelliptic curve explicitly. 
It is remarkable that these generators explicitly contain the KdV equation (cf. \cite{BEL-97-1}). 
This ideal also contains the KP equation. 
In \cite{Buchstaber-Mikhailov-2021}, Buchstaber and Mikhailov described the connection between the hyperelliptic functions based on the sigma functions and the known constructions of the KdV hierarchy.
In \cite{Yang}, in the case $g=3$, a solution to the KP equation in terms of the hyperelliptic function associated with the curve $C$ was considered.

In \cite{B}, Baker introduced basic hyperelliptic functions associated with the curve $V$ (see Definition \ref{2024.11.29.12345}) and derived a fundamental formula on these functions (see Lemma \ref{2024.10.21.111}).  
Further, in \cite{B}, the differential relations between the hyperelliptic functions associated with the curve $V$ were described explicitly for $g=1,2,3$. 
In \cite{B}, Baker used the Abel-Jacobi map and did not actually need the sigma function. 
In \cite{M}, in the case $g=3$, it was proved that the hyperelliptic function associated with the curve $V$ satisfies the KP equation.  
Above we described the solution to the KP equation of weight $4g-2$ for the curve $C$ and the solution to the KP equation of weight $4$ for the curve $V$. 
We also note the following results of our paper. 
\begin{itemize}

\item We construct a solution to the KP equation of weight $2$ for the curve $C$ (see Proposition \ref{2025.2.6.19561456}). 

\item We describe the relations between the hyperelliptic functions associated with the curves $C$ and $V$ explicitly (see Proposition \ref{2024.11.30.9065376}). 
\end{itemize}

In the works of Buchstaber, Enolski, and Leykin, the focus was on a hyperelliptic curve with one infinite point. 
In our paper, we also obtain results for a hyperelliptic curve with two infinite points. 
The problem to find hyperelliptic solutions to the KP equation is non-trivial. 
In the general approach of the theory of finite-zone integration, only hyperelliptic solutions to the KdV equation were discussed. 
This problem was posed in the works of Buchstaber, Enolski, and Leykin on the basis of the analysis of differential relations in the field of hyperelliptic functions.
We have obtained an interesting and new relationship of results for the curves $C$ and $V$. Functions admit two types of operations: 
\begin{enumerate}
\item differentiation with respect to one of the variables,

\item linear transformation of variables.
\end{enumerate}
For the abelian functions, in the first case the lattice of periods does not change and in the second case it is linearly transformed. 
For the curve $C$, we use a linear transformation of the arguments (see Remark \ref{2025.1.23.1234}). 
For the curve $V$, this is not required. 


An $n$-dimensional theta function is an entire function of $n$ variables whose power series expansion is given in terms of a special lattice $\Gamma$ of rank $2n$. 
In the case where an abelian variety $\mathbb{C}^n/\Gamma$ is the Jacobian variety of an algebraic curve, the lattice $\Gamma$ is given by the periods of holomorphic differentials on this curve. 
In \cite{Ayano-Nakayashiki-2013} and \cite{Nakayashiki-2010}, for $(n,s)$ curves and telescopic curves, the relations between the tau functions and the sigma functions were studied and solutions to the KP hierarchy in terms of the tau functions were described. 
In \cite{Kodama-2018} and \cite{Kodama-Xie-2021}, Kodama described solutions to the KP problem based on Hirota's approach using the tau function and a description of solutions in terms of Grassmann manifolds. 
In \cite{Khan-Akbar-2015}, \cite{Kumar-Tiwari-Kumar-2017}, \cite{C-Liu-2010}, \cite{L.-Wei-2011}, and \cite{Xu-Chen-Dai-2014}, solutions to the KP equation in terms of hyperbolic functions, trigonometric functions, exponential functions, and 
logarithmic functions were constructed.  
Our paper describes solutions to the KP equation in terms of multidimensional sigma functions. 
Unlike theta functions, the multidimensional sigma function is an entire function whose coefficients of the power series expansion are polynomials in the coefficients of the defining equation of the curve. 
The coefficients of the defining equation of the curve are integrals of the KP equation.  
By studying the deformations of the coefficients of the defining equation of the curve, we study the deformation of the integrals. 
Describing the dependence of the solution on the deformation of the integrals is an important task. 
This problem can be solved by studying the solutions to the KP equation in terms of the multidimensional sigma functions.



In \cite{A-E-E-2003}, \cite{A-E-E-2004}, and \cite{B-1907}, the identities for hyperelliptic functions of genus 2 which are different from the hyperelliptic functions considered in our paper were studied. 

The present paper is organized as follows. 
In Section 2, we review the definition and properties of the hyperelliptic sigma functions.  
In Section 3, we construct solutions to the KP equation in terms of the hyperelliptic functions associated with a hyperelliptic curve with one infinite point.  
In Section 4, we review the definition of the hyperelliptic functions associated with a hyperelliptic curve with two infinite points, which was given in \cite{B}. 
In Section 5, we derive the differential relations between the hyperelliptic functions associated with the hyperelliptic curve with two infinite points and prove that 
the hyperelliptic function associated with the hyperelliptic curve with two infinite points satisfies the KP equation for any genus. 
In Section 6, we consider the inversion problem of the Abel-Jacobi map for the hyperelliptic curve with two infinite points. 
In Section 7, we describe the relations between the hyperelliptic functions associated with the hyperelliptic curve with one infinite point and the hyperelliptic functions associated with the hyperelliptic curve with two infinite points explicitly.

\section{Hyperelliptic sigma functions}\label{2024.10.21.1}

In this section, we review the definition of the hyperelliptic sigma functions and give facts about them which will be used later on.  
For details of the hyperelliptic sigma functions, see \cite{BEL-97-1}, \cite{BEL-97-2}, and \cite{BEL-2012}.  

For a positive integer $g$, let us consider the polynomial in $X$ 
\[
M(X)=X^{2g+1}+\lambda_2X^{2g}+\lambda_4X^{2g-1}+\cdots+\lambda_{4g}X+\lambda_{4g+2}, \qquad \lambda_i\in\mathbb{C}.
\]
We assume that $M(X)$ has no multiple roots and consider the non-singular hyperelliptic curve of genus $g$
\[C=\Bigl\{(X,Y)\in\mathbb{C}^2 \Bigm|Y^2=M(X)\Bigr\}.\]
We assign weights for $X$, $Y$, and $\lambda_i$ as $\wt(X)=2$, $\wt(Y)=2g+1$, and $\wt(\lambda_i)=i$.  
The equation $Y^2=M(X)$ has the homogeneous weight $4g+2$ with respect to the coefficients $\lambda_i$ and the variables $X,Y$.  
A basis of the vector space consisting of holomorphic 1-forms on $C$ is given by
\[
\omega_i=-\frac{X^{g-i}}{2Y}dX, \qquad 1\le i\le g. 
\]
We set $\omega={}^t(\omega_1,\dots,\omega_g)$. 
Let us consider the following meromorphic 1-forms on $C$:  
\begin{equation}
\eta_i=-\frac{1}{2Y}\sum_{k=g-i+1}^{g+i-1}(k+i-g)\lambda_{2g+2i-2k-2}X^kdX, \qquad 1\le i\le g,\label{2024.10.21.345}
\end{equation}
which are holomorphic at any point except $\infty$. 
In (\ref{2024.10.21.345}), we set $\lambda_0=1$. 
Let $\{A_i,B_i\}_{i=1}^g$ be a canonical basis in the one-dimensional homology group of the curve $C$. 
We define the period matrices by 
\[
2\omega'=\left(\int_{A_j}\omega_i\right),\quad
2\omega''=\left(\int_{B_j}\omega_i\right),\quad 
-2\eta'=\left(\int_{A_j}\eta_i\right),
\quad
-2\eta''=\left(\int_{B_j}\eta_i\right).
\]
We define the lattice of periods $\Lambda=\bigl\{2\omega'm_1+2\omega''m_2\mid m_1,m_2\in\mathbb{Z}^g\bigr\}$ and consider the Jacobian variety $\operatorname{Jac}(C)=\mathbb{C}^g/\Lambda$. 
The normalized period matrix is given by $\tau=(\omega')^{-1}\omega''$. 
Let $\tau\delta'+\delta''$ with $\delta',\delta''\in\mathbb{R}^g$ be the Riemann constant with respect to $(\{A_i, B_i\}_{i=1}^g,\infty)$. We denote the imaginary unit by $\textbf{i}$.
The sigma function $\sigma(u)$ associated with the curve $C$, $u={}^t(u_1, u_3, \dots, u_{2g-1})\in\mathbb{C}^g$, is defined by
\[
\sigma(u)=\varepsilon\exp\biggl(\frac{1}{2}{}^tu\eta'(\omega')^{-1}u\biggr)\theta\begin{bmatrix}\delta'\\ \delta'' \end{bmatrix}\bigl((2\omega')^{-1}u,\tau\bigr),
\]
where $\theta\begin{bmatrix}\delta'\\ \delta'' \end{bmatrix}(u,\tau)$ is the Riemann theta function with the characteristics $\begin{bmatrix}\delta'\\ \delta'' \end{bmatrix}$ defined by
\[
\theta\begin{bmatrix}\delta'\\ \delta'' \end{bmatrix}(u,\tau)=\sum_{n\in\mathbb{Z}^g}\exp\bigl\{\pi{\bf i}\,{}^t(n+\delta')\tau(n+\delta')+2\pi{\bf i}\,{}^t(n+\delta')(u+\delta'')\bigr\}
\]
and $\varepsilon$ is a non-zero constant. 
The characteristics of this sigma function correspond to the vector of the Riemann constant. 

\begin{prop}[{\cite[pp.~7, 8]{BEL-97-1}}]\label{2025.2.23.18765432042224455}
For $m_1,m_2\in\mathbb{Z}^g$, let $\Omega=2\omega'm_1+2\omega''m_2$.
Then, for $u\in\mathbb{C}^g$, we have
\[
\sigma(u+\Omega)/\sigma(u) =(-1)^{2({}^t\delta'm_1-{}^t\delta''m_2)+{}^tm_1m_2}\exp\bigl\{{}^t(2\eta'm_1+
2\eta''m_2)(u+\omega'm_1+\omega''m_2)\bigr\}.
\]
\end{prop}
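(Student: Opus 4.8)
The plan is to substitute the definition of $\sigma(u)$ directly and track how each of its two factors — the Gaussian prefactor $\exp\bigl(\tfrac12{}^tu\,\eta'(\omega')^{-1}u\bigr)$ and the theta factor $\theta\begin{bmatrix}\delta'\\\delta''\end{bmatrix}\bigl((2\omega')^{-1}u,\tau\bigr)$ — transforms under $u\mapsto u+\Omega$ with $\Omega=2\omega'm_1+2\omega''m_2$. Since $\varepsilon$ is a constant it drops out of the ratio. Writing $v=(2\omega')^{-1}u$, one has $(2\omega')^{-1}\Omega=m_1+\tau m_2$ with $\tau=(\omega')^{-1}\omega''$, so the theta argument is shifted by the lattice point $m_1+\tau m_2$ of the standard theta lattice, and the whole computation is reduced to two routine pieces that are then combined using the period relations attached to the differentials $\omega_i,\eta_i$.

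First, for the theta factor I would invoke the classical quasi-periodicity law for theta with characteristics,
\[
\theta\begin{bmatrix}\delta'\\\delta''\end{bmatrix}(v+m_1+\tau m_2,\tau)=e^{-\pi{\bf i}\,{}^tm_2\tau m_2-2\pi{\bf i}\,{}^tm_2 v+2\pi{\bf i}({}^t\delta'm_1-{}^t\delta''m_2)}\,\theta\begin{bmatrix}\delta'\\\delta''\end{bmatrix}(v,\tau),
\]
which follows at once from the series definition of $\theta\begin{bmatrix}\delta'\\\delta''\end{bmatrix}$ given above, by completing the square in the quadratic form and reindexing $n\mapsto n+m_2$. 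Second, using that $\eta'(\omega')^{-1}$ is symmetric,
\[
\tfrac12{}^t(u+\Omega)\eta'(\omega')^{-1}(u+\Omega)-\tfrac12{}^tu\,\eta'(\omega')^{-1}u={}^t\Omega\,\eta'(\omega')^{-1}u+\tfrac12{}^t\Omega\,\eta'(\omega')^{-1}\Omega.
\]

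It then remains to show that
\[
{}^t\Omega\,\eta'(\omega')^{-1}u+\tfrac12{}^t\Omega\,\eta'(\omega')^{-1}\Omega-\pi{\bf i}\,{}^tm_2(\omega')^{-1}u-\pi{\bf i}\,{}^tm_2\tau m_2
\]
equals ${}^t(2\eta'm_1+2\eta''m_2)(u+\omega'm_1+\omega''m_2)+\pi{\bf i}\,{}^tm_1m_2$. The input here is the set of bilinear period relations for the bases of Section \ref{2024.10.21.1}: the symmetry of ${}^t\omega'\eta'$ (equivalently of $\eta'(\omega')^{-1}$) and the generalized Legendre relation ${}^t\omega''\eta'-{}^t\eta''\omega'=\tfrac{\pi{\bf i}}{2}I_g$, which follow from Riemann's bilinear relations applied to the holomorphic and second-kind differentials $\omega_i$, $\eta_i$ (I would cite \cite{BEL-97-1}, \cite{BEL-97-2}, \cite{BEL-2012} rather than reprove them). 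Substituting $\Omega=2\omega'm_1+2\omega''m_2$ and simplifying with the consequences ${}^t\omega'\eta'(\omega')^{-1}={}^t\eta'$ and ${}^t\omega''\eta'(\omega')^{-1}={}^t\eta''+\tfrac{\pi{\bf i}}{2}(\omega')^{-1}$, the $\pi{\bf i}$-terms in the coefficient of $u$ cancel, leaving exactly ${}^t(2\eta'm_1+2\eta''m_2)u$; in the constant part, the $-\pi{\bf i}\,{}^tm_2\tau m_2$ term is absorbed and one is left with ${}^t(2\eta'm_1+2\eta''m_2)(\omega'm_1+\omega''m_2)+\pi{\bf i}\,{}^tm_1m_2$. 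Finally, collecting $e^{\pi{\bf i}\,{}^tm_1m_2}=(-1)^{{}^tm_1m_2}$ together with the factor $e^{2\pi{\bf i}({}^t\delta'm_1-{}^t\delta''m_2)}=(-1)^{2({}^t\delta'm_1-{}^t\delta''m_2)}$ coming from the theta transformation produces exactly the asserted sign $(-1)^{2({}^t\delta'm_1-{}^t\delta''m_2)+{}^tm_1m_2}$, and the proposition follows.

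The main obstacle is purely one of bookkeeping: one must fix the bilinear relations in exactly the normalization dictated by the specific bases $\omega_i=-X^{g-i}dX/2Y$ and the $\eta_i$ of (\ref{2024.10.21.345}) — in particular the precise constant $\tfrac{\pi{\bf i}}{2}$, which is sensitive to the factors of $2$ absorbed into $2\omega',2\omega'',2\eta',2\eta''$ and to the orientation of the homology basis $\{A_i,B_i\}$ — and one must be scrupulous about every factor of the form $e^{2\pi{\bf i}\cdot(\text{real})}$ so that the final $\pm1$ comes out right. Beyond that, the argument is a direct substitution.
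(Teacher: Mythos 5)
The paper gives no proof of this proposition: it is imported verbatim from \cite[pp.~7, 8]{BEL-97-1}, so there is nothing internal to compare against. Your computation is the standard derivation and it is correct: the theta quasi-periodicity factor you write down follows from the stated series by the reindexing $n\mapsto n+m_2$, the Gaussian prefactor contributes ${}^t\Omega\,\eta'(\omega')^{-1}u+\tfrac12{}^t\Omega\,\eta'(\omega')^{-1}\Omega$, and the identities ${}^t\omega'\eta'(\omega')^{-1}={}^t\eta'$ and ${}^t\omega''\eta'(\omega')^{-1}={}^t\eta''+\tfrac{\pi{\bf i}}{2}(\omega')^{-1}$ make the $\pi{\bf i}\,{}^tm_2(\omega')^{-1}u$ terms cancel and leave exactly ${}^t(2\eta'm_1+2\eta''m_2)(u+\omega'm_1+\omega''m_2)+\pi{\bf i}\,{}^tm_1m_2$, whence the asserted sign. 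You are right that the only genuinely external input is the symmetry of ${}^t\omega'\eta'$ and the generalized Legendre relation with the precise constant $\tfrac{\pi{\bf i}}{2}$ in the normalization fixed by $2\omega',2\omega'',-2\eta',-2\eta''$ and the intersection convention for $\{A_i,B_i\}$; flagging and citing that rather than reproving it is appropriate, since a sign error there would break the cancellation in the coefficient of $u$ rather than merely flip an overall $\pm1$.
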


\section{Solutions to the KP equation in terms of the hyperelliptic functions associated with the curve $C$}

For an integer $k\ge2$, let $\wp_{i_1,\dots,i_k}=-\partial_{u_{i_1}}\cdots\;\partial_{u_{i_k}}\log\sigma$, where $\partial_{u_l}=\partial/\partial u_l$. 
We assign weights for $\wp_{i_1,\dots,i_k}$ as $\wt(\wp_{i_1,\dots,i_k})=i_1+\cdots+i_k$.  
For $1\le i\le g$, we take points $S_i=(X_i,Y_i)\in C\setminus\{\infty\}$ such that $S_i\neq\tau_1(S_j)$ if $i\neq j$, where $\tau_1$ is the hyperelliptic involution of $C$
\[\tau_1: C\to C,\qquad (X,Y)\mapsto(X,-Y).\] 
Let 
\[u=\sum_{i=1}^g\int_{\infty}^{S_i}\omega.\]
For $1\le k\le g$, let $h_k(z_1,\dots,z_g)$ be the elementary symmetric polynomial of degree $k$ in $g$ variables $z_1,\dots,z_g$. 
We set $h_0(z_1,\dots,z_g)=1$. 
For non-negative integers $i$ and $j$, let $\delta_{i,j}$ be the Kronecker delta
\[\delta_{i,j}=\left\{\begin{array}{ll}1& \mathrm{if}\;\;i=j\\0&\mathrm{if}\;\;i\neq j\end{array}\right..\] 

\begin{theorem}[{\cite[Theorem 2.2]{BEL-97-1}}]\label{2024.10.22.1}
We have 
\[h_k(X_1,\dots,X_g)=(-1)^{k-1}\wp_{1, 2k-1}(u),\qquad 1\le k\le g.\]
\end{theorem}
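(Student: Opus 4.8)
The plan is to compare the behavior of the symmetric function $h_k(X_1,\dots,X_g)$ and the function $\wp_{1,2k-1}(u)$ as functions of the points $S_1,\dots,S_g\in C$, viewed through the Abel--Jacobi map $u=\sum_i\int_\infty^{S_i}\omega$. Both sides are, up to sign, meromorphic functions on $\operatorname{Jac}(C)$ (or at least on a symmetric product of the curve), so the strategy is to pin down a small finite set of data --- poles/zeros, leading singular behavior, and weight --- that forces the two sides to coincide.

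First I would recall the standard local analysis near $\infty\in C$. Taking a local coordinate $\xi$ at $\infty$ with $X=\xi^{-2}$, $Y=\xi^{-(2g+1)}(1+O(\xi))$, one computes that $\omega_i=(\xi^{2i-2}+\cdots)\,d\xi$, so that if $S_i\to\infty$ the $i$-th component of $u$ (namely $u_{2i-1}$) acquires a leading term $\sim \xi_i^{2i-1}/(2i-1)$. In particular the map $(\xi_1,\dots,\xi_g)\mapsto u$ is a local diffeomorphism near the origin and $u_1\sim\sum_i\xi_i$. This gives the key expansion: near the origin, $u_{2k-1}\sim\sum_i \xi_i^{2k-1}/(2k-1)$ plus higher-order corrections, i.e.\ the $u_{2k-1}$ are (up to normalization and higher order) the Newton power sums in the $\xi_i$, while the $h_k(X_i)=h_k(\xi_i^{-2})$ are the elementary symmetric functions in the $\xi_i^{-2}$. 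The pole structure of $\sigma$ is governed by the theta-divisor / the stratum where some $S_i=\tau_1(S_j)$ or $S_i\to\infty$, which is exactly where $h_k(X_1,\dots,X_g)$ blows up as well.

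Next, the cleanest route is to use the known expansion of $\sigma(u)$ near $u=0$: the Schur-function leading term $\sigma(u)=s_{(g,g-1,\dots,1)}(u)+(\text{higher weight})$, where the Schur polynomial is expressed in the variables $u_1,u_3,\dots,u_{2g-1}$ with $u_{2k-1}$ playing the role of a power sum $p_{2k-1}/(2k-1)$ (up to scaling). Differentiating $-\log\sigma$ twice, one reads off that $\wp_{1,2k-1}(u)$ has a prescribed principal part as $u\to 0$, and that this principal part, under the substitution $u_{2j-1}\leftrightarrow$ power sums $\leftrightarrow \xi_i$, matches $(-1)^{k-1}h_k(\xi_i^{-2})=(-1)^{k-1}h_k(X_i)$ to the relevant order. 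Alternatively --- and this is probably what the authors do --- one differentiates the Jacobi-type inversion formula: from $u=\sum_i\int_\infty^{S_i}\omega$ one has $du=\sum_i\omega(S_i)$, and expressing $dX_i/dt$ in terms of the $\wp_{1,j}$ via the chain rule and the explicit form $\omega_i=-X^{g-i}/(2Y)\,dX$ yields a system identifying the elementary symmetric functions of the $X_i$ with the second logarithmic derivatives of $\sigma$ along the $u_1$-direction. A homogeneity/weight check ($\wt(h_k(X_i))=2k=\wt(\wp_{1,2k-1})$, and both are weight-homogeneous) then removes any remaining ambiguity in constants.

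The main obstacle I anticipate is controlling the \emph{higher-order corrections} in the Abel--Jacobi expansion near $\infty$: the relation $u_{2k-1}=\text{(power sum in }\xi_i)+(\text{corrections involving }\lambda_i)$ is not exact, so one cannot literally substitute "$u=$ power sums" and must argue either (a) that the corrections do not affect the particular coefficient being extracted (a careful grading argument, since the corrections have strictly higher weight), or (b) work on the whole Jacobian and use that a meromorphic function on $\operatorname{Jac}(C)$ with the prescribed pole divisor (contained in the theta divisor with the right multiplicity) and prescribed principal part is unique. Route (b) reduces the problem to a residue/principal-part computation at a single stratum, which is the technically delicate but conceptually clean step. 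The sign $(-1)^{k-1}$ should fall out of the sign conventions in $\omega_i=-X^{g-i}/(2Y)\,dX$ together with the alternating signs relating elementary symmetric functions to power sums (Newton's identities).
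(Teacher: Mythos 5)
The paper does not actually prove this statement: it is quoted verbatim from \cite[Theorem~2.2]{BEL-97-1} (the classical solution of the Jacobi inversion problem in terms of Kleinian $\wp$-functions), so there is no in-paper argument to compare against. Judged on its own terms, your proposal assembles the right ingredients (local expansion of the Abel--Jacobi map at $\infty$, the Schur-function leading term of $\sigma$, weight homogeneity, and uniqueness of meromorphic functions on $\operatorname{Jac}(C)$ with prescribed poles), but it is not yet a proof, and the gap is precisely at the point where your two half-routes would have to be glued together. Route (a) --- matching expansions at the single point $u=0$ --- can never by itself establish equality of two meromorphic functions; it must feed into route (b). But route (b) requires knowing the \emph{full} polar divisor and principal part of both sides along the whole theta divisor $\Theta$, not just at $u=0$, and this is where the real work lies. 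In particular, $h_k(X_1,\dots,X_g)$ lives naturally on $\operatorname{Sym}^g(C)$, and the Abel map contracts the loci $\{S_i=\tau_1(S_j)\}$ onto lower strata of $\Theta$; on those loci $h_k$ stays finite while $\wp_{1,2k-1}$ has a double pole along all of $\Theta$. So you must first show that $h_k$ descends to a meromorphic function on $\operatorname{Jac}(C)$ at all (it does, but only because the Abel map is a birational contraction, and the pushed-forward function acquires poles along the image of the contracted locus), and then compute its polar multiplicity along $\Theta$ and compare leading coefficients there. None of this is carried out, and your description of the pole locus conflates the zero divisor of $\sigma$ restricted to the image of $\operatorname{Sym}^g(C)$ with the blow-up locus of $h_k$, which are different sets upstairs.

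A second, smaller issue: the ``corrections of strictly higher weight'' argument you invoke to dismiss the nonlinear terms in $u_{2k-1}=\xi_i^{2k-1}/(2k-1)+\cdots$ is sound only after one knows that $\wp_{1,2k-1}$ is weight-homogeneous of weight $2k$ as a function of $(u,\lambda)$ jointly; this follows from the graded structure of the power-series expansion of $\sigma$, but it needs to be stated as the input it is. For reference, the proof in \cite{BEL-97-1} avoids all of this by working algebraically: it proceeds from the Klein--Baker fundamental bilinear formula expressing $\sum_{i,j}\wp_{ij}(u)X^{i-1}x^{j-1}$ through the polynomial $F(X,x)$ on the curve (the one-infinite-point analogue of Lemma~\ref{2024.10.21.111}), from which the identity $X^g-\wp_{1,1}(u)X^{g-1}+\cdots$ $=\prod_i(X-X_i)$ is read off by extracting coefficients --- a route that sidesteps both the descent problem and the expansion at $u=0$. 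Your strategy is viable in principle but would require substantially more analytic bookkeeping than the sketch provides.
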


\begin{rem}
In \cite[pp.~155, 156]{B}, for $g=1,2,3$, the functions $\wp_{i,j,k,l}$ with $i,j,k,l=1,3,\dots,2g-1$ were expressed in terms of $\wp_{m,n}$ explicitly. 
\end{rem}

\begin{rem}
In \cite[pp.~57, 58, 88]{BEL-97-2}, for $g=2,3$, the functions $\wp_{i,j,k,l}$ with $i,j,k,l=1,3,\dots,2g-1$ were expressed in terms of $\wp_{m,n}$ explicitly. 
\end{rem}

\begin{rem}
In \cite{HST}, for $g=4$, the functions $\wp_{i,j,k,l}$ with $i,j,k,l=1,3,5,7$ were expressed in terms of $\wp_{m,n}$ explicitly. 
\end{rem}


If the number of $i$ is $d$ in the suffixes of $\wp_{i_1,\dots,i_k}$, then we use the notation $i\cdot d$. 
For example, we denote $\wp_{1,1}$, $\wp_{1,1,3}$, and $\wp_{1,1,3,3}$ by $\wp_{1\cdot2}$, $\wp_{1\cdot2,3}$, and $\wp_{1\cdot2, 3\cdot2}$, respectively. 
If $i_j\notin\{1,3,\dots,2g-1\}$ for some $1\le j\le k$, we set $\wp_{i_1,\dots,i_k}=0$.

\begin{theorem}[{\cite[Corollaries 3.1.2, 3.1.3, Theorem 3.2]{BEL-97-1}}]\label{2024.11.20.1}
$(\mathrm{i})$ For $1\le i\le g$, the following equalities hold$:$  
\[\wp_{1\cdot3,2i-1}=(6\wp_{1\cdot2}+4\lambda_2)\wp_{1,2i-1}+6\wp_{1,2i+1}-2\wp_{3,2i-1}+2\delta_{1,i}\lambda_4.\]
$(\mathrm{ii})$ For $1\le i,j\le g$, the following equalities hold$:$ 
\begin{align*}
\wp_{1\cdot2,2i-1}\wp_{1\cdot2,2j-1}&=4\wp_{1\cdot2}\wp_{1,2i-1}\wp_{1,2j-1}-2(\wp_{1,2i-1}\wp_{3,2j-1}+\wp_{1,2j-1}\wp_{3,2i-1})\\
&\quad +4(\wp_{1,2j-1}\wp_{1,2i+1}+\wp_{1,2i-1}\wp_{1,2j+1})+4\wp_{2i+1,2j+1}\\
&\quad -2(\wp_{2j-1,2i+3}+\wp_{2i-1,2j+3})+4\lambda_2\wp_{1,2i-1}\wp_{1,2j-1}\\
&\quad +2\lambda_4(\delta_{i,1}\wp_{1,2j-1}+\delta_{j,1}\wp_{1,2i-1})+4\lambda_{4i+2}\delta_{i,j}+2(\lambda_{4i}\delta_{i,j+1}+\lambda_{4j}\delta_{i+1,j}).
\end{align*}
$(\mathrm{iii})$ For $1\le i,j\le g$, the following equalities hold$:$ 
\[\wp_{1\cdot2,2j-1}\wp_{1,2i-1}-\wp_{1\cdot2,2i-1}\wp_{1,2j-1}+\wp_{1,2i+1,2j-1}-\wp_{1,2i-1,2j+1}=0.\]
$(\mathrm{iv})$ For $1\le i,j\le g$, the following equalities hold$:$ 
\[\wp_{1\cdot3,2j-1}\wp_{1,2i-1}-\wp_{1\cdot3,2i-1}\wp_{1,2j-1}+\wp_{1\cdot2,2i+1,2j-1}-\wp_{1\cdot2,2i-1,2j+1}=0.\]
\end{theorem}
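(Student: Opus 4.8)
The plan is to reduce every $\wp$-function occurring in (i)--(iv) to an explicit expression in the Jacobi-inversion data of $u$ and then verify each identity algebraically. Take generic points $S_l=(X_l,Y_l)\in C$ ($1\le l\le g$), put $u=\sum_{l=1}^{g}\int_\infty^{S_l}\omega$, write $R(X)=\prod_{m=1}^{g}(X-X_m)$, and let $h^{(l)}_k$ be the elementary symmetric polynomial of degree $k$ in the $X_j$ with $j\neq l$. Inverting the Abel-map Jacobian $\bigl(\partial u_{2j-1}/\partial X_l\bigr)_{j,l}=\bigl(-X_l^{g-j}/(2Y_l)\bigr)_{j,l}$ by Lagrange interpolation, together with differentiating $Y_l^2=M(X_l)$, yields
\[
\frac{\partial X_l}{\partial u_{2m-1}}=\frac{2(-1)^m Y_l\,h^{(l)}_{m-1}}{R'(X_l)},\qquad \frac{\partial Y_l}{\partial u_{2m-1}}=\frac{(-1)^m M'(X_l)\,h^{(l)}_{m-1}}{R'(X_l)}.
\]
Because $\wp_{1,2k-1}=(-1)^{k-1}h_k(X_1,\dots,X_g)$ by Theorem \ref{2024.10.22.1} and $\partial h_k/\partial X_l=h^{(l)}_{k-1}$, iterated differentiation (recalling $\wp_{i_1,\dots,i_n}=-\partial_{u_{i_1}}\cdots\partial_{u_{i_n}}\log\sigma$) produces closed forms such as
\[
\wp_{1,2k-1,2m-1}(u)=2(-1)^{k+m-1}\sum_{l=1}^{g}\frac{Y_l\,h^{(l)}_{k-1}h^{(l)}_{m-1}}{R'(X_l)},
\]
and, after one and two more $u_1$-differentiations, for $\wp_{1\cdot2,2i-1,2j-1}$, $\wp_{1\cdot3,2i-1}$, and so on.

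Parts (iii) and (iv) follow quickly. Substituting the closed forms into (iii) and clearing the common factor $2(-1)^{i+j-1}$ turns it into
\[
\sum_{l=1}^{g}\frac{Y_l}{R'(X_l)}\Bigl(h_i h^{(l)}_{j-1}-h_j h^{(l)}_{i-1}-h^{(l)}_i h^{(l)}_{j-1}+h^{(l)}_{i-1}h^{(l)}_j\Bigr)=0,
\]
and each summand vanishes identically because $h_k=h^{(l)}_k+X_l h^{(l)}_{k-1}$ forces $h_i h^{(l)}_{j-1}-h^{(l)}_i h^{(l)}_{j-1}=X_l h^{(l)}_{i-1}h^{(l)}_{j-1}=h_j h^{(l)}_{i-1}-h^{(l)}_j h^{(l)}_{i-1}$. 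Identity (iv) then comes for free: applying $\partial_{u_1}$ to (iii), the cross term $\wp_{1\cdot2,2j-1}\wp_{1\cdot2,2i-1}-\wp_{1\cdot2,2i-1}\wp_{1\cdot2,2j-1}$ cancels and the remainder is precisely (iv).

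Parts (i) and (ii) encode the curve equation itself. Differentiating $R(X_l;u)=0$ in $u_1$ shows $Y_l=-\tfrac12\sum_{k=1}^{g}\wp_{1\cdot2,2k-1}(u)\,X_l^{g-k}$, so the conditions $Y_l^2=M(X_l)$ ($1\le l\le g$) are equivalent to the single polynomial divisibility
\[
R(X;u)\ \big|\ \tfrac14\bigl(\textstyle\sum_{k=1}^{g}\wp_{1\cdot2,2k-1}(u)X^{g-k}\bigr)^{2}-M(X).
\]
Writing out the vanishing of the remainder of this division gives relations among the reachable functions $\wp_{1,2k-1},\wp_{1\cdot2,2k-1}$ and the coefficients of the quotient; the latter must be matched with the genuine second-order functions $\wp_{2i-1,2j-1}$ (those with both indices $\ge 3$, in particular $\wp_{3,2i-1}$ for $i\ge 2$), which are not produced by differentiating Theorem \ref{2024.10.22.1}. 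To make that identification we use the Klein--Baker fundamental bilinear formula of \cite{B}: with $F(X,Z)=\sum_{k=0}^{g}X^k Z^k\bigl(2\lambda_{4g+2-4k}+\lambda_{4g-4k}(X+Z)\bigr)$ and $\lambda_0=1$ (so $F(X,X)=2M(X)$ and $\partial_Z F|_{Z=X}=M'(X)$), one has, for $u$ whose inversion divisor contains $(X,Y)$ and $(Z,W)$,
\[
\sum_{i,j=1}^{g}\wp_{2i-1,2j-1}(u)\,X^{g-i}Z^{g-j}=\frac{F(X,Z)-2YW}{4(X-Z)^2},
\]
the quotient being regular as $Z\to X$. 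Feeding $X_l,Y_l$ into both sides and combining with the divisibility above reproduces exactly (i) (linear in the fourth-order function $\wp_{1\cdot3,2i-1}$) and (ii) (quadratic in the third-order functions $\wp_{1\cdot2,2i-1}$), the terms $4\lambda_2$, $2\lambda_4$, $4\lambda_{4i+2}\delta_{i,j}$, $2(\lambda_{4i}\delta_{i,j+1}+\lambda_{4j}\delta_{i+1,j})$ entering through $M(X)$ and through the top coefficients of $F$. (For $g=1$ the divisibility is just $\tfrac14\wp_{1\cdot3}^2=M(\wp_{1\cdot2})$, the Weierstrass differential equation, and (i) at $i=1$ reads $\wp_{1\cdot4}=6\wp_{1\cdot2}^2+4\lambda_2\wp_{1\cdot2}+2\lambda_4$ — a useful check.)

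The main obstacle is precisely this last step: correctly identifying the quotient coefficients in the divisibility with the second-order $\wp$'s, and tracking which $\lambda_i$ survive the re-symmetrization to land on the delta-corrections in (ii); by contrast the mechanism behind (iii)--(iv) is automatic once the closed forms are in hand. A conceptually cleaner, coordinate-free alternative is to observe, using Proposition \ref{2025.2.23.18765432042224455}, that log-derivatives of $\sigma$ of order $\ge 2$ are $\Lambda$-periodic, so the difference of the two sides of each of (i)--(iv) is a meromorphic function on the compact torus $\operatorname{Jac}(C)$ with poles only along the theta divisor; one then checks, from the Taylor expansion of $\sigma$ transverse to that divisor, that all the poles cancel, concludes that the difference is a constant, and finally pins it down — a weight-homogeneous polynomial in the $\lambda_i$ — by comparing lowest-order Taylor coefficients at $u=0$, where $\sigma$ has a known Schur-polynomial leading term. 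On that route the work moves entirely into the pole-cancellation computation near the theta divisor.
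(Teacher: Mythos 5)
First, note that the paper itself offers no proof of Theorem \ref{2024.11.20.1}: it is imported verbatim (up to a relabelling of indices) from \cite[Corollaries 3.1.2, 3.1.3, Theorem 3.2]{BEL-97-1}, as the remark following it states, so there is no in-paper argument to compare against. Your attempt is a reconstruction of the proof in the cited source, and it follows the same classical route (Jacobi inversion plus the Klein--Baker fundamental bilinear relation). Your treatment of parts (iii) and (iv) is complete and correct: the closed form $\wp_{1,2k-1,2m-1}=2(-1)^{k+m-1}\sum_l Y_l h^{(l)}_{k-1}h^{(l)}_{m-1}/R'(X_l)$ does follow from Theorem \ref{2024.10.22.1} together with the Lagrange inversion of the Abel-map Jacobian, the identity $h_k=h^{(l)}_k+X_l h^{(l)}_{k-1}$ makes each summand of your reduced form of (iii) vanish, and (iv) is literally $\partial_{u_1}$ applied to (iii) with the cross terms cancelling. (You should still record the boundary cases $i=g$ or $j=g$, where $\wp_{1,2g+1,\ast}=0$ by the paper's convention and correspondingly $h^{(l)}_{g}=0$, so nothing breaks.)

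For parts (i) and (ii), however, what you have written is a plan rather than a proof, and you say so yourself. The two ingredients are correctly identified --- the divisibility of $\frac{1}{4}\bigl(\sum_k\wp_{1\cdot2,2k-1}X^{g-k}\bigr)^2-M(X)$ by $R(X)$, and the bilinear relation $\sum_{i,j}\wp_{2i-1,2j-1}(u)X^{g-i}Z^{g-j}=\bigl(F(X,Z)-2YW\bigr)/\bigl(4(X-Z)^2\bigr)$ for two points of the inversion divisor --- but the step that actually produces the stated right-hand sides, namely computing the quotient polynomial and matching its coefficients against the second-order functions, is exactly where all of the content of (i) and (ii) lives: the terms $2\delta_{1,i}\lambda_4$, $4\lambda_{4i+2}\delta_{i,j}$, and $2(\lambda_{4i}\delta_{i,j+1}+\lambda_{4j}\delta_{i+1,j})$ are determined only by that expansion, and an error there would be undetectable at the level of your outline (your only verification is the $g=1$, $i=j=1$ case). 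Note also that the bilinear relation you invoke from \cite{B} is itself the central result of the portion of \cite{BEL-97-1} being cited; granting it is legitimate, but then (i)--(ii) are obtained from it by a polarization and coefficient-extraction argument that still has to be written out. Until that is done, (i) and (ii) are not established by your argument; the closing ``coordinate-free alternative'' via periodicity and pole cancellation on $\operatorname{Jac}(C)$ is likewise only a sketch.
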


\begin{rem}
Theorem \ref{2024.11.20.1} was obtained in \cite{BEL-97-1}. 
We changed the suffixes of $\wp_{i_1,\dots,i_k}$ in \cite{BEL-97-1} to use the grading. 
The suffix $g$ in \cite{BEL-97-1} is replaced with $1$. 
\end{rem}

\begin{rem}
The equalities in Theorem \ref{2024.11.20.1} (i)--(iv) have the homogeneous weights $2i+2$, $2i+2j+2$, $2i+2j+1$, and $2i+2j+2$, respectively.  
\end{rem}

\begin{rem}
We consider the case $g\ge2$ and for $g\ge3$ take constants $d_i\in\mathbb{C}$ with $3\le i\le g$. 
In \cite[Theorem 4.12]{BEL-97-1}, by using the equality in Theorem \ref{2024.11.20.1} (i) with $i=1$, it was proved that $\mathcal{G}(t_1,t_3)=2\wp_{1\cdot2}(t_1,t_3,d_3,\dots,d_g)+2\lambda_2/3$ satisfies the KdV equation
\[4\partial_{t_3}\mathcal{G}+6\mathcal{G}\partial_{t_1}\mathcal{G}-\partial_{t_1}^3\mathcal{G}=0.\]
\end{rem}

We consider the case $g\ge2$ and for $g\ge3$ take constants $\varrho_i\in\mathbb{C}$ with $3\le i\le g$. 
Let us consider the function
\[\Upsilon(t_1,t_2,t_3)=-2\wp_{1\cdot2}(t_1+2\sqrt{\lambda_2}t_2, -4t_3, \varrho_3,\dots,\varrho_g).\]

\begin{prop}\label{2025.2.6.19561456}
The function $\Upsilon$ satisfies the KP-$\mathrm{I}$ equation
\[\partial_{t_1}(\partial_{t_3}\Upsilon+6\Upsilon\partial_{t_1}\Upsilon+\partial_{t_1}^3\Upsilon)=\partial_{t_2}^2\Upsilon.\]
\end{prop}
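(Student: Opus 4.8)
The plan is to translate the KP-$\mathrm{I}$ equation for $\Upsilon$ into a single polynomial relation among the $\wp$-functions on $\mathbb{C}^g$, and then to obtain that relation by differentiating the known identity in Theorem \ref{2024.11.20.1}(i). Write $\Upsilon(t_1,t_2,t_3)=-2\wp_{1\cdot2}(u_1,u_3,\varrho_3,\dots,\varrho_g)$ with $u_1=t_1+2\sqrt{\lambda_2}\,t_2$ and $u_3=-4t_3$. Since $\varrho_3,\dots,\varrho_g$ are constants, the chain rule gives $\partial_{t_1}=\partial_{u_1}$, $\partial_{t_2}=2\sqrt{\lambda_2}\,\partial_{u_1}$ and $\partial_{t_3}=-4\partial_{u_3}$ acting on $\wp_{1\cdot2}$, and the definition $\wp_{i_1,\dots,i_k}=-\partial_{u_{i_1}}\cdots\partial_{u_{i_k}}\log\sigma$ yields $\partial_{u_1}\wp_{1\cdot2}=\wp_{1\cdot3}$, $\partial_{u_1}^2\wp_{1\cdot2}=\wp_{1\cdot4}$, $\partial_{u_1}^4\wp_{1\cdot2}=\wp_{1\cdot6}$, and $\partial_{u_1}\partial_{u_3}\wp_{1\cdot2}=\wp_{1\cdot3,3}$. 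Substituting these into the KP-$\mathrm{I}$ equation, a short bookkeeping shows that the equation for $\Upsilon$ is equivalent to the identity
\[8\wp_{1\cdot3,3}+24\wp_{1\cdot3}^2+24\wp_{1\cdot2}\wp_{1\cdot4}-2\wp_{1\cdot6}+8\lambda_2\wp_{1\cdot4}=0\]
on $\mathbb{C}^g$; it suffices to prove it as an identity in all of $u_1,\dots,u_{2g-1}$, so the values $\varrho_i$ play no role.

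To prove this identity I would start from Theorem \ref{2024.11.20.1}(i) with $i=1$, which, after using $\wp_{3,1}=\wp_{1,3}$, reads
\[\wp_{1\cdot4}=6\wp_{1\cdot2}^2+4\lambda_2\wp_{1\cdot2}+4\wp_{1,3}+2\lambda_4.\]
Since this is an identity of meromorphic functions on $\mathbb{C}^g$, it may be differentiated freely; applying $\partial_{u_1}^2$ and using $\partial_{u_1}^2(\wp_{1\cdot2}^2)=2\wp_{1\cdot3}^2+2\wp_{1\cdot2}\wp_{1\cdot4}$ together with $\partial_{u_1}^2\wp_{1,3}=\wp_{1\cdot3,3}$ gives
\[\wp_{1\cdot6}=12\wp_{1\cdot3}^2+12\wp_{1\cdot2}\wp_{1\cdot4}+4\lambda_2\wp_{1\cdot4}+4\wp_{1\cdot3,3}.\]
Substituting this expression for $\wp_{1\cdot6}$ into the displayed identity, every term cancels, which completes the argument.

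The computation is entirely routine; the only point requiring care is the bookkeeping of the chain-rule coefficients, and in particular checking that the coefficient $2\sqrt{\lambda_2}$ in $u_1=t_1+2\sqrt{\lambda_2}\,t_2$ is precisely what makes the $\lambda_2$-contribution of $\partial_{t_2}^2\Upsilon$ cancel against the $\lambda_2$-term produced by the second $u_1$-derivative of the relation in Theorem \ref{2024.11.20.1}(i). (When $\lambda_2=0$ the function $\Upsilon$ is independent of $t_2$, and the same computation shows it satisfies the $t_2$-independent KP-$\mathrm{I}$ equation.) As a consistency check, all four terms of the KP-$\mathrm{I}$ equation have weight $6$ under $\wt(t_1)=-1$, $\wt(t_2)=-2$, $\wt(t_3)=-3$, $\wt(\Upsilon)=2$, which is compatible with $\wt(u_1)=-1$, $\wt(u_3)=-3$, $\wt(\lambda_2)=2$.
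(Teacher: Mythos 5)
Your proof is correct and follows essentially the same route as the paper: differentiate the identity of Theorem \ref{2024.11.20.1}(i) with $i=1$ twice with respect to $u_1$ to get the expression for $\wp_{1\cdot6}$, then translate the KP-I equation for $\Upsilon$ via the chain rule into exactly that relation. All the chain-rule coefficients and the cancellation of the $\lambda_2$-terms match the paper's computation.
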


\begin{proof}
From Theorem \ref{2024.11.20.1} (i) with $i=1$, we have 
\[\wp_{1\cdot4}=(6\wp_{1\cdot2}+4\lambda_2)\wp_{1\cdot2}+4\wp_{1,3}+2\lambda_4.\]
By differentiating this equality with respect to $u_1$ twice, we obtain
\[\wp_{1\cdot6}=12\wp_{1\cdot2}\wp_{1\cdot4}+12\wp_{1\cdot3}^2+4\lambda_2\wp_{1\cdot4}+4\wp_{1\cdot3,3}.\]
We have 
\begin{gather*}
\partial_{t_1}\partial_{t_3}\Upsilon=8\wp_{1\cdot3,3},\quad (\partial_{t_1}\Upsilon)^2=4\wp_{1\cdot3}^2,\quad \Upsilon\partial_{t_1}^2\Upsilon=4\wp_{1\cdot2}\wp_{1\cdot4},\\
\partial_{t_1}^4\Upsilon=-2\wp_{1\cdot6},\quad \partial_{t_2}^2\Upsilon=-8\lambda_2\wp_{1\cdot4}.
\end{gather*}
From the equalities above, we obtain the statement of the proposition. 
\end{proof}

\begin{rem}
Proposition \ref{2025.2.6.19561456} is an analog of the following well-known result. 
For $g=1$, the function $\mathcal{H}(t_1,t_3)=2\wp_{1\cdot2}(t_1+\lambda_2t_3)$ satisfies the KdV equation
\[4\partial_{t_3}\mathcal{H}+6\mathcal{H}\partial_{t_1}\mathcal{H}-\partial_{t_1}^3\mathcal{H}=0.\]
\end{rem}


The statement of the following theorem was formulated in the conclusion in \cite[p.~156]{B}. 
Since it is used to derive the KP equation from Theorem \ref{2024.11.20.1}, we give its proof. 

\begin{theorem}\label{2024.10.30.1}
For any $g\ge1$, the following equality holds$:$
\begin{align*}
\wp_{(2g-1)\cdot4}&=6\wp_{(2g-1)\cdot2}^2+4\lambda_{4g}\wp_{2g-1,2g-3}+4\lambda_{4g+2}(4\wp_{2g-1,2g-5}-3\wp_{(2g-3)\cdot2})\\
&\quad +4\lambda_{4g-2}\wp_{(2g-1)\cdot2}-8\lambda_{4g+2}\lambda_{4g-6}+2\lambda_{4g}\lambda_{4g-4}, 
\end{align*}
where we set $\lambda_0=1$ and $\lambda_i=0$ for $i<0$. 
\end{theorem}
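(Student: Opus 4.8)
The plan is to push the identity down to the curve via the Abel--Jacobi map and turn it into an identity between symmetric functions of the coordinates $X_i,Y_i$ of a generic degree-$g$ divisor. Since both sides are polynomials in $\lambda_2,\dots,\lambda_{4g+2}$ whose coefficients are meromorphic functions on $\operatorname{Jac}(C)$, it suffices to prove the equality for generic $\lambda_i$ and for $u=\sum_{i=1}^g\int_\infty^{S_i}\omega$ with the points $S_i=(X_i,Y_i)$ having distinct and nonzero $X_i$. The basic tool is the system of ``Dubrovin equations'': from $u_{2l-1}=\sum_i\int_\infty^{S_i}\omega_l$ with $\omega_l=-X^{g-l}dX/(2Y)$ one obtains $\partial u_{2l-1}/\partial X_j=-X_j^{g-l}/(2Y_j)$, and inverting this Vandermonde-type matrix by Lagrange interpolation gives
\[
\frac{\partial X_j}{\partial u_{2l-1}}=\frac{-2\,Y_j\,a^{(j)}_{g-l}}{\prod_{k\ne j}(X_j-X_k)},
\]
where $a^{(j)}_m$ denotes the coefficient of $X^m$ in $\prod_{k\ne j}(X-X_k)$; for $l=1$ the factor is $a^{(j)}_{g-1}=1$, while for $l=g$ it is $a^{(j)}_0=(-1)^{g-1}\prod_{k\ne j}X_k$.

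Next I would produce closed expressions in the $X_i,Y_i$ for $\wp_{(2g-1)\cdot2}$, $\wp_{2g-1,2g-3}$, $\wp_{2g-1,2g-5}$, $\wp_{(2g-3)\cdot2}$ and, finally, $\wp_{(2g-1)\cdot4}$. The starting point is Theorem \ref{2024.10.22.1}, $\wp_{1,2k-1}=(-1)^{k-1}h_k(X_1,\dots,X_g)$. Using the symmetry of mixed partials in the form $\partial_{u_1}\wp_{2g-1,2j-1}=\partial_{u_{2g-1}}\wp_{1,2j-1}$, I would differentiate $(-1)^{j-1}h_j$ with respect to $u_{2g-1}$ through the Dubrovin equations, recognise the resulting symmetric expression as $\partial_{u_1}$ of a known one (the $Y$-dependence entering only through sums of the form $\sum_k Y_k(\cdots)/\prod_{k'\ne k}(X_k-X_{k'})$, with $Y_k^2$ removed via $Y_k^2=M(X_k)$), and fix the resulting additive term — a polynomial in the $\lambda_i$ of the appropriate weight — by comparing the expansions of both sides at $u=0$, where $\sigma$ has a known leading Schur-polynomial term. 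This yields $\wp_{2g-1,2j-1}$ for every $j$; applying $\partial_{u_{2g-1}}$ two more times, again through the Dubrovin equations, gives $\wp_{(2g-1)\cdot4}=\partial_{u_{2g-1}}^2\wp_{(2g-1)\cdot2}$ as a rational function of the $X_i,Y_i$.

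Substituting these expressions, clearing the denominators $\prod_{k\ne j}(X_j-X_k)$, and replacing every $Y_j^2$ by $M(X_j)$ reduces the assertion to an identity between symmetric polynomials in $X_1,\dots,X_g$ over $\mathbb{Z}[\lambda_2,\dots,\lambda_{4g+2}]$, which I would verify by comparing coefficients with the help of the interpolation identity $\sum_j X_j^{m}/\prod_{k\ne j}(X_j-X_k)=h_{m-g+1}(X_1,\dots,X_g)$ (the complete homogeneous symmetric functions; this sum vanishes for $0\le m\le g-2$ and equals $1$ for $m=g-1$) together with the recursions it satisfies. I expect the main obstacle to be exactly the low-weight part of this calculation: the term $4\lambda_{4g-2}\wp_{(2g-1)\cdot2}$ and the constants $-8\lambda_{4g+2}\lambda_{4g-6}+2\lambda_{4g}\lambda_{4g-4}$ come from the bottom coefficients of $M(X_j)/X_j^2$ and from the constant of integration isolated above, so they appear correctly only if the bookkeeping is carried through to all orders rather than just to leading order — which is presumably why this formula was only stated, not proved, by Baker. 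As a consistency check one may observe that the birational change $(X,Y)\mapsto(1/X,\,Y/X^{g+1})$ identifies $C$ with a curve of the type studied in Sections 4--5 and carries $\wp_{2g-1,2g-1}$ into $\mathcal{P}_{2,2}$, so the present identity is the image of the analogous relation there; alternatively, one could try to obtain it purely algebraically by eliminating among the relations in Theorem \ref{2024.11.20.1}(ii)--(iv) and Theorem \ref{2024.10.22.1}, but this seems to require an induction on the indices and to be less transparent than the computation on the curve.
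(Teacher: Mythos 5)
Your strategy---descending to $\operatorname{Sym}^g(C)$ via the Abel--Jacobi map, inverting the Vandermonde system to get $\partial X_j/\partial u_{2l-1}=-2Y_j a^{(j)}_{g-l}/\prod_{k\ne j}(X_j-X_k)$, and reducing the theorem to an identity of symmetric polynomials---is coherent in principle, and the Dubrovin-type equations you write down are correct. But as it stands the proposal has a genuine gap: every step that would actually produce the right-hand side is deferred. You never derive the closed expressions in $X_i,Y_i$ for $\wp_{(2g-1)\cdot2}$, $\wp_{2g-1,2g-3}$, $\wp_{2g-1,2g-5}$, $\wp_{(2g-3)\cdot2}$, let alone $\wp_{(2g-1)\cdot4}$; the ``additive term'' you isolate when integrating $\partial_{u_1}\wp_{2g-1,2j-1}=\partial_{u_{2g-1}}\wp_{1,2j-1}$ is a priori only a meromorphic function annihilated by $\partial_{u_1}$, so you need a density or expansion argument even to know it is a constant, and pinning it down ``at $u=0$'' requires the $\lambda$-corrected Schur expansion of $\sigma$, which you invoke but do not carry out; and the final ``comparison of coefficients'' that is supposed to yield exactly $6\wp_{(2g-1)\cdot2}^2+4\lambda_{4g}\wp_{2g-1,2g-3}+4\lambda_{4g+2}(4\wp_{2g-1,2g-5}-3\wp_{(2g-3)\cdot2})+4\lambda_{4g-2}\wp_{(2g-1)\cdot2}-8\lambda_{4g+2}\lambda_{4g-6}+2\lambda_{4g}\lambda_{4g-4}$ is precisely the content of the theorem and is not performed. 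You yourself flag the low-weight terms as the place where this bookkeeping is most likely to fail, without resolving it. (Two smaller points: your interpolation identity involves the complete homogeneous symmetric functions while $h_k$ in this paper denotes the elementary ones, so the notation clashes; and the consistency check via $(X,Y)\mapsto(1/X,Y/X^{g+1})$ only transports the problem to Theorem \ref{2024.10.22.1111}, whose statement concerns $\mathcal{P}_{2\cdot3,2k}$ and must still be integrated once in $v_2$, reintroducing the same undetermined additive term.)

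For comparison, the paper takes the purely algebraic route you set aside at the end as ``less transparent'': it quotes the known cases $g\le3$, and for $g\ge4$ it sets $p_d=\wp_{1\cdot d,2g-1}$, $q_d=\wp_{1\cdot d,2g-3}$, $r_d=\wp_{1\cdot d,2g-5}$, $s_d=\wp_{1\cdot d,2g-7}$, expresses $\wp_{(2g-1)\cdot2}$ from Theorem \ref{2024.11.20.1}(ii) with $i=j=g-1$, and then repeatedly applies $\partial_{u_{2g-1}}$, using (iii) and (iv) to convert $u_{2g-1}$-derivatives into $u_1$-derivatives, until $\wp_{(2g-1)\cdot4}$ is a polynomial $J(\{\wp_{i,j}\})$. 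The decisive and nonobvious ingredient---which no straightforward elimination produces and which has no counterpart in your sketch---is the addition of the three identically vanishing combinations $Z_1,Z_2,Z_3$ built from substituting (ii) into products of pairs in two different orders; only $J(\{\wp_{i,j}\})+(Z_1+Z_2+Z_3)/2$ collapses to the stated right-hand side. Whichever route you choose, the analogous explicit computation is the heart of the proof and must be supplied before your argument can be accepted.
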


\begin{proof}
For $g=1$, this theorem is well known. 
For $g=2,3$, this theorem was given in \cite[pp.~58, 88]{BEL-97-2}. 
For $g\ge4$, we prove this theorem from Theorem \ref{2024.11.20.1} directly. 
For a positive integer $d$, we set 
\[p_d=\wp_{1\cdot d,2g-1},\qquad q_d=\wp_{1\cdot d,2g-3},\qquad r_d=\wp_{1\cdot d,2g-5},\qquad s_d=\wp_{1\cdot d,2g-7}.\]
For a meromorphic function $H(u_1,u_3,\dots,u_{2g-1})$ on $\mathbb{C}^g$, we use the notation $\dot{H}=\partial_{u_{2g-1}}H$. 
From Theorem \ref{2024.11.20.1} (ii) with $i=j=g-1$, we have 
\begin{equation}
\wp_{(2g-1)\cdot2}=\frac{1}{4}q_2^2-q_1^2(\wp_{1\cdot2}+\lambda_2)+q_1(\wp_{3,2g-3}-2p_1)-\lambda_{4g-2}.\label{2024.11.20.222}
\end{equation}
From Theorem \ref{2024.11.20.1} (i) with $i=g-1$, we have 
\begin{equation}
\wp_{3,2g-3}=q_1(3\wp_{1\cdot2}+2\lambda_2)+3p_1-\frac{1}{2}q_3.\label{2024.11.21.1}
\end{equation}
By substituting (\ref{2024.11.21.1}) into (\ref{2024.11.20.222}), we obtain
\[\wp_{(2g-1)\cdot2}=\frac{1}{4}q_2^2+q_1^2(2\wp_{1\cdot2}+\lambda_2)+q_1\left(p_1-\frac{1}{2}q_3\right)-\lambda_{4g-2}.\]
From Theorem \ref{2024.11.20.1} (iii) and (iv), we have 
\begin{gather*}
\dot{p}_1=q_2p_1-p_2q_1,\qquad \dot{q}_1=r_2p_1-p_2r_1,\qquad \dot{r}_1=s_2p_1-p_2s_1,\\
\dot{p}_2=q_3p_1-p_3q_1,\qquad \dot{q}_2=r_3p_1-p_3r_1,\qquad \dot{r}_2=s_3p_1-p_3s_1.
\end{gather*}
We have 
\[\dot{q}_3=\partial_{u_1}\dot{q}_2=\partial_{u_1}(r_3p_1-p_3r_1)=r_4p_1+r_3p_2-p_4r_1-p_3r_2.\]
From the equalities above, we have 
\begin{align*}
\wp_{(2g-1)\cdot3}&=\frac{1}{2}q_2(r_3p_1-p_3r_1)+(r_2p_1-p_2r_1)\left(4q_1\wp_{1\cdot2}+2\lambda_2q_1+p_1-\frac{1}{2}q_3\right)\\
&\quad -\frac{1}{2}q_1(r_4p_1+r_3p_2-p_4r_1-p_3r_2)+q_1(p_2q_1+p_1q_2). 
\end{align*}
From Theorem \ref{2024.11.20.1} (i) and (iii), we have 
\begin{align*}
\dot{p}_2&=q_3p_1-p_3q_1=6p_1^2-2p_1\wp_{3,2g-3}+2q_1\wp_{3,2g-1},\\
\dot{q}_2&=r_3p_1-p_3r_1=6p_1q_1-2p_1\wp_{3,2g-5}+2r_1\wp_{3,2g-1},\\
\dot{r}_2&=s_3p_1-p_3s_1=6p_1r_1-2p_1\wp_{3,2g-7}+2s_1\wp_{3,2g-1}+2\delta_{g,4}\lambda_4p_1,\\
\dot{p}_3&=\partial_{u_1}\dot{p}_2=10p_1p_2+2\wp_{1\cdot2}(p_1q_2-p_2q_1)+2q_2\wp_{3,2g-1}-2p_2\wp_{3,2g-3},\\
\dot{q}_3&=\partial_{u_1}\dot{q}_2=4p_1q_2+6p_2q_1+2\wp_{1\cdot2}(p_1r_2-p_2r_1)+2r_2\wp_{3,2g-1}-2p_2\wp_{3,2g-5},\\
\dot{r}_3&=\partial_{u_1}\dot{r}_2=4p_1r_2+6p_2r_1+2\wp_{1\cdot2}(p_1s_2-p_2s_1)+2s_2\wp_{3,2g-1}-2p_2\wp_{3,2g-7}+2\delta_{g,4}\lambda_4p_2,\\
p_4&=4p_1\wp_{1\cdot3}+8p_2\wp_{1\cdot2}+4\lambda_2p_2,\\
r_4&=4r_1\wp_{1\cdot3}+8r_2\wp_{1\cdot2}+4\lambda_2r_2+4q_2,\\
\dot{p}_4&=\partial_{u_1}\dot{p}_3=8p_2^2+10p_1p_3+4\wp_{1\cdot3}(p_1q_2-p_2q_1)+2\wp_{1\cdot2}(p_1q_3-p_3q_1)+2q_3\wp_{3,2g-1}\\
&\hspace{10ex} -2p_3\wp_{3,2g-3},\\
\dot{r}_4&=\partial_{u_1}\dot{r}_3=8p_2r_2+4p_1r_3+6p_3r_1+4\wp_{1\cdot3}(p_1s_2-p_2s_1)+2\wp_{1\cdot2}(p_1s_3-p_3s_1)\\
&\hspace{10ex} +2s_3\wp_{3,2g-1}-2p_3\wp_{3,2g-7}+2\delta_{g,4}\lambda_4p_3.
\end{align*}
From the equalities above, Theorem \ref{2024.11.20.1} (i), and (ii), we have $\wp_{(2g-1)\cdot4}=J\bigl(\{\wp_{i,j}\}\bigr)$, where $J\bigl(\{\wp_{i,j}\}\bigr)$ is a polynomial in $\{\wp_{i,j}\}$. 
Let 
\begin{align*}
Z_1&=(\wp_{1\cdot2,2g-5}\wp_{1\cdot2,2g-3})(\wp_{1\cdot2,2g-3}\wp_{1\cdot2,2g-1})-(\wp_{1\cdot2,2g-5}\wp_{1\cdot2,2g-1})(\wp_{1\cdot2,2g-3}\wp_{1\cdot2,2g-3}),\\
Z_2&=(\wp_{1\cdot2,2g-5}\wp_{1\cdot2,2g-1})(\wp_{1\cdot2,2g-5}\wp_{1\cdot2,2g-1})-(\wp_{1\cdot2,2g-5}\wp_{1\cdot2,2g-5})(\wp_{1\cdot2,2g-1}\wp_{1\cdot2,2g-1}),\\
Z_3&=(\wp_{1\cdot2,2g-7}\wp_{1\cdot2,2g-3})(\wp_{1\cdot2,2g-1}\wp_{1\cdot2,2g-1})-(\wp_{1\cdot2,2g-7}\wp_{1\cdot2,2g-1})(\wp_{1\cdot2,2g-3}\wp_{1\cdot2,2g-1}),
\end{align*}
where the parentheses mean that the substitutions from Theorem \ref{2024.11.20.1} (ii) are made before expanding (cf. \cite[Corollary 3.2.2]{BEL-97-1}). 
We have $Z_i=0$ for $i=1,2,3$. 
From the direct calculations, we can check that $J\bigl(\{\wp_{i,j}\}\bigr)+(Z_1+Z_2+Z_3)/2$ is equal to the right-hand side of the equality in Theorem \ref{2024.10.30.1}. 
\end{proof}

We consider the case $g\ge3$, assume $\lambda_{4g+2}\neq0$, and for $g\ge4$ take constants $b_i\in\mathbb{C}$ with $1\le i\le g-3$. 
Let 
\[\varphi(t_1,t_2,t_3)=-2\wp_{(2g-1)\cdot2}\left(b_1,\dots,b_{g-3}, \mathfrak{c}t_3, \mathfrak{d}t_2, t_1+\mathfrak{e}t_2\right)-\mathfrak{f},\]
where 
\[\mathfrak{c}=-16\lambda_{4g+2},\quad \mathfrak{d}=2\sqrt{-3\lambda_{4g+2}},\quad \mathfrak{e}=\frac{\lambda_{4g}}{\sqrt{-3\lambda_{4g+2}}},\quad \mathfrak{f}=\frac{2}{3}\lambda_{4g-2}+\frac{\lambda_{4g}^2}{18\lambda_{4g+2}}.\]

\begin{cor}\label{2024.12.6.1}
If $\lambda_{4g+2}\neq0$, the function $\varphi$ satisfies the KP-$\mathrm{I}$ equation
\[\partial_{t_1}(\partial_{t_3}\varphi+6\varphi\partial_{t_1}\varphi+\partial_{t_1}^3\varphi)=\partial_{t_2}^2\varphi.\]
\end{cor}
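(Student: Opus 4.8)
The plan is to imitate the derivation of Proposition~\ref{2025.2.6.19561456}, with the relation of Theorem~\ref{2024.10.30.1} (which concerns the index $2g-1$) playing the role that Theorem~\ref{2024.11.20.1}(i) with $i=1$ played there; everything else will be the chain rule. Write $w=\wp_{(2g-1)\cdot2}$, regarded as a function of $u={}^t(u_1,u_3,\dots,u_{2g-1})$, and recall $\partial_{u_l}\wp_{i_1,\dots,i_k}=\wp_{i_1,\dots,i_k,l}$. Under the substitution $u_{2g-5}=\mathfrak{c}t_3$, $u_{2g-3}=\mathfrak{d}t_2$, $u_{2g-1}=t_1+\mathfrak{e}t_2$, with $u_1,\dots,u_{2g-7}$ held at $b_1,\dots,b_{g-3}$, the chain rule gives $\partial_{t_1}=\partial_{u_{2g-1}}$, $\partial_{t_2}=\mathfrak{e}\partial_{u_{2g-1}}+\mathfrak{d}\partial_{u_{2g-3}}$ and $\partial_{t_3}=\mathfrak{c}\partial_{u_{2g-5}}$. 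Hence, writing $\varphi=-2w-\mathfrak{f}$,
\[
\partial_{t_1}\varphi=-2\wp_{(2g-1)\cdot3},\qquad \partial_{t_1}^2\varphi=-2\wp_{(2g-1)\cdot4},\qquad \partial_{t_1}\partial_{t_3}\varphi=-2\mathfrak{c}\,\wp_{(2g-1)\cdot3,2g-5},
\]
\[
\partial_{t_2}^2\varphi=-2\bigl(\mathfrak{e}^2\wp_{(2g-1)\cdot4}+2\mathfrak{e}\mathfrak{d}\,\wp_{(2g-1)\cdot3,2g-3}+\mathfrak{d}^2\wp_{(2g-1)\cdot2,(2g-3)\cdot2}\bigr),\qquad \partial_{t_1}^4\varphi=-2\wp_{(2g-1)\cdot6}.
\]

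The only input beyond the chain rule is a formula for $\wp_{(2g-1)\cdot6}$, which I would obtain by applying $\partial_{u_{2g-1}}^2$ to the identity of Theorem~\ref{2024.10.30.1}: the constants $-8\lambda_{4g+2}\lambda_{4g-6}+2\lambda_{4g}\lambda_{4g-4}$ are annihilated, and, using $\partial_{u_{2g-1}}^2\wp_{(2g-1)\cdot2}^2=2\wp_{(2g-1)\cdot3}^2+2\wp_{(2g-1)\cdot2}\wp_{(2g-1)\cdot4}$, one gets
\begin{align*}
\wp_{(2g-1)\cdot6}&=12\wp_{(2g-1)\cdot3}^2+12\wp_{(2g-1)\cdot2}\wp_{(2g-1)\cdot4}+4\lambda_{4g-2}\wp_{(2g-1)\cdot4}\\
&\quad+4\lambda_{4g}\wp_{(2g-1)\cdot3,2g-3}+4\lambda_{4g+2}\bigl(4\wp_{(2g-1)\cdot3,2g-5}-3\wp_{(2g-1)\cdot2,(2g-3)\cdot2}\bigr).
\end{align*}
The structural point that makes the proof work is that the three ``mixed'' functions occurring here --- $\wp_{(2g-1)\cdot3,2g-5}$, $\wp_{(2g-1)\cdot3,2g-3}$ and $\wp_{(2g-1)\cdot2,(2g-3)\cdot2}$ --- are exactly the ones produced above by $\partial_{t_1}\partial_{t_3}\varphi$ and $\partial_{t_2}^2\varphi$; so no further relations among the $\wp$'s are needed, and in particular Theorem~\ref{2024.11.20.1} is not invoked here.

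Finally I would substitute everything into $\partial_{t_1}(\partial_{t_3}\varphi+6\varphi\partial_{t_1}\varphi+\partial_{t_1}^3\varphi)-\partial_{t_2}^2\varphi$ and check that it vanishes. The term $6(\partial_{t_1}\varphi)^2=24\wp_{(2g-1)\cdot3}^2$ and the part $24\wp_{(2g-1)\cdot2}\wp_{(2g-1)\cdot4}$ of $6\varphi\partial_{t_1}^2\varphi$ cancel the corresponding summands of $-2\wp_{(2g-1)\cdot6}$; the choice $\mathfrak{c}=-16\lambda_{4g+2}$ kills the $\wp_{(2g-1)\cdot3,2g-5}$ contribution; and what is left is the linear identity
\begin{align*}
&(6\mathfrak{f}-4\lambda_{4g-2})\wp_{(2g-1)\cdot4}-4\lambda_{4g}\wp_{(2g-1)\cdot3,2g-3}+12\lambda_{4g+2}\wp_{(2g-1)\cdot2,(2g-3)\cdot2}\\
&\qquad=-\mathfrak{e}^2\wp_{(2g-1)\cdot4}-2\mathfrak{e}\mathfrak{d}\,\wp_{(2g-1)\cdot3,2g-3}-\mathfrak{d}^2\wp_{(2g-1)\cdot2,(2g-3)\cdot2},
\end{align*}
which holds term by term precisely because $\mathfrak{d}^2=-12\lambda_{4g+2}$, $\mathfrak{e}\mathfrak{d}=2\lambda_{4g}$ and $6\mathfrak{f}=4\lambda_{4g-2}-\mathfrak{e}^2$ --- that is, because $\mathfrak{c},\mathfrak{d},\mathfrak{e},\mathfrak{f}$ take the stated values (the hypothesis $\lambda_{4g+2}\neq0$ is what makes the square roots and the quotient in $\mathfrak{e},\mathfrak{f}$ meaningful). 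I expect no genuine obstacle here: the whole argument is bookkeeping resting on the single algebraic relation of Theorem~\ref{2024.10.30.1}, which is exactly why that theorem was given an independent proof above.
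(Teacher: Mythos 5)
Your proposal is correct and follows essentially the same route as the paper's proof: differentiate the identity of Theorem \ref{2024.10.30.1} twice with respect to $u_{2g-1}$ to get the formula for $\wp_{(2g-1)\cdot6}$, express the $t$-derivatives of $\varphi$ via the chain rule, and verify the cancellation using the stated values of $\mathfrak{c},\mathfrak{d},\mathfrak{e},\mathfrak{f}$. The intermediate expressions you write down (for $\partial_{t_1}\partial_{t_3}\varphi$, $(\partial_{t_1}\varphi)^2$, $\varphi\partial_{t_1}^2\varphi$, $\partial_{t_1}^4\varphi$, $\partial_{t_2}^2\varphi$) coincide with those in the paper, and your final term-by-term check is accurate.
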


\begin{proof}
In \cite[p.~170]{BEL-2000}, it was pointed out that if $g\ge3$, under certain restrictions on the coefficients of the defining equation of the curve, $\wp_{(2g-1)\cdot2}$ is a solution to the KP equation. 
We give a proof of this corollary. 
By differentiating the equality in Theorem \ref{2024.10.30.1} with respect to $u_{2g-1}$ twice, we obtain
\begin{align*}
\wp_{(2g-1)\cdot6}&=12\wp_{(2g-1)\cdot3}^2+12\wp_{(2g-1)\cdot2}\wp_{(2g-1)\cdot4}+4\lambda_{4g}\wp_{(2g-1)\cdot3,2g-3}\\
&\quad +16\lambda_{4g+2}\wp_{(2g-1)\cdot3,2g-5}-12\lambda_{4g+2}\wp_{(2g-1)\cdot2, (2g-3)\cdot2}+4\lambda_{4g-2}\wp_{(2g-1)\cdot4}. 
\end{align*} 
We have 
\begin{gather*}
\partial_{t_1}\partial_{t_3}\varphi=-2\mathfrak{c}\wp_{(2g-1)\cdot3,2g-5},\qquad (\partial_{t_1}\varphi)^2=4\wp_{(2g-1)\cdot3}^2,\\
\varphi\partial_{t_1}^2\varphi=2(2\wp_{(2g-1)\cdot2}+\mathfrak{f})\wp_{(2g-1)\cdot4},\qquad \partial_{t_1}^4\varphi=-2\wp_{(2g-1)\cdot6},\\
\partial_{t_2}^2\varphi=-2\mathfrak{e}^2\wp_{(2g-1)\cdot4}-4\mathfrak{d}\mathfrak{e}\wp_{(2g-1)\cdot3, 2g-3}-2\mathfrak{d}^2\wp_{(2g-1)\cdot2, (2g-3)\cdot2}.
\end{gather*}
From the equalities above, we obtain the statement of the corollary. 
\end{proof}

\begin{rem}\label{2025.1.30.1}
In general, let $\Phi(t_1,t_2,t_3)$ be a solution to the KP-I equation
\[\partial_{t_1}(\partial_{t_3}\Phi+6\Phi\partial_{t_1}\Phi+\partial_{t_1}^3\Phi)=\partial_{t_2}^2\Phi.\]
Then the function $\widetilde{\Phi}(t_1,t_2,t_3)=\Phi(t_1,\sqrt{-1}t_2,t_3)$ is a solution to the KP-II equation 
\begin{equation}
\partial_{t_1}(\partial_{t_3}\widetilde{\Phi}+6\widetilde{\Phi}\partial_{t_1}\widetilde{\Phi}+\partial_{t_1}^3\widetilde{\Phi})=-\partial_{t_2}^2\widetilde{\Phi}.\label{2025.2.14.00876530298sdf}
\end{equation}
Let $\xi_8$ be a primitive 8th root of unity. 
The function $\overline{\Phi}(t_1,t_2,t_3)=\xi_8^2\Phi(\xi_8t_1,t_2,\xi_8^3t_3)$ is also a solution to the KP-II equation in the form (\ref{2025.2.14.00876530298sdf}).  
\end{rem}

\begin{rem}\label{2025.1.23.1234}
By the shift $X'=X+X_0$ for some $X_0\in\mathbb{C}$, it is always possible to achieve the condition $\lambda_{4g+2}\neq0$.  
Thus, we have obtained a solution to the KP equation for any hyperelliptic curve up to normalization. 
\end{rem}


\section{Hyperelliptic functions associated with a hyperelliptic curve with two infinite points}\label{2024.10.25.1}

In this section, we define basic meromorphic functions on the Jacobian variety of a hyperelliptic curve with two infinite points in accordance with \cite[p.~145]{B}. 

For a positive integer $g$, let us consider the polynomial in $x$ 
\[
N(x)=\nu_{0}x^{2g+2}+\nu_{2}x^{2g+1}+\cdots+\nu_{4g+2}x+\nu_{4g+4}, \quad \nu_i\in\mathbb{C},\quad\nu_{0}\neq0.
\]
We assume that $N(x)$ has no multiple roots and consider the non-singular hyperelliptic curve of genus $g$
\[V=\Bigl\{(x,y)\in\mathbb{C}^2 \Bigm| y^2=N(x)\Bigr\}.\]
We assign weights for $x$, $y$, and $\nu_i$ as $\wt(x)=2$, $\wt(y)=2g+2$, and $\wt(\nu_i)=i$.  
The equation $y^2=N(x)$ has the homogeneous weight $4g+4$ with respect to the coefficients $\nu_i$ and the variables $x,y$.
A basis of the vector space consisting of holomorphic 1-forms on $V$ is given by 
\begin{equation}
\mu_i=\frac{x^{i-1}}{2y}dx, \qquad 1\le i\le g. \label{2025.6.1.457637532final}
\end{equation}
We set $\mu={}^t(\mu_1,\dots,\mu_g)$. 
Let $\{\mathfrak{a}_i,\mathfrak{b}_i\}_{i=1}^g$ be a canonical basis in the one-dimensional homology group of the curve $V$. 
We define the period matrices by 
\[
2\mu'=\left(\int_{\mathfrak{a}_j}\mu_i\right),\qquad
2\mu''=\left(\int_{\mathfrak{b}_j}\mu_i\right). 
\]
We define the lattice of periods $L=\bigl\{2\mu'm_1+2\mu''m_2\mid m_1,m_2\in\mathbb{Z}^g\bigr\}$ and consider the Jacobian variety $\operatorname{Jac}(V)=\mathbb{C}^g/L$. 
We take $a\in\mathbb{C}$ such that $N(a)=0$. 
Let $\operatorname{Sym}^g(V)$ be the $g$-th symmetric product of $V$.   
Let $\mathcal{F}\bigl(\operatorname{Sym}^g(V)\bigr)$ and $\mathcal{F}\bigl(\operatorname{Jac}(V)\bigr)$ be the fields of meromorphic functions on $\operatorname{Sym}^g(V)$ and $\operatorname{Jac}(V)$, respectively. 
Let us consider the Abel-Jacobi map
\[I: \quad\operatorname{Sym}^g(V)\to\operatorname{Jac}(V),\qquad \sum_{i=1}^gQ_i\mapsto\sum_{i=1}^g\int_{(a,0)}^{Q_i}\mu.\]
The map $I$ induces the isomorphism of fields 
\[I^* :\quad \mathcal{F}\bigl(\operatorname{Jac}(V)\bigr)\to\mathcal{F}\bigl(\operatorname{Sym}^g(V)\bigr),\qquad \phi\mapsto \phi\circ I.\]
For $(x_i,y_i)\in V$ with $1\le i\le g$, let 
\[R(x)=(x-a)(x-x_1)\cdots(x-x_g),\qquad R'(x)=\frac{d}{dx}R(x).\]
For variables $e_1,e_2$, we set 
\begin{gather*}
\nabla=\sum_{i=1}^g\frac{y_i}{(e_1-x_i)(e_2-x_i)R'(x_i)},\quad f(e_1,e_2)=\sum_{i=0}^{g+1}e_1^ie_2^i\bigl\{2\nu_{4g+4-4i}+\nu_{4g+2-4i}(e_1+e_2)\bigr\},
\end{gather*}
where we set $\nu_{-2}=0$. 

\begin{lem}[{\cite[p.~146]{B}}]\label{2024.10.24.1}
We have 
\[f(e_1,e_1)=2N(e_1),\qquad \left.\frac{\partial f}{\partial e_2}\right|_{e_2=e_1}=\left.\frac{dN}{dx}\right|_{x=e_1}.\]
\end{lem}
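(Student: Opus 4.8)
The plan is to verify both identities by a direct computation using the explicit definitions of $f(e_1,e_2)$ and $N(x)$, treating the claim as a formal polynomial identity in the variables $e_1$ and the coefficients $\nu_i$. First I would write $f(e_1,e_2)=\sum_{i=0}^{g+1}e_1^ie_2^i\{2\nu_{4g+4-4i}+\nu_{4g+2-4i}(e_1+e_2)\}$ and set $e_2=e_1$. The $i$-th term becomes $e_1^{2i}\{2\nu_{4g+4-4i}+2\nu_{4g+2-4i}e_1\}=2\nu_{4g+4-4i}e_1^{2i}+2\nu_{4g+2-4i}e_1^{2i+1}$. Summing over $i=0,\dots,g+1$ produces, after reindexing, $2(\nu_0 e_1^{2g+2}+\nu_2 e_1^{2g+1}+\cdots+\nu_{4g+2}e_1+\nu_{4g+4})$, which is exactly $2N(e_1)$. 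Here one has to check the bookkeeping: the term $i=g+1$ contributes $2\nu_0 e_1^{2g+2}+2\nu_{-2}e_1^{2g+3}$, and since $\nu_{-2}=0$ by convention the spurious top-degree term vanishes, so the range of exponents is precisely $\{0,1,\dots,2g+2\}$ as required.

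For the second identity I would compute $\partial f/\partial e_2$ directly from the sum. Differentiating the $i$-th summand $e_1^ie_2^i\{2\nu_{4g+4-4i}+\nu_{4g+2-4i}(e_1+e_2)\}$ with respect to $e_2$ gives $ie_1^ie_2^{i-1}\{2\nu_{4g+4-4i}+\nu_{4g+2-4i}(e_1+e_2)\}+e_1^ie_2^i\nu_{4g+2-4i}$. Setting $e_2=e_1$, the $i$-th term becomes $ie_1^{2i-1}\{2\nu_{4g+4-4i}+2\nu_{4g+2-4i}e_1\}+\nu_{4g+2-4i}e_1^{2i}$, i.e. $2i\nu_{4g+4-4i}e_1^{2i-1}+(2i+1)\nu_{4g+2-4i}e_1^{2i}$. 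Summing over $i$ and again reindexing so the exponent of $e_1$ runs over $0,\dots,2g+1$, each monomial $e_1^{k}$ picks up coefficient $(k+1)$ times the corresponding $\nu$; comparing with $dN/dx=\sum (2g+2-j)\nu_{j}\cdot(\text{power})$ written out termwise, the coefficient of $e_1^{k}$ in $dN/dx$ is likewise $(k+1)\nu_{4g+2-2k}$ (the exponent $k$ in $N$ corresponds to $\nu_{4g+4-2(k+1)}=\nu_{4g+2-2k}$, with multiplicity $k+1$). Hence the two polynomials agree coefficient by coefficient.

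The only real obstacle is the index arithmetic — keeping straight which $\nu_i$ sits in front of which power of $e_1$, and confirming that the $e_2$-derivative of the $i=g+1$ term (whose coefficient involves $\nu_{-2}=0$) does not contaminate the identity. I would handle this by introducing a clean substitution $k=2i$ (respectively $k=2i-1$) to convert the sum over $i$ into a sum over the exponent of $e_1$, check the endpoints $i=0$ and $i=g+1$ separately, and then read off equality of coefficients. No deeper structure of the curve or of the Abel--Jacobi map is needed; the lemma is purely an algebraic identity about the symmetric polynomial $f$, and the proof is a short direct verification.
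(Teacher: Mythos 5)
Your verification is correct: setting $e_2=e_1$ in the defining sum for $f$ and matching the coefficient of $e_1^k$ against $\nu_{4g+4-2k}$ (and, for the derivative, against $(k+1)\nu_{4g+2-2k}$) is exactly the right bookkeeping, and your handling of the endpoint terms via the convention $\nu_{-2}=0$ is the one point that needs care. Note that the paper itself gives no proof of this lemma --- it is quoted from Baker with only a page citation --- so there is nothing to diverge from; your direct coefficient-by-coefficient check is the natural (and essentially the only) argument. The lone blemish is the loosely written intermediate formula $dN/dx=\sum(2g+2-j)\nu_j\cdot(\text{power})$, where the multiplier should be $2g+2-j/2$ under your indexing of $\nu_j$; this does not affect the final, correct identification of the coefficient of $e_1^k$.
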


\begin{lem}[{\cite[p.~315]{B2}}]\label{2024.11.27.1}
For a symmetric polynomial $\widehat{f}(e_1,e_2)\in\mathbb{C}[e_1,e_2]$, we assume that the degree of $\widehat{f}(e_1,e_2)$ is $g+1$ in each variable and 
\[\widehat{f}(e_1,e_1)=2N(e_1),\qquad \left.\frac{\partial \widehat{f}}{\partial e_2}\right|_{e_2=e_1}=\left.\frac{dN}{dx}\right|_{x=e_1}.\]
Then there exist complex numbers $\left\{\mathfrak{m}_{i,j}\right\}_{i,j=1}^g$ such that $\mathfrak{m}_{i,j}=\mathfrak{m}_{j,i}$ and
\[\widehat{f}(e_1,e_2)=f(e_1,e_2)+(e_1-e_2)^2\sum_{i,j=1}^g\mathfrak{m}_{i,j}e_1^{i-1}e_2^{j-1}.\]
\end{lem}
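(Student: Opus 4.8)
The plan is to reduce the claim to the elementary fact that a bivariate polynomial vanishing to second order along the diagonal $e_1=e_2$ is divisible by $(e_1-e_2)^2$. Set $D(e_1,e_2)=\widehat{f}(e_1,e_2)-f(e_1,e_2)$. Since $\widehat{f}$ is symmetric by hypothesis and each summand $e_1^ie_2^i\{2\nu_{4g+4-4i}+\nu_{4g+2-4i}(e_1+e_2)\}$ of $f$ is invariant under $e_1\leftrightarrow e_2$, the polynomial $D$ is symmetric. Its degree in each variable is at most $g+1$: this is clear for $\widehat{f}$, while for $f$ the only terms that could exceed degree $g+1$ come from $i=g+1$ in the $\nu_{4g+2-4i}(e_1+e_2)$ part, and these vanish because $\nu_{-2}=0$. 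Finally, Lemma \ref{2024.10.24.1} together with the two hypotheses on $\widehat{f}$ gives $D(e_1,e_1)=2N(e_1)-2N(e_1)=0$ and $(\partial D/\partial e_2)|_{e_2=e_1}=(dN/dx)|_{x=e_1}-(dN/dx)|_{x=e_1}=0$.

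First I would establish $(e_1-e_2)^2\mid D$ in $\mathbb{C}[e_1,e_2]$. Because $\mathbb{C}[e_1,e_2]$ is a UFD and $e_1-e_2$ is irreducible and generates the ideal of the diagonal, $D(e_1,e_1)=0$ forces $D=(e_1-e_2)\,g$ for some $g\in\mathbb{C}[e_1,e_2]$. Differentiating, $\partial D/\partial e_2=-g+(e_1-e_2)\,\partial g/\partial e_2$, so setting $e_2=e_1$ yields $g(e_1,e_1)=-(\partial D/\partial e_2)|_{e_2=e_1}=0$; hence $e_1-e_2$ divides $g$ as well, and $D=(e_1-e_2)^2h$ for some $h\in\mathbb{C}[e_1,e_2]$. (Equivalently, pass to the coordinates $s=e_1+e_2$, $p=e_1-e_2$ and observe that the expansion of $D$ in powers of $p$ starts at order $2$.)

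Next I would identify $h$ with the required double sum. Symmetry of $D$ and of $(e_1-e_2)^2$ gives $h(e_2,e_1)=h(e_1,e_2)$. For the degree, work over the field $\mathbb{C}(e_2)$: in the factorization $D=(e_1-e_2)^2h$ the $e_1$-degrees add, so $\deg_{e_1}h=\deg_{e_1}D-2\le g-1$ (and $h=0$ if $D=0$); by symmetry $\deg_{e_2}h\le g-1$ as well. Hence $h(e_1,e_2)=\sum_{i,j=1}^{g}\mathfrak{m}_{i,j}e_1^{i-1}e_2^{j-1}$ for suitable constants $\mathfrak{m}_{i,j}$, and comparing coefficients of $e_1^{i-1}e_2^{j-1}$ with those of $e_1^{j-1}e_2^{i-1}$ in $h(e_1,e_2)=h(e_2,e_1)$ gives $\mathfrak{m}_{i,j}=\mathfrak{m}_{j,i}$. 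Since $\widehat{f}=f+D=f+(e_1-e_2)^2h$, this is exactly the asserted formula.

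I do not anticipate a genuine obstacle; the argument is routine once it is organized this way. The only points demanding care are the degree bookkeeping — one must confirm that $\widehat{f}-f$ has degree at most $g+1$ in each variable and that dividing by $(e_1-e_2)^2$ lowers the degree in each variable by exactly $2$, so that the indices of $\mathfrak{m}_{i,j}$ land precisely in the range $1,\dots,g$ — and the remark, immediate from the explicit formula for $f$, that $f$ is symmetric. Both are quick checks.
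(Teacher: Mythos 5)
Your argument is correct: the reduction to showing that $D=\widehat{f}-f$ is a symmetric polynomial of degree at most $g+1$ in each variable vanishing to second order along the diagonal, followed by the standard divisibility and degree count, establishes the lemma, and your check that the $i=g+1$ term of $f$ contributes no degree-$(g+2)$ monomial because $\nu_{-2}=0$ is the one point that genuinely needed verifying. The paper itself offers no proof here — it simply cites Baker \cite[p.~315]{B2} — so your write-up supplies the expected standard argument and there is nothing in the paper to contrast it with.
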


We set 
\[F(e_1,e_2)=f(e_1,e_2)R(e_1)R(e_2)+(e_1-e_2)^2R(e_1)^2R(e_2)^2\nabla^2-N(e_1)R(e_2)^2-N(e_2)R(e_1)^2.\]
Note that $F(e_1,e_2)$ is a symmetric polynomial in $e_1$ and $e_2$. 

\begin{lem}
The polynomial $F(e_1,e_2)$ can be divided by $R(e_1)R(e_2)$. 
\end{lem}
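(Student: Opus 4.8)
The plan is to reduce the divisibility to each factor separately, using that $R(e_1)$ and $R(e_2)$ are coprime in $\mathbb{C}[e_1,e_2]$, so that $R(e_1)R(e_2)\mid F$ if and only if $R(e_1)\mid F$ and $R(e_2)\mid F$; and since $F$ is symmetric, it is enough to prove $R(e_1)\mid F$. The one genuine technical device I would set up first is a polynomial replacement for the $\nabla^2$ term: put
\[
P(e_1,e_2)=R(e_1)R(e_2)\nabla=\sum_{i=1}^{g}\frac{R(e_1)}{e_1-x_i}\cdot\frac{R(e_2)}{e_2-x_i}\cdot\frac{y_i}{R'(x_i)},
\]
which is an honest polynomial (of degree $g$ in each variable) because $x_i$ is a root of $R(x)$, and then $(e_1-e_2)^2R(e_1)^2R(e_2)^2\nabla^2=(e_1-e_2)^2P(e_1,e_2)^2$. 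This both re-confirms that $F\in\mathbb{C}[e_1,e_2]$ and, more importantly, gives a form in which one may legitimately specialize $e_1$ to a root of $R$ (which cannot be done directly in $\nabla$, where $e_1-x_i$ sits in a denominator).

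The main step is then to check that $F(e_1,e_2)$ vanishes whenever $e_1$ is set equal to a root of $R(x)=(x-a)(x-x_1)\cdots(x-x_g)$. A Lagrange-type evaluation gives $\bigl(R(e_1)/(e_1-x_i)\bigr)\big|_{e_1=x_j}=\delta_{i,j}R'(x_j)$ for $1\le i,j\le g$, hence $P(x_j,e_2)=R(e_2)y_j/(e_2-x_j)$; and because every summand of $P$ contains a factor $R(e_1)/(e_1-x_i)$ vanishing at $e_1=a$, we have $P(a,e_2)=0$. Substituting $e_1=x_j$ into $F=f\,R(e_1)R(e_2)+(e_1-e_2)^2P^2-N(e_1)R(e_2)^2-N(e_2)R(e_1)^2$: the first and last terms die since $R(x_j)=0$; the middle term becomes $(x_j-e_2)^2R(e_2)^2y_j^2/(e_2-x_j)^2=R(e_2)^2y_j^2=R(e_2)^2N(x_j)$ using $(x_j,y_j)\in V$; and this cancels $-N(x_j)R(e_2)^2$, so $F(x_j,e_2)=0$. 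Substituting $e_1=a$: all four terms vanish, using $R(a)=0$, $N(a)=0$, and $P(a,e_2)=0$. Therefore $R(e_1)$ divides $F$ as a polynomial in $e_1$ over $\mathbb{C}[e_2]$; by the symmetry of $F$ the same holds for $R(e_2)$, and coprimality yields $R(e_1)R(e_2)\mid F$.

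I do not expect a serious obstacle here; the only point needing a word of care is the legitimacy of the root-substitution, which is exactly why the rewriting in terms of $P$ is done first, and the implicit genericity assumption that $a,x_1,\dots,x_g$ are pairwise distinct (the situation relevant to the inversion problem of the Abel--Jacobi map), so that vanishing at all roots of $R$ already forces divisibility by $R$ without having to match derivatives at a repeated root. As a sanity check on the outcome, the quotient $F(e_1,e_2)/\bigl(R(e_1)R(e_2)\bigr)$ is then a symmetric polynomial of degree $g+1$ in each variable, which is precisely the shape of object fed into Lemma \ref{2024.11.27.1}.
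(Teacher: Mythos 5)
Your proof is correct and follows essentially the same route as the paper: show that $F$ vanishes when $e_1$ is specialized to each $x_j$ and to $a$, conclude $R(e_1)\mid F$, and pass to the other factor by symmetry. The only difference is organizational — the paper splits $\nabla^2$ into its $x_1$-diagonal part and the remaining terms before specializing, whereas you first clear denominators by introducing the polynomial $P(e_1,e_2)=R(e_1)R(e_2)\nabla$ and evaluate it at $e_1=x_j$ via the Lagrange-type identity, a slightly cleaner packaging of the same computation (and both arguments rely on the same implicit genericity that $a,x_1,\dots,x_g$ are pairwise distinct, already forced by the $R'(x_i)$ in the denominator of $\nabla$).
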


\begin{proof}
First, we will prove $F(x_1,e_2)=0$. 
We have  $f(x_1,e_2)R(x_1)R(e_2)=N(e_2)R(x_1)^2=0$ and $\nabla^2=F_1(e_1,e_2)+F_2(e_1,e_2)$, where 
\begin{align*}
F_1(e_1,e_2)&=\frac{N(x_1)}{(e_1-x_1)^2(e_2-x_1)^2R'(x_1)^2},\\
F_2(e_1,e_2)&=\sum_{i=2}^g\frac{N(x_i)}{(e_1-x_i)^2(e_2-x_i)^2R'(x_i)^2}\\
&\quad +\sum_{1\le i<j\le g}\frac{2y_iy_j}{(e_1-x_i)(e_1-x_j)(e_2-x_i)(e_2-x_j)R'(x_i)R'(x_j)}.
\end{align*}
We have $R(x_1)^2R(e_2)^2F_2(x_1,e_2)=0$ and 
\[R(e_1)^2R(e_2)^2F_1(e_1,e_2)=\frac{N(x_1)(e_1-a)^2(e_2-a)^2\prod_{i=2}^g(e_1-x_i)^2(e_2-x_i)^2}{R'(x_1)^2}.\]
Thus, we have 
\[(x_1-e_2)^2R(x_1)^2R(e_2)^2F_1(x_1,e_2)=N(x_1)(e_2-a)^2\prod_{i=1}^g(e_2-x_i)^2=N(x_1)R(e_2)^2.\]
Therefore, we have $F(x_1,e_2)=0$. Similarly, we have $F(x_i,e_2)=0$ for any $1\le i\le g$. 
From $N(a)=R(a)=0$, we can check $F(a,e_2)=0$. Therefore, the polynomial $F(e_1,e_2)$ can be divided by $R(e_1)$. 
Since $F(e_1,e_2)$ is a symmetric polynomial in $e_1$ and $e_2$, it can be divided by $R(e_2)$. Thus, we obtain the statement of the lemma.  
\end{proof}


\begin{lem}
The polynomial $F(e_1,e_2)$ can be divided by $(e_1-e_2)^2$. 
\end{lem}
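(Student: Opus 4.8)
The plan is to peel off the one summand of $F$ that already carries the factor $(e_1-e_2)^2$ and to handle the rest by a diagonal-vanishing argument. First I would note that $R(e_1)R(e_2)\nabla=\sum_{i=1}^{g}y_iR(e_1)R(e_2)/\{(e_1-x_i)(e_2-x_i)R'(x_i)\}$ is a polynomial, since $(e_1-x_i)\mid R(e_1)$ and $(e_2-x_i)\mid R(e_2)$; hence $(e_1-e_2)^2R(e_1)^2R(e_2)^2\nabla^2=(e_1-e_2)^2\bigl(R(e_1)R(e_2)\nabla\bigr)^2$ is manifestly divisible by $(e_1-e_2)^2$. Therefore it suffices to prove that $(e_1-e_2)^2$ divides the polynomial
\[
G(e_1,e_2):=f(e_1,e_2)R(e_1)R(e_2)-N(e_1)R(e_2)^2-N(e_2)R(e_1)^2,
\]
which is an honest element of $\mathbb{C}[e_1,e_2]$ and, like $F$, is symmetric in $e_1$ and $e_2$.

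Next I would use the following elementary fact: a symmetric polynomial $P\in\mathbb{C}[e_1,e_2]$ with $P(e_1,e_1)=0$ is divisible by $(e_1-e_2)$, and the quotient $P/(e_1-e_2)$ is then antisymmetric, hence vanishes on the diagonal as well, so $(e_1-e_2)^2\mid P$. It is thus enough to check $G(e_1,e_1)=0$, and indeed
\[
G(e_1,e_1)=\bigl(f(e_1,e_1)-2N(e_1)\bigr)R(e_1)^2=0
\]
by the first identity of Lemma \ref{2024.10.24.1}. Alternatively, and without invoking symmetry, one verifies the second-order vanishing directly: from
\[
\partial_{e_2}G=(\partial_{e_2}f)R(e_1)R(e_2)+f\,R(e_1)R'(e_2)-2N(e_1)R(e_2)R'(e_2)-N'(e_2)R(e_1)^2
\]
the terms proportional to $R(e_1)R'(e_1)$ cancel on the diagonal because $f(e_1,e_1)=2N(e_1)$, leaving $\bigl((\partial_{e_2}f)|_{e_2=e_1}-N'(e_1)\bigr)R(e_1)^2$, which vanishes by the second identity of Lemma \ref{2024.10.24.1}. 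Either way, $(e_1-e_2)^2\mid G$, hence $(e_1-e_2)^2\mid F$.

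I do not anticipate any real obstacle: the whole content is contained in Lemma \ref{2024.10.24.1}. The only points that deserve a line of care are verifying that $G$ (equivalently, the first and third summands of $F$) is genuinely a polynomial, so that the divisibility assertion is unambiguous, and invoking the criterion for divisibility by $(e_1-e_2)^2$ — either the symmetry argument above or, equivalently, the vanishing of the first two coefficients in the Taylor expansion of $G$ in powers of $e_2-e_1$.
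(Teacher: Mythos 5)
Your proof is correct and follows essentially the same route as the paper: the paper likewise verifies $F(e_1,e_1)=0$ and $\left.\partial_{e_2}F\right|_{e_2=e_1}=0$ via Lemma \ref{2024.10.24.1}, which is exactly your ``direct'' variant, and your symmetry shortcut is only a mild refinement that dispenses with the derivative identity. One small caution: you denote the residual polynomial by $G$, which clashes with the paper's use of $G(e_1,e_2)$ for the quotient $F/\bigl\{(e_1-e_2)^2R(e_1)R(e_2)\bigr\}$ introduced immediately after this lemma, so rename it before splicing this in.
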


\begin{proof}
From Lemma \ref{2024.10.24.1} and the direct calculations, we can check 
\[F(e_1,e_1)=0,\qquad \left.\frac{\partial F}{\partial e_2}\right|_{e_2=e_1}=0.\]
Thus, we obtain the statement of the lemma.  
\end{proof}

Let $G(e_1,e_2)=F(e_1,e_2)/\bigl\{(e_1-e_2)^2R(e_1)R(e_2)\bigr\}$. 
Then $G(e_1,e_2)$ is a symmetric polynomial in $e_1$ and $e_2$ of degree at most $g-1$ in each variable. 
We assign weights for $a$, $x_i$, $y_i$, and $e_i$ as $\wt(a)=\wt(x_i)=\wt(e_i)=2$ and $\wt(y_i)=2g+2$.   
Then $G(e_1,e_2)$ has the homogeneous weight $4g$.  

\begin{definition}[{\cite[p.~145]{B}}]\label{2024.11.29.12345}

\noindent (i) For $1\le i,j\le g$, we define $P_{2g+2-2i, 2g+2-2j}\in\mathcal{F}\bigl(\operatorname{Sym}^g(V)\bigr)$ by 
\begin{equation}
\sum_{i,j=1}^gP_{2g+2-2i, 2g+2-2j}e_1^{i-1}e_2^{j-1}=G(e_1,e_2).\label{2024.10.30.1234}
\end{equation}

\noindent (ii) For $i,j=2,4,\dots,2g$, we define the meromorphic functions $\mathcal{P}_{i,j}(v)$ with $v=(v_{2g},v_{2g-2},\dots,v_2)\in\mathbb{C}^g$ on $\operatorname{Jac}(V)$ by $\mathcal{P}_{i,j}=(I^*)^{-1}(P_{i,j})$. 
\end{definition}

\begin{rem}
Since $G(e_1,e_2)$ is a symmetric polynomial in $e_1$ and $e_2$, we have $\mathcal{P}_{i,j}=\mathcal{P}_{j,i}$ for any $i,j$. 
\end{rem}

\begin{ex}
For $g=1$, we have 
\[P_{2,2}=\frac{a(\nu_2+2a\nu_0)x_1+\nu_6+2a\nu_4+2a^2\nu_2+2a^3\nu_0}{x_1-a}.\]
\end{ex}

Let $\mathcal{P}_{i,j,k_1,\dots,k_l}=\partial_{v_{k_1}}\cdots\;\partial_{v_{k_l}}\mathcal{P}_{i,j}$.  
From (\ref{2024.10.30.1234}), we have $\wt(P_{i,j})=i+j$ with respect to $x_i$, $y_i$, $\nu_i$, and $a$.    
We assign weights for $\mathcal{P}_{i,j}$ as $\wt(\mathcal{P}_{i,j})=i+j$.    
For $1\le i\le g$, let 
\[\chi_i(x)=x^i-h_1(x_1,\dots,x_g)x^{i-1}+h_2(x_1,\dots,x_g)x^{i-2}-\cdots+(-1)^ih_i(x_1,\dots,x_g).\]
We have $\chi_g(x)=(x-x_1)\cdots(x-x_g)$. We set $\chi_0(x)=1$. 
Note that $\chi_i(x)$ has the homogeneous weight $2i$ with respect to $x$ and $x_i$.  
For $1\le j\le g$, we have 
\[\frac{\chi_g(x)}{(x-x_j)}=x^{g-1}+\chi_1(x_j)x^{g-2}+\chi_2(x_j)x^{g-3}+\cdots+\chi_{g-1}(x_j)\]
(cf. \cite[p.~136]{B}). 

\begin{lem}[{\cite[p.~137]{B}}]\label{2024.10.31.1}
By the isomorphism $I^*$, for $1\le i\le g$, the derivation $\partial_{v_{2g+2-2i}}$ of $\mathcal{F}\bigl(\operatorname{Jac}(V)\bigr)$ corresponds to the following derivation of $\mathcal{F}\bigl(\operatorname{Sym}^g(V)\bigr)$$:$
\[\sum_{j=1}^g\frac{2y_j}{\chi_g'(x_j)}\chi_{g-i}(x_j)\partial_{x_j}.\]
\end{lem}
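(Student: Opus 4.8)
The plan is to compute directly the effect of $\partial_{v_{2g+2-2i}}$ on a generic coordinate function on $\operatorname{Sym}^g(V)$, namely on each $x_j$, and to identify the result with the stated first-order differential operator. The starting point is the defining relation of the Abel–Jacobi map $I$: if $v = {}^t(v_{2g},\dots,v_2)$ denotes $I\bigl(\sum_k Q_k\bigr)$ with $Q_k=(x_k,y_k)$, then using the basis $\mu_i = x^{i-1}\,dx/(2y)$ from (\ref{2025.6.1.457637532final}) we have, for $1\le i\le g$,
\[
v_{2g+2-2i} = \sum_{k=1}^g \int_{(a,0)}^{(x_k,y_k)} \frac{x^{i-1}}{2y}\,dx .
\]
Differentiating this system of $g$ equations with respect to each $x_j$ gives the Jacobian matrix $\bigl(\partial v_{2g+2-2i}/\partial x_j\bigr) = \bigl(x_j^{i-1}/(2y_j)\bigr)_{i,j}$, a weighted Vandermonde matrix. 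The lemma amounts to inverting this matrix, so the core task is to exhibit the inverse explicitly.

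First I would recall the classical partial-fractions identity for the inverse of a Vandermonde-type matrix: for distinct $x_1,\dots,x_g$ one has
\[
\sum_{i=1}^g x_j^{i-1}\,\frac{\chi_{g-i}(x_k)}{\chi_g'(x_k)} = \delta_{j,k},
\]
which is nothing but the Lagrange interpolation formula written through the coefficients $\chi_{g-i}(x_k)$ of $\chi_g(x)/(x-x_k)$ recorded just before the lemma statement. (Indeed $\chi_g(x)/(x-x_k) = \sum_{i=1}^g \chi_{g-i}(x_k)\,x^{i-1}$, and evaluating the Lagrange basis polynomial $\prod_{l\ne k}(x-x_l)/\prod_{l\ne k}(x_k-x_l)$ at $x=x_j$ yields $\delta_{j,k}$, while $\prod_{l\ne k}(x_k-x_l)=\chi_g'(x_k)$.) Combining this with the Jacobian matrix above, the chain rule gives, for any $\phi = I^*(\widetilde\phi) \in \mathcal F(\operatorname{Sym}^g(V))$,
\[
\partial_{v_{2g+2-2i}}\phi
= \sum_{j=1}^g \frac{\partial x_j}{\partial v_{2g+2-2i}}\,\partial_{x_j}\phi
= \sum_{j=1}^g \frac{2y_j\,\chi_{g-i}(x_j)}{\chi_g'(x_j)}\,\partial_{x_j}\phi,
\]
since the $(i,j)$-entry of the inverse of $\bigl(x_j^{i-1}/(2y_j)\bigr)$ is exactly $2y_j\,\chi_{g-i}(x_j)/\chi_g'(x_j)$ by the interpolation identity. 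This is the claimed formula.

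The one genuinely delicate point — and the step I expect to require the most care — is justifying that $\partial x_j/\partial v_{2g+2-2i}$ is obtained by naive matrix inversion, i.e. that the local coordinates $(x_1,\dots,x_g)$ on $\operatorname{Sym}^g(V)$ and the flat coordinates $(v_{2g},\dots,v_2)$ are related by a local diffeomorphism away from the branch locus and the diagonal. This follows because $I$ is a local biholomorphism precisely where the Vandermonde determinant $\prod_{k}(2y_k)^{-1}\prod_{j<k}(x_j-x_k)$ is nonzero, which is a dense open set; since all functions involved are meromorphic, identities valid there extend to all of $\mathcal F(\operatorname{Jac}(V))$ and $\mathcal F(\operatorname{Sym}^g(V))$, and the derivation acts on the function field, so the correspondence of derivations is well defined. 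Everything else is the routine Lagrange-interpolation computation sketched above, together with bookkeeping to match the weights: $\wt(y_j)=2g+2$, $\wt(\chi_{g-i}(x_j))=2g-2i$, $\wt(\chi_g'(x_j))=2g-2$, so the operator has weight $(2g+2)+(2g-2i)-(2g-2) - \wt(\partial_{x_j}) = 2g+2-2i$, consistent with $\wt(\partial_{v_{2g+2-2i}})=-(2g+2-2i)$ acting as a derivation lowering weight, as it must.
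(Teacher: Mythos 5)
Your argument is correct. Note that the paper itself gives no proof of this lemma --- it is quoted from Baker \cite[p.~137]{B} --- so there is nothing to compare against line by line; but your route (differentiate the Abel--Jacobi integrals to get the Jacobian matrix $\bigl(x_j^{i-1}/(2y_j)\bigr)$, invert it via the Lagrange-interpolation identity $\sum_{i=1}^g \chi_{g-i}(x_k)x_j^{i-1}=\delta_{j,k}\,\chi_g'(x_k)$ coming from $\chi_g(x)/(x-x_k)=\sum_i\chi_{g-i}(x_k)x^{i-1}$, and extend from the dense open set where $I$ is a local biholomorphism) is exactly the standard derivation and is the one Baker's computation amounts to. The only blemish is the closing weight check, where the signs are garbled ($\partial_{v_{2g+2-2i}}$ raises weight by $2g+2-2i$ since $\wt(v_{2g+2-2i})=-(2g+2-2i)$, and the displayed arithmetic only balances with $\wt(\partial_{x_j})=2$ in your convention); this is a side remark and does not affect the proof.
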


From Lemma \ref{2024.10.31.1}, we can assign weights for $\mathcal{P}_{i_1,\dots,i_k}$ as $\wt(\mathcal{P}_{i_1,\dots,i_k})=i_1+\cdots+i_k$.

\section{Solution to the KP equation in terms of the hyperelliptic function associated with the curve $V$}

Let 
\[
E(e_1,e_2)=(e_1-e_2)\left\{f(e_1,e_2)-(e_1-e_2)^2\sum_{i,j=1}^g\mathcal{P}_{2g+2-2i, 2g+2-2j}e_1^{i-1}e_2^{j-1}\right\}.
\]

\begin{lem}[{\cite[p.~144]{B}}]\label{2024.10.21.111}
For variables $e_1,e_2,e_3,e_4$, the following equality holds$:$ 
\begin{gather*}
\frac{1}{2}(e_2-e_1)(e_3-e_2)(e_3-e_1)(e_4-e_3)(e_4-e_2)(e_4-e_1)\times\\
\sum_{i,j,k,l=1}^g\mathcal{P}_{2g+2-2i, 2g+2-2j, 2g+2-2k, 2g+2-2l}e_1^{i-1}e_2^{j-1}e_3^{k-1}e_4^{l-1}\\
=E(e_2,e_3)E(e_4,e_1)+E(e_3,e_1)E(e_4,e_2)+E(e_1,e_2)E(e_4,e_3).
\end{gather*}
\end{lem}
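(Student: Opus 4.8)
The plan is to prove Lemma \ref{2024.10.21.111} by reducing it, via the isomorphism $I^*$, to an identity among symmetric functions of the branch-point coordinates $x_1,\dots,x_g$ and the $y_i$, and then verifying that identity by a residue/partial-fraction argument together with Lemma \ref{2024.10.24.1} and the division lemmas already established. First I would apply $(I^*)^{-1}$ to the defining relation in Definition \ref{2024.11.29.12345}: since $\mathcal{P}_{2g+2-2i,2g+2-2j}=(I^*)^{-1}(P_{2g+2-2i,2g+2-2j})$, the polynomial $\sum \mathcal{P}_{2g+2-2i,2g+2-2j}e_1^{i-1}e_2^{j-1}$ pulls back to $G(e_1,e_2)=F(e_1,e_2)/\{(e_1-e_2)^2R(e_1)R(e_2)\}$. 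Hence $E(e_1,e_2)$ pulls back to
\[
(e_1-e_2)f(e_1,e_2)-\frac{F(e_1,e_2)}{(e_1-e_2)R(e_1)R(e_2)},
\]
and using the definition of $F$ one checks this equals
\[
\frac{N(e_1)R(e_2)^2+N(e_2)R(e_1)^2}{(e_1-e_2)R(e_1)R(e_2)}-(e_1-e_2)R(e_1)R(e_2)\nabla^2.
\]
So $E$ becomes, up to the factor $(e_1-e_2)$, an explicit rational function of $e_1,e_2$ with the $x_i,y_i$ as parameters; the $\nabla^2$ term is a double sum over the residues at $x_i$.

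Next I would compute the left-hand side. By Lemma \ref{2024.10.31.1} the operators $\partial_{v_{2g+2-2i}}$ act on $\mathcal{F}(\mathrm{Sym}^g(V))$ as $\sum_j \frac{2y_j}{\chi_g'(x_j)}\chi_{g-i}(x_j)\partial_{x_j}$, and the generating-function identity $\chi_g(x)/(x-x_j)=\sum_{i} \chi_{i-1}(x_j)x^{g-i}$ lets me package $\sum_i \chi_{g-i}(x_j)e^{i-1}$ neatly in terms of $\chi_g(e)/(e-x_j)$ evaluated appropriately. Thus $\sum_{i,j,k,l}\mathcal{P}_{\cdots}e_1^{i-1}e_2^{j-1}e_3^{k-1}e_4^{l-1}$ is a fourfold differential operator built from these vector fields applied to $G$, pulled back. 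The cleanest route is not to expand everything but to check the identity as an identity of rational functions in $e_1,e_2,e_3,e_4$: both sides are rational functions whose only possible poles are along $e_m=x_j$ (and at infinity), so it suffices to match the principal parts at each such pole and then match the polynomial part (or compare at enough special values, e.g. $e_m=a$ or $e_m=x_{j}$). The right-hand side $E(e_2,e_3)E(e_4,e_1)+E(e_3,e_1)E(e_4,e_2)+E(e_1,e_2)E(e_4,e_3)$ is manifestly antisymmetric in an appropriate sense and its poles come from the $\nabla^2$ factors and from the $R(e_m)$ in denominators; a careful residue bookkeeping at $e_m=x_j$ should reproduce exactly the derivatives of $G$ appearing on the left.

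A slicker alternative, which I would try first, is to invoke the uniqueness in Lemma \ref{2024.11.27.1}: both sides of the claimed identity, viewed as functions of $e_4$ say, are symmetric polynomials of the controlled degree after clearing the Vandermonde-type factor, and they have the same values and first derivatives on the diagonal by Lemma \ref{2024.10.24.1}; matching the ``correction'' terms $(e-e')^2\sum \mathfrak{m}_{ij}e^{i-1}e'^{j-1}$ then forces equality. Concretely one would show the right-hand side, divided by the product of the six differences, is a symmetric polynomial of degree $\le g-1$ in each $e_m$ whose generating coefficients are precisely $\mathcal{P}_{\cdots,2g+2-2l}$ — this is essentially Baker's original derivation, and the structure mirrors the genus-$g$ addition formula for the two-infinite-point case.

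The main obstacle I anticipate is the residue computation at a coincidence $e_m=x_j$: one must show that the double-pole and simple-pole contributions from the two $\nabla^2$-carrying factors on the right combine — using $N(x_j)=y_j^2$, $R(x_j)=0$, $R'(x_j)=\chi_g'(x_j)(x_j-a)$, and the Leibniz expansion of the differential operators from Lemma \ref{2024.10.31.1} — to match the fourth mixed derivative of $G$ exactly, with no spurious terms. Keeping track of the signs coming from the six antisymmetric factors $(e_2-e_1)\cdots(e_4-e_1)$ and of the cross terms $y_iy_j$ with $i\neq j$ in $\nabla^2$ is where the bookkeeping is heaviest; I would organize it by first treating the generic part (all $e_m$ distinct from all $x_j$), then localizing at a single $e_m=x_j$, and finally checking the value at $e_m=a$, which is where $F(a,e_2)=0$ and $R(a)=0$ give the needed vanishing.
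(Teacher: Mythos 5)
First, note that the paper does not actually prove Lemma \ref{2024.10.21.111}: it is quoted from Baker \cite[p.~144]{B} and used as an input, so there is no in-paper argument to compare yours against. Judged on its own terms, your proposal correctly identifies the natural strategy (pull everything back through $I^*$ to $\operatorname{Sym}^g(V)$, write $E$ explicitly in terms of $N$, $R$, $\nabla$, realize the left-hand side via the vector fields of Lemma \ref{2024.10.31.1}, and compare rational functions of $e_1,\dots,e_4$), but it stops exactly where the content of the lemma begins. The entire substance of Baker's identity is the verification that the fourfold application of those vector fields to $G$ reproduces the quadratic combination $E(e_2,e_3)E(e_4,e_1)+E(e_3,e_1)E(e_4,e_2)+E(e_1,e_2)E(e_4,e_3)$; you defer this to ``residue bookkeeping'' that you yourself flag as the heaviest part and do not carry out, so nothing is actually proved. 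There is also a concrete slip: from $E=(e_1-e_2)\{f-(e_1-e_2)^2G\}$ and $G=F/\{(e_1-e_2)^2R(e_1)R(e_2)\}$ one gets $E\mapsto(e_1-e_2)\{f-F/(R(e_1)R(e_2))\}$, i.e.
\[
E\;\longmapsto\;(e_1-e_2)\,\frac{N(e_1)R(e_2)^2+N(e_2)R(e_1)^2}{R(e_1)R(e_2)}-(e_1-e_2)^3R(e_1)R(e_2)\nabla^2,
\]
which is $(e_1-e_2)^2$ times the expression you wrote; carried into the residue analysis this factor would wreck the pole matching.

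The ``slicker alternative'' is not viable as described. Lemma \ref{2024.11.27.1} says that a symmetric polynomial of bidegree $g+1$ agreeing with $2N$ on the diagonal to first order is determined only up to an arbitrary correction $(e_1-e_2)^2\sum\mathfrak{m}_{i,j}e_1^{i-1}e_2^{j-1}$; it is a normal-form statement, not a uniqueness statement, so matching values and first derivatives on the diagonal via Lemma \ref{2024.10.24.1} cannot ``force equality'' of the two sides. Moreover it is a two-variable lemma and gives no handle on a four-variable identity. To make your first route work you would need to actually perform the localization at $e_m=x_j$ (using $N(x_j)=y_j^2$, $R'(x_j)=(x_j-a)\chi_g'(x_j)$, and the Leibniz expansion of the operators from Lemma \ref{2024.10.31.1}) and exhibit the cancellation of the cross terms $y_iy_j$; as it stands the proposal is a plausible roadmap with one corrupted formula and no completed step.
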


\begin{rem}
The equality in Lemma \ref{2024.10.21.111} has the homogeneous weight $8g+12$. 

\end{rem}

\begin{lem}[{\cite[p.~144]{B}}]
The functions $\mathcal{P}_{i,j,k,l}$ have values independent of the order of the suffixes $i,j,k,l$. 
\end{lem}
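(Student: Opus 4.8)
The plan is to derive the symmetry of the suffixes of $\mathcal{P}_{i,j,k,l}$ from the total antisymmetry, under permutations of $e_1,e_2,e_3,e_4$, of the right‑hand side of Lemma~\ref{2024.10.21.111}. First I would record the elementary fact that $E(e_1,e_2)$ is \emph{antisymmetric}: $f(e_1,e_2)$ is symmetric in its two arguments, and since $\mathcal{P}_{i,j}=\mathcal{P}_{j,i}$ the polynomial $\sum_{i,j=1}^g\mathcal{P}_{2g+2-2i,\,2g+2-2j}e_1^{i-1}e_2^{j-1}$ is symmetric as well, so the expression inside the braces in the definition of $E$ is symmetric in $e_1,e_2$; multiplying it by the antisymmetric factor $e_1-e_2$ gives $E(e_2,e_1)=-E(e_1,e_2)$.

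Next I would show that $T:=E(e_2,e_3)E(e_4,e_1)+E(e_3,e_1)E(e_4,e_2)+E(e_1,e_2)E(e_4,e_3)$, the right‑hand side of Lemma~\ref{2024.10.21.111}, is totally antisymmetric in $e_1,e_2,e_3,e_4$. It suffices to check that $T$ changes sign under each of the adjacent transpositions $(e_1\,e_2)$, $(e_2\,e_3)$, $(e_3\,e_4)$, since these generate $S_4$; in each case the three summands of $T$ are permuted among themselves and each picks up a sign, which one reads off using the antisymmetry of $E$ from the previous step. This little sign bookkeeping is the only place where some care is needed, and I expect it to be the main (mild) obstacle.

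Then I would invoke Lemma~\ref{2024.10.21.111} itself. Write $\Delta=(e_2-e_1)(e_3-e_2)(e_3-e_1)(e_4-e_3)(e_4-e_2)(e_4-e_1)$ for the Vandermonde polynomial in $e_1,\dots,e_4$ and $P(e_1,e_2,e_3,e_4)=\sum_{i,j,k,l=1}^g\mathcal{P}_{2g+2-2i,\,2g+2-2j,\,2g+2-2k,\,2g+2-2l}\,e_1^{i-1}e_2^{j-1}e_3^{k-1}e_4^{l-1}$, so that the lemma reads $\tfrac12\Delta\,P=T$. For any transposition $\tau$ of $e_1,\dots,e_4$ one has $\tau(\Delta P)=\tau(\Delta)\,\tau(P)=-\Delta\,\tau(P)$, while also $\tau(\Delta P)=2\tau(T)=-2T=-\Delta P$ by the antisymmetry of $T$; hence $\Delta\bigl(\tau(P)-P\bigr)=0$. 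Since $\Delta$ is a nonzero element of $\mathcal{F}\bigl(\operatorname{Sym}^g(V)\bigr)[e_1,e_2,e_3,e_4]$, which is an integral domain, this forces $\tau(P)=P$, so $P$ is symmetric in $e_1,e_2,e_3,e_4$.

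Finally, the monomials $e_1^{i-1}e_2^{j-1}e_3^{k-1}e_4^{l-1}$ with $1\le i,j,k,l\le g$ are linearly independent over $\mathcal{F}\bigl(\operatorname{Sym}^g(V)\bigr)$, so comparing coefficients in $\tau(P)=P$ shows that $\mathcal{P}_{2g+2-2i,\,2g+2-2j,\,2g+2-2k,\,2g+2-2l}$ is invariant under permuting $i,j,k,l$; equivalently $\mathcal{P}_{i,j,k,l}$ is independent of the order of its suffixes. (One could also extract only the transposition exchanging the second and third suffixes this way and then combine it with $\mathcal{P}_{i,j,k,l}=\mathcal{P}_{j,i,k,l}$, coming from $\mathcal{P}_{i,j}=\mathcal{P}_{j,i}$, and $\mathcal{P}_{i,j,k,l}=\mathcal{P}_{i,j,l,k}$, coming from commutativity of $\partial_{v_k}$ and $\partial_{v_l}$, to generate all $24$ orderings; the passage through $P$ gives everything at once.)
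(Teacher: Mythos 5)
Your proof is correct and follows essentially the same route as the paper: both arguments reduce the claim to the symmetry of $\widetilde{E}(e_1,e_2,e_3,e_4)$ divided by the Vandermonde factor and then compare coefficients via Lemma~5.1. The only difference is that where the paper writes ``we can check'' that the quotient is symmetric, you supply the verification explicitly (antisymmetry of $E$, hence total antisymmetry of the sum, hence symmetry after dividing by the antisymmetric Vandermonde polynomial), which is exactly the intended check.
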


\begin{proof}
For the sake to be complete and self-contained, we give a proof of this lemma. Let 
\[\widetilde{E}(e_1,e_2,e_3,e_4)=E(e_2,e_3)E(e_4,e_1)+E(e_3,e_1)E(e_4,e_2)+E(e_1,e_2)E(e_4,e_3).\]
From $\widetilde{E}(e_1,e_1,e_3,e_4)=0$, the polynomial $\widetilde{E}(e_1,e_2,e_3,e_4)$ can be divided by $(e_2-e_1)$. 
Similarly, we find that the polynomial $\widetilde{E}(e_1,e_2,e_3,e_4)$ can be divided by $(e_2-e_1)(e_3-e_2)(e_3-e_1)(e_4-e_3)(e_4-e_2)(e_4-e_1)$. 
We can check that $\widetilde{E}(e_1,e_2,e_3,e_4)/\bigl\{(e_2-e_1)(e_3-e_2)(e_3-e_1)(e_4-e_3)(e_4-e_2)(e_4-e_1)\bigr\}$ is a symmetric polynomial in 
$e_1,e_2,e_3,e_4$. Therefore, we obtain the statement of the lemma.  
\end{proof}

\begin{rem}
In \cite[pp.~155, 156]{B}, for $g=1,2,3$, the functions $\mathcal{P}_{i,j,k,l}$ with $i,j,k,l=2,4,\dots,2g$ were expressed in terms of $\mathcal{P}_{m,n}$ explicitly. 
\end{rem}

\begin{rem}
In \cite[p.~156]{B}, for any $g$, the functions $\mathcal{P}_{i,j,k,l}$ with $i,j,k,l=2g-2,2g$ were expressed in terms of $\mathcal{P}_{m,n}$ explicitly. 
\end{rem}

If the number of $i$ is $d$ in the suffixes of $\mathcal{P}_{i_1,\dots,i_k}$, then we use the notation $i\cdot d$. 
For example, we denote $\mathcal{P}_{2,2}$, $\mathcal{P}_{2,2,4}$, and $\mathcal{P}_{2,2,4,4}$ by $\mathcal{P}_{2\cdot2}$, $\mathcal{P}_{2\cdot2,4}$, and $\mathcal{P}_{2\cdot2, 4\cdot2}$, respectively. 
If $i_j\notin\{2,4,\dots,2g\}$ for some $1\le j\le k$, we set $\mathcal{P}_{i_1,\dots,i_k}=0$. 

\begin{theorem}\label{2024.10.22.1111}
For $1\le k\le g$, we have the following equalities$:$
\begin{align*}
\mathcal{P}_{2\cdot3,2k}&=2\mathcal{P}_{2,2k}(3\mathcal{P}_{2\cdot2}+2\nu_{4})+2\nu_{2}(\delta_{1,k}\nu_{6}-\mathcal{P}_{4,2k}+3\mathcal{P}_{2, 2k+2})\\
&\quad +4\nu_{0}(3\mathcal{P}_{2, 2k+4}-3\mathcal{P}_{4,2k+2}+\mathcal{P}_{6,2k}-2\delta_{1,k}\nu_{8}-\delta_{2,k}\nu_{10}). 
\end{align*}
\end{theorem}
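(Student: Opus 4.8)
The idea is to extract the stated differential relation from the fundamental four-index formula in Lemma~\ref{2024.10.21.111} by a careful specialization of the free variables $e_1,e_2,e_3,e_4$, exactly paralleling how Theorem~\ref{2024.11.20.1} is used for the curve $C$. Concretely, in the identity of Lemma~\ref{2024.10.21.111} I would first isolate the coefficient of a suitable monomial $e_1^{i-1}e_2^{j-1}e_3^{k-1}e_4^{l-1}$ on both sides, or equivalently apply a differential operator in the $e$'s and then set all four variables equal; this produces a polynomial identity expressing $\mathcal{P}_{2\cdot3,2k}$-type quantities (after using $\mathcal{P}_{i,j,k,l}$ symmetry and the definition $\mathcal{P}_{i,j,k_1,\dots,k_l}=\partial_{v_{k_1}}\cdots\partial_{v_{k_l}}\mathcal{P}_{i,j}$) in terms of products $E(e_a,e_b)$ evaluated at coincident arguments. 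The right-hand side, a sum of three products of $E$'s, collapses under the coincidence limit because each $E(e_a,e_b)$ carries a factor $(e_a-e_b)$; one tracks the surviving terms using $E(e_1,e_1)=0$ together with the first and second derivatives of $E$ along the diagonal, which by Lemma~\ref{2024.10.24.1} are governed by $N(e_1)$ and $N'(e_1)$, hence by the coefficients $\nu_i$.

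The key computational input is the behavior of $E(e_1,e_2)$ near $e_2=e_1$. Writing $E(e_1,e_2)=(e_1-e_2)\{f(e_1,e_2)-(e_1-e_2)^2 G^{\mathcal P}(e_1,e_2)\}$ with $G^{\mathcal P}(e_1,e_2)=\sum_{i,j}\mathcal{P}_{2g+2-2i,2g+2-2j}e_1^{i-1}e_2^{j-1}$, one gets $E(e_1,e_1)=0$ and the diagonal derivatives of $E$ in terms of $f(e_1,e_1)=2N(e_1)$, $\partial_{e_2}f|_{e_2=e_1}=N'(e_1)$, and the values and $v$-derivatives of the $\mathcal{P}_{i,j}$ on the diagonal. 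Feeding these into the three products $E(e_2,e_3)E(e_4,e_1)+E(e_3,e_1)E(e_4,e_2)+E(e_1,e_2)E(e_4,e_3)$ and extracting the correct monomial coefficient, the terms linear in $N$ and $N'$ supply precisely the explicit $\nu$-dependent contributions $2\nu_4$, $\nu_2\nu_6$, $\nu_0\nu_8$, $\nu_0\nu_{10}$ and the shift terms $\mathcal{P}_{2,2k+2},\mathcal{P}_{4,2k},\mathcal{P}_{2,2k+4},\mathcal{P}_{4,2k+2},\mathcal{P}_{6,2k}$, while the quadratic-in-$\mathcal P$ terms produce the $3\mathcal{P}_{2\cdot2}\,\mathcal{P}_{2,2k}$ term. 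The Kronecker deltas $\delta_{1,k},\delta_{2,k}$ appear because the polynomial $f(e_1,e_2)$ has finite degree, so the top-degree coefficients in $e$ truncate and certain expected terms are absent unless $k$ is small; this is the analog of the $\delta_{i,j}$ bookkeeping in Theorem~\ref{2024.11.20.1}.

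For the organization of the argument I would proceed as follows: (1) record the diagonal Taylor data of $E$ up to the needed order in terms of $N,N'$ and the $\mathcal{P}_{i,j}$; (2) in Lemma~\ref{2024.10.21.111}, apply the operator that extracts the coefficient of $e_1^{j_1-1}e_2^{j_2-1}e_3^{j_3-1}e_4^{j_4-1}$ with the multi-index chosen so that, after symmetrization, the left side becomes $\mathcal{P}_{2\cdot3,2k}$ — one wants three of the suffixes to be $2g$ (i.e.\ $i=g$, contributing the leading $e$-power) and the fourth to be $2g+2-2k$; (3) take the limit in which the three coincident variables are set equal, using step (1) to evaluate the right side; (4) re-express everything in $v$-derivative notation via the definition of $\mathcal{P}_{i_1,\dots,i_k}$ and Lemma~\ref{2024.10.31.1}, and collect terms. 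The main obstacle is bookkeeping in step (2)--(3): one must correctly handle the triple coincidence limit (which involves second-order Taylor expansion of two of the $E$ factors simultaneously) and verify that spurious higher-$\mathcal P$ and higher-derivative terms cancel, leaving only the quadratic expression claimed; a secondary subtlety is pinning down the exact coefficients and the ranges forcing the $\delta_{1,k},\delta_{2,k}$ corrections. As in the proof of Theorem~\ref{2024.10.30.1}, once the structure is set up the remaining cancellations are a finite, if lengthy, direct computation, which I would present as ``from the direct calculations'' after exhibiting the key specialization.
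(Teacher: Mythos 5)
Your first option---extracting the coefficient of an extreme monomial from the identity of Lemma~\ref{2024.10.21.111} so that only $\mathcal{P}_{2\cdot3,2k}$ survives on the left---is exactly the paper's proof: the authors take the coefficient of $e_1^{i-1}e_2^{g}e_3^{g+1}e_4^{g+2}$ (so three of the generating indices sit at $i=g$, i.e.\ suffix $2$, not $2g$ as you wrote) and read off the corresponding coefficients of the three $E$-products directly, with the Kronecker deltas arising from the truncation of $f$ just as you say. The alternative ``coincidence limit'' framing in your steps (2)--(3) is unnecessary and internally muddled (once a coefficient is extracted there are no variables left to set equal), but the core approach is the same as the paper's and is correct.
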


\begin{proof}
For $1\le i\le g$, the coefficient of $e_1^{i-1}e_2^{g}e_3^{g+1}e_4^{g+2}$ in the left-hand side of the equality in Lemma \ref{2024.10.21.111} is $\mathcal{P}_{2\cdot3,2g+2-2i}/2$. 
For $1\le i\le g$, the coefficients of $e_1^{i-1}e_2^{g}e_3^{g+1}e_4^{g+2}$ in $E(e_2,e_3)E(e_4,e_1)$, $E(e_3,e_1)E(e_4,e_2)$, and $E(e_1,e_2)E(e_4,e_3)$ are $(3\mathcal{P}_{2\cdot2}+2\nu_{4})\mathcal{P}_{2,2g+2-2i}$, 
$\nu_{2}(\delta_{g,i}\nu_{6}-\mathcal{P}_{4, 2g+2-2i}+3\mathcal{P}_{2, 2g+4-2i})$, and 
$2\nu_0(3\mathcal{P}_{2, 2g+6-2i}-3\mathcal{P}_{4,2g+4-2i}+\mathcal{P}_{6,2g+2-2i}-2\delta_{g,i}\nu_{8}-\delta_{g-1,i}\nu_{10})$, respectively. From Lemma \ref{2024.10.21.111}, we obtain the statement of the theorem. 
\end{proof}


\begin{rem}
The equality in Theorem \ref{2024.10.22.1111} has the homogeneous weight $2k+6$. 

\end{rem}

We consider the case $g\ge3$ and for $g\ge4$ take constants $c_i\in\mathbb{C}$ with $1\le i\le g-3$. 
Let 
\[\psi(t_1,t_2,t_3)=-2\mathcal{P}_{2\cdot2}\left(c_1,\dots,c_{g-3},\alpha t_3, \beta t_2, t_1+\gamma t_2\right)-\delta,\]
where 
\[\alpha=-16\nu_{0},\quad \beta=2\sqrt{-3\nu_{0}},\quad \gamma=\frac{\nu_{2}}{\sqrt{-3\nu_{0}}},\quad \delta=\frac{2}{3}\nu_{4}+\frac{\nu_{2}^2}{18\nu_{0}}.\]

\begin{cor}\label{4}
The function $\psi$ satisfies the KP-$\mathrm{I}$ equation
\[\partial_{t_1}(\partial_{t_3}\psi+6\psi\partial_{t_1}\psi+\partial_{t_1}^3\psi)=\partial_{t_2}^2\psi.\]
\end{cor}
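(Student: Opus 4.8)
The plan is to follow the proof of Corollary~\ref{2024.12.6.1} step by step, replacing the ``master'' differential relation for the curve $C$ (Theorem~\ref{2024.10.30.1}) by its counterpart for the curve $V$. The first step is to specialize Theorem~\ref{2024.10.22.1111} to $k=1$: using $\delta_{1,1}=1$, $\delta_{2,1}=0$, and $\mathcal{P}_{i,j}=\mathcal{P}_{j,i}$, it becomes
\[
\mathcal{P}_{2\cdot4}=6\mathcal{P}_{2\cdot2}^2+4\nu_4\mathcal{P}_{2\cdot2}+4\nu_2\mathcal{P}_{2,4}+16\nu_0\mathcal{P}_{2,6}-12\nu_0\mathcal{P}_{4,4}+2\nu_2\nu_6-8\nu_0\nu_8 .
\]
This is the exact analog of Theorem~\ref{2024.10.30.1} under the dictionary $\mathcal{P}_{2\cdot2}\leftrightarrow\wp_{(2g-1)\cdot2}$, $\nu_0\leftrightarrow\lambda_{4g+2}$, $\nu_2\leftrightarrow\lambda_{4g}$, $\nu_4\leftrightarrow\lambda_{4g-2}$, $\nu_6\leftrightarrow\lambda_{4g-4}$, $\nu_8\leftrightarrow\lambda_{4g-6}$, with the index $2g+1-2m$ on $\wp$ corresponding to the index $2m$ on $\mathcal{P}$; in particular the quadruple $(\alpha,\beta,\gamma,\delta)$ is the image of $(\mathfrak{c},\mathfrak{d},\mathfrak{e},\mathfrak{f})$ under this substitution. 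Consequently one could simply transcribe the proof of Corollary~\ref{2024.12.6.1}; for completeness I would instead carry out the analogous computation.

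Next I would apply $\partial_{v_2}^2$ to the displayed relation, using $\partial_{v_2}\mathcal{P}_{2,4}=\mathcal{P}_{2\cdot2,4}$, $\partial_{v_2}\mathcal{P}_{2,6}=\mathcal{P}_{2\cdot2,6}$, $\partial_{v_2}\mathcal{P}_{4,4}=\mathcal{P}_{2,4\cdot2}$ and so on, to obtain
\[
\mathcal{P}_{2\cdot6}=12\mathcal{P}_{2\cdot3}^2+12\mathcal{P}_{2\cdot2}\mathcal{P}_{2\cdot4}+4\nu_4\mathcal{P}_{2\cdot4}+4\nu_2\mathcal{P}_{2\cdot3,4}+16\nu_0\mathcal{P}_{2\cdot3,6}-12\nu_0\mathcal{P}_{2\cdot2,4\cdot2} .
\]
Since the last three arguments of $\mathcal{P}_{2\cdot2}$ in the definition of $\psi$ are $v_6=\alpha t_3$, $v_4=\beta t_2$, $v_2=t_1+\gamma t_2$ (the higher arguments being frozen to the constants $c_i$ when $g\ge4$), the chain rule gives $\partial_{t_1}=\partial_{v_2}$, $\partial_{t_2}=\beta\partial_{v_4}+\gamma\partial_{v_2}$, $\partial_{t_3}=\alpha\partial_{v_6}$, whence
\begin{gather*}
\partial_{t_1}\partial_{t_3}\psi=-2\alpha\mathcal{P}_{2\cdot3,6},\qquad (\partial_{t_1}\psi)^2=4\mathcal{P}_{2\cdot3}^2,\qquad \psi\,\partial_{t_1}^2\psi=2(2\mathcal{P}_{2\cdot2}+\delta)\mathcal{P}_{2\cdot4},\\
\partial_{t_1}^4\psi=-2\mathcal{P}_{2\cdot6},\qquad \partial_{t_2}^2\psi=-2\gamma^2\mathcal{P}_{2\cdot4}-4\beta\gamma\,\mathcal{P}_{2\cdot3,4}-2\beta^2\mathcal{P}_{2\cdot2,4\cdot2}.
\end{gather*}
Substituting the formula for $\mathcal{P}_{2\cdot6}$ into $\partial_{t_1}^4\psi$ and using $\alpha=-16\nu_0$ to annihilate the $\mathcal{P}_{2\cdot3,6}$ term, the left-hand side of the KP-$\mathrm{I}$ equation collapses to $(12\delta-8\nu_4)\mathcal{P}_{2\cdot4}-8\nu_2\mathcal{P}_{2\cdot3,4}+24\nu_0\mathcal{P}_{2\cdot2,4\cdot2}$.

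The last step is the bookkeeping check that the defining values satisfy $\beta^2=-12\nu_0$, $\beta\gamma=2\nu_2$, $\gamma^2=-\nu_2^2/(3\nu_0)$, and $12\delta-8\nu_4=2\nu_2^2/(3\nu_0)$, so that the expression above coincides with $\partial_{t_2}^2\psi$, which proves the corollary.

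I do not expect a genuine obstacle: everything rests on the already-proved differential relations of Theorem~\ref{2024.10.22.1111}, and the argument is strictly parallel to Corollary~\ref{2024.12.6.1}. The only points deserving care are: (a) the range condition, namely that the indices $2,4,6$ and their shifts must stay in $\{2,4,\dots,2g\}$, which is precisely why the hypothesis $g\ge3$ is imposed and why the surplus variables are frozen to $c_1,\dots,c_{g-3}$ when $g\ge4$; and (b) tracking the chain rule carefully so that the mixed partials $\mathcal{P}_{2\cdot3,4}$, $\mathcal{P}_{2\cdot3,6}$, $\mathcal{P}_{2\cdot2,4\cdot2}$ are matched correctly on the two sides of the equation --- a purely mechanical but error-prone step.
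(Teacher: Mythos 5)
Your proposal is correct and follows essentially the same route as the paper: specialize Theorem \ref{2024.10.22.1111} to $k=1$, differentiate twice with respect to $v_2$ to get the stated formula for $\mathcal{P}_{2\cdot6}$, and then match terms via the chain rule using $\alpha=-16\nu_0$, $\beta^2=-12\nu_0$, $\beta\gamma=2\nu_2$, and $12\delta-8\nu_4=-2\gamma^2$. The only difference is that you write out the final bookkeeping explicitly, which the paper leaves implicit.
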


\begin{proof}
By differentiating the equality in Theorem \ref{2024.10.22.1111} with $k=1$ with respect to $v_2$ twice, we obtain
\[\mathcal{P}_{2\cdot6}=12\mathcal{P}_{2\cdot3}^2+12\mathcal{P}_{2\cdot2}\mathcal{P}_{2\cdot4}+4\nu_{4}\mathcal{P}_{2\cdot4}+4\nu_{2}\mathcal{P}_{2\cdot3,4}+16\nu_{0}\mathcal{P}_{2\cdot3,6}-12\nu_{0}\mathcal{P}_{2\cdot2,4\cdot2}.\]
We have 
\begin{gather*}
\partial_{t_1}\partial_{t_3}\psi=-2\alpha\mathcal{P}_{2\cdot3,6},\qquad (\partial_{t_1}\psi)^2=4\mathcal{P}_{2\cdot3}^2,\qquad \psi\partial_{t_1}^2\psi=2(2\mathcal{P}_{2\cdot2}+\delta)\mathcal{P}_{2\cdot4},\\
\partial_{t_1}^4\psi=-2\mathcal{P}_{2\cdot6},\qquad \partial_{t_2}^2\psi=-2\beta^2\mathcal{P}_{2\cdot2,4\cdot2}-4\beta\gamma\mathcal{P}_{2\cdot3,4}-2\gamma^2\mathcal{P}_{2\cdot4}.
\end{gather*}
From the equalities above, we obtain the statement of the corollary. 
\end{proof}



\begin{ex}
Let us consider the case $N(x)=x^{2g+2}+\nu_{4g+4}$. 
We have $f(e_1,e_2)=2(e_1^{g+1}e_2^{g+1}+\nu_{4g+4})$. 
From Theorem \ref{2024.10.22.1111}, for $g=1$, we have 
\[\mathcal{P}_{2\cdot4}=6\mathcal{P}_{2\cdot2}^2-8\nu_8.\]
For $g\ge2$ and $1\le k\le g$, we have 
\[\mathcal{P}_{2\cdot3,2k}=6\mathcal{P}_{2\cdot2}\mathcal{P}_{2,2k}+4(3\mathcal{P}_{2, 2k+4}-3\mathcal{P}_{4,2k+2}+\mathcal{P}_{6,2k}).\]
From Corollary \ref{4}, the function 
\[\psi(t_1,t_2,t_3)=-2\mathcal{P}_{2\cdot2}\left(c_1,\dots,c_{g-3},-16t_3, 2\sqrt{-3} t_2, t_1\right)\]
satisfies the KP-I equation
\[\partial_{t_1}(\partial_{t_3}\psi+6\psi\partial_{t_1}\psi+\partial_{t_1}^3\psi)=\partial_{t_2}^2\psi.\]
\end{ex}

\section{Inversion problem of the Abel-Jacobi map for the curve $V$}\label{2024.11.4.1}

We consider the curve $V$ defined in Section \ref{2024.10.25.1} and set 
\[N(x)=\nu_{0}(x-a)\prod_{i=1}^{2g+1}(x-a_i),\qquad a_i\in\mathbb{C}.\]
We take $\mathfrak{s}, \mathfrak{t}\in\mathbb{C}$ such that $\mathfrak{s}\mathfrak{t}\neq0$ and $\mathfrak{s}^{2g+1}/\mathfrak{t}^2=N'(a)$. 
We assign weights for $\mathfrak{s}$ and $\mathfrak{t}$ as $\wt(\mathfrak{s})=4$ and $\wt(\mathfrak{t})=2g+1$. 
Let us consider the polynomial 
\begin{equation}
\widetilde{M}(X)=\prod_{i=1}^{2g+1}\left(X-\frac{\mathfrak{s}}{a_i-a}\right)\label{2025.6.5.5387549100}
\end{equation}
and the hyperelliptic curve $\widetilde{C}$ of genus $g$ defined by 
\[\widetilde{C}=\Bigl\{(X,Y)\in\mathbb{C}^2 \Bigm| Y^2=\widetilde{M}(X)\Bigr\}.\]

\begin{prop}[{\cite[pp.~144, 145]{B}}]\label{3}
We have the following isomorphism from $V$ to $\widetilde{C}$$:$
\begin{equation}
\zeta\colon\quad V\to \widetilde{C},\qquad (x,y)\mapsto (X,Y)=\left(\frac{\mathfrak{s}}{x-a}, \frac{\mathfrak{t}\;y}{(x-a)^{g+1}}\right).\label{1}
\end{equation}
\end{prop}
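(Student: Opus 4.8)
The plan is to check that the rational map $\zeta$ in (\ref{1}) transforms the affine equation of $V$ into that of $\widetilde{C}$, to write down an explicit rational inverse, and then to conclude that $\zeta$ is an isomorphism of the associated nonsingular curves, after extending the formulas over the finitely many points where a coordinate vanishes or becomes infinite.

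\textbf{Step 1.} I would first record that $\widetilde{C}$ is well defined and nonsingular. Since $N(a)=0$ and $N$ has no multiple roots, $a\neq a_i$ for every $i$ and $N'(a)=\nu_0\prod_{i=1}^{2g+1}(a-a_i)\neq0$; as $\mathfrak{s}\neq0$ the numbers $\mathfrak{s}/(a_i-a)$ are well defined and, the $a_i$ being pairwise distinct, so are they, hence $\widetilde{M}$ in (\ref{2025.6.5.5387549100}) is a monic polynomial of degree $2g+1$ without multiple roots and $\widetilde{C}$ is a nonsingular hyperelliptic curve of genus $g$ (a curve with one infinite point).

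\textbf{Step 2.} Next I would verify the defining identity. Setting $X=\mathfrak{s}/(x-a)$, so $x-a=\mathfrak{s}/X$, one has
\[
Y^2=\frac{\mathfrak{t}^2y^2}{(x-a)^{2g+2}}=\frac{\mathfrak{t}^2N(x)}{(x-a)^{2g+2}}=\mathfrak{t}^2\nu_0\prod_{i=1}^{2g+1}\frac{x-a_i}{x-a}.
\]
Since $\dfrac{x-a_i}{x-a}=1-\dfrac{(a_i-a)X}{\mathfrak{s}}=\dfrac{-(a_i-a)}{\mathfrak{s}}\Bigl(X-\dfrac{\mathfrak{s}}{a_i-a}\Bigr)$, multiplying over $i=1,\dots,2g+1$ and using $\prod_{i=1}^{2g+1}\bigl(-(a_i-a)\bigr)=\prod_{i=1}^{2g+1}(a-a_i)=N'(a)/\nu_0$ gives $\prod_{i=1}^{2g+1}\frac{x-a_i}{x-a}=\frac{N'(a)}{\nu_0\mathfrak{s}^{2g+1}}\widetilde{M}(X)$, hence $Y^2=\frac{\mathfrak{t}^2N'(a)}{\mathfrak{s}^{2g+1}}\widetilde{M}(X)$. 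The hypothesis $\mathfrak{s}^{2g+1}/\mathfrak{t}^2=N'(a)$ makes the scalar equal to $1$, so $(X,Y)\in\widetilde{C}$ whenever $(x,y)\in V$ with $x\neq a$.

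\textbf{Step 3.} Finally I would invert (\ref{1}), obtaining $x=a+\mathfrak{s}/X$ and $y=\mathfrak{s}^{g+1}Y/(\mathfrak{t}X^{g+1})$, which maps $\widetilde{C}$ back into $V$ by the same computation read in reverse. Thus $\zeta$ is a birational map between nonsingular curves and hence an isomorphism of their smooth projective models; it remains to make the boundary behaviour explicit, namely that $\zeta$ sends the branch point $(a,0)$ of $V$ to the point at infinity of $\widetilde{C}$ and the two points of $V$ at infinity (recall $\deg N=2g+2$) to the two points $\bigl(0,\pm\sqrt{\widetilde{M}(0)}\bigr)$ of $\widetilde{C}$ over $X=0$, using $\widetilde{M}(0)\neq0$. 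I do not anticipate a real obstacle: the core is the substitution of Step 2, and the only care needed is with the sign $(-1)^{2g+1}$ absorbed in $\prod\bigl(-(a_i-a)\bigr)$ and with pinning down the images of the points at infinity and of $(a,0)$ so that the statement holds as an honest isomorphism of $V$ onto $\widetilde{C}$, not merely birationally.
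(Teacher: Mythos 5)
Your proof is correct: Step~2 is the essential computation (the scalar $\mathfrak{t}^{2}N'(a)/\mathfrak{s}^{2g+1}$ collapses to $1$ exactly by the normalization $\mathfrak{s}^{2g+1}/\mathfrak{t}^{2}=N'(a)$), the inverse in Step~3 is right, and your bookkeeping of the boundary points --- $(a,0)$ going to the unique infinite point of $\widetilde{C}$ and the two infinite points of $V$ going to $\bigl(0,\pm\mathfrak{t}\sqrt{\nu_0}\bigr)$, consistent with $\widetilde{M}(0)=\mathfrak{t}^{2}\nu_0$ --- is what is needed to upgrade the birational map to an isomorphism of the nonsingular models. The paper itself offers no proof (the proposition is quoted from Baker, pp.~144--145), and your direct substitution argument is the standard one.
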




Let $D$ be the $g\times g$ regular matrix defined by ${}^t\bigl(\zeta^*(\omega_1),\dots,\zeta^*(\omega_g)\bigr)=D\mu$, where $\zeta^*(\omega_i)$ is the pullback of the holomorphic 1-form $\omega_i$ on $\widetilde{C}$ with respect to the map $\zeta$. 
For $1\le i\le g$, we take points $T_i=(x_i,y_i)\in V\backslash\bigl\{\pm\infty, (a,0)\bigr\}$ such that $T_i\neq\tau_2(T_j)$ if $i\neq j$, where $\tau_2$ is the hyperelliptic involution of $V$
\[\tau_2: V\to V,\qquad (x,y)\mapsto(x,-y).\] 
Let 
\[v=\sum_{i=1}^g\int_{(a,0)}^{T_i}\mu.\]

\begin{prop}\label{55555}
For $1\le k\le g$, we have 
\[h_k(x_1-a,\dots,x_g-a)=(-\mathfrak{s})^k\frac{\wp_{1,2g-2k-1}(Dv)}{\wp_{1,2g-1}(Dv)},\]
where $\wp_{1,2i-1}$ with $1\le i\le g$ are the hyperelliptic functions associated with the curve $\widetilde{C}$ and we set $\wp_{1,-1}(Dv)=-1$. 
\end{prop}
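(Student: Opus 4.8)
The plan is to transport the known inversion formula (Theorem \ref{2024.10.22.1}) for the curve $\widetilde{C}$ back to the curve $V$ via the isomorphism $\zeta$ of Proposition \ref{3}. On $\widetilde{C}$, Theorem \ref{2024.10.22.1} gives $h_k(X_1,\dots,X_g) = (-1)^{k-1}\wp_{1,2k-1}(\tilde u)$ where $\tilde u = \sum_{i=1}^g \int_\infty^{\widetilde{S}_i}\omega$ and $\widetilde{S}_i = \zeta(T_i)$. Under $\zeta$, the points $T_i=(x_i,y_i)$ map to $\widetilde{S}_i = (X_i,Y_i)$ with $X_i = \mathfrak{s}/(x_i-a)$. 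So the first step is a bookkeeping one: express $h_k(X_1,\dots,X_g)$ in terms of $h_j(x_1-a,\dots,x_g-a)$. Since $X_i = \mathfrak{s}/(x_i-a)$, we have the reciprocal-type identity
\[
h_k(X_1,\dots,X_g) = \mathfrak{s}^k\,\frac{h_{g-k}(x_1-a,\dots,x_g-a)}{h_g(x_1-a,\dots,x_g-a)},
\]
which follows from comparing $\prod_i (z - X_i)$ with $\prod_i(z - \mathfrak{s}/(x_i-a))$ and reading off coefficients (equivalently, $e_k$ of the reciprocals equals $e_{g-k}/e_g$). Applying this with $k$ and with $k+1$ and taking the ratio, the $\mathfrak{s}^?$ and $h_g$ factors combine to give
\[
\frac{h_k(x_1-a,\dots,x_g-a)}{h_g(\cdots)} \quad\text{versus}\quad \frac{h_{k+1}}{h_g},
\]
but it is cleaner to rewrite the target formula: dividing $h_k(x-a) = (-\mathfrak{s})^k \wp_{1,2g-2k-1}(Dv)/\wp_{1,2g-1}(Dv)$ by the analogous statement and checking it is equivalent to the reciprocal identity above together with Theorem \ref{2024.10.22.1}.

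The second step is the identification of arguments: I must show that $\tilde u$, the Abel image on $\widetilde{C}$ based at $\infty$, equals $Dv$, where $v = \sum_i \int_{(a,0)}^{T_i}\mu$ is the Abel image on $V$ based at $(a,0)$, and $D$ is the matrix relating the pulled-back holomorphic differentials, ${}^t(\zeta^*\omega_1,\dots,\zeta^*\omega_g) = D\mu$. The point $(a,0)$ on $V$ is a branch point, and $\zeta$ sends it to a point of $\widetilde{C}$: since $x\to a$ gives $X = \mathfrak{s}/(x-a)\to\infty$, the base point $(a,0)$ maps to $\infty$ on $\widetilde{C}$ (one should check via \eqref{1} and \eqref{2025.6.5.5387549100} that $\zeta$ indeed extends to send $(a,0)\mapsto\infty$, using $\mathfrak{s}^{2g+1}/\mathfrak{t}^2 = N'(a)$ to match the leading behavior of $Y$). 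Then, by the change-of-variables/functoriality of the Abel map,
\[
\tilde u = \sum_{i=1}^g\int_\infty^{\zeta(T_i)}\omega = \sum_{i=1}^g\int_{(a,0)}^{T_i}\zeta^*\omega = D\sum_{i=1}^g\int_{(a,0)}^{T_i}\mu = Dv,
\]
which is exactly the substitution appearing in the statement. The normalization $\wp_{1,-1}(Dv) = -1$ handles the $k=g$ case, where $\wp_{1,2g-2k-1} = \wp_{1,-1}$ must play the role of the "$h_0 = 1$" term; this is forced by the reciprocal identity, since $h_g(X) = \mathfrak{s}^g h_0(x-a)/h_g(x-a) = \mathfrak{s}^g/h_g(x-a)$ and $h_g(X) = (-1)^{g-1}\wp_{1,2g-1}(\tilde u)$, so the ratio formula at $k=g$ reads $h_g(x-a) = (-\mathfrak{s})^g \cdot(-1)/\wp_{1,2g-1}(Dv)$ up to the sign convention, confirming the choice $\wp_{1,-1} = -1$.

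Assembling: combine Theorem \ref{2024.10.22.1} applied on $\widetilde{C}$ at the argument $\tilde u = Dv$ with the reciprocal identity for elementary symmetric functions of the $X_i = \mathfrak{s}/(x_i-a)$, divide through by the $k=g$ instance to eliminate $h_g(x_1-a,\dots,x_g-a)$, and track the signs $(-1)^{k-1}$, $\mathfrak{s}^k$ carefully to land on $(-\mathfrak{s})^k$. The main obstacle I anticipate is \emph{not} the symmetric-function algebra (routine) but verifying that $\zeta$ extends holomorphically over the points at infinity and over $(a,0)$, and in particular that the base point $(a,0)$ on $V$ corresponds under $\zeta$ precisely to the base point $\infty$ used in defining the sigma function (hence the $\wp$'s) on $\widetilde{C}$ — i.e. that the scaling constants $\mathfrak{s},\mathfrak{t}$ with $\mathfrak{s}^{2g+1}/\mathfrak{t}^2 = N'(a)$ are exactly what makes $\widetilde{M}(X) = \prod(X - \mathfrak{s}/(a_i-a))$ match $Y^2$ after the substitution, with the correct monic leading coefficient so that $\infty$ on $\widetilde C$ is a single branch point. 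This is essentially contained in Proposition \ref{3}, so the real work is just to confirm the base-point correspondence and then the differential pullback identity $\zeta^*\omega_i = \sum_j D_{ij}\mu_j$ is what defines $D$, making the argument-matching automatic.
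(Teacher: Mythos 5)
Your proposal is correct and follows essentially the same route as the paper's proof: identify $\sum_i\int_\infty^{\zeta(T_i)}\omega=Dv$ via the pullback relation ${}^t(\zeta^*\omega_1,\dots,\zeta^*\omega_g)=D\mu$ and the base-point correspondence $(a,0)\mapsto\infty$, apply Theorem \ref{2024.10.22.1} at $Dv$, and finish with the reciprocal identity for elementary symmetric functions of $X_i=\mathfrak{s}/(x_i-a)$, which fixes the convention $\wp_{1,-1}(Dv)=-1$ from the $k=g$ case. The paper states the reciprocal identity in the direction $h_k(x_1-a,\dots,x_g-a)=\mathfrak{s}^k h_{g-k}(X_1,\dots,X_g)/h_g(X_1,\dots,X_g)$ and substitutes directly, but this is only a bookkeeping difference from your version.
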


\begin{proof}
For $1\le i\le g$, let $(X_i,Y_i)=\zeta\bigl((x_i,y_i)\bigr)$. We have 
\[v=\sum_{i=1}^g\int_{\infty}^{(X_i,Y_i)}D^{-1}\omega=D^{-1}\sum_{i=1}^g\int_{\infty}^{(X_i,Y_i)}\omega.\]
Thus, we have 
\[\sum_{i=1}^g\int_{\infty}^{(X_i,Y_i)}\omega=Dv.\]
From Theorem \ref{2024.10.22.1}, we have 
\[h_k(X_1,\dots,X_g)=(-1)^{k-1}\wp_{1, 2k-1}(Dv),\qquad 0\le k\le g.\]
From Proposition \ref{3}, for $1\le k\le g$, we have 
\begin{align*}
h_k(x_1-a,\dots,x_g-a)&=\mathfrak{s}^k\frac{h_{g-k}(X_1,\dots,X_g)}{h_g(X_1,\dots,X_g)}\\
&=(-\mathfrak{s})^k\frac{\wp_{1, 2g-2k-1}(Dv)}{\wp_{1, 2g-1}(Dv)}.
\end{align*}

\end{proof}

\begin{rem}
Proposition $\ref{55555}$ is a generalization of \cite[Sections 20.6 and 20.7]{WW} for $g=1$ to any $g$. 
\end{rem}

\section{Relationships between the hyperelliptic functions associated with the curves $V$ and $\widetilde{C}$}\label{2024.12.7.1123}

We consider the hyperelliptic functions $\wp_{i,j}$ associated with the curve $\widetilde{C}$. 
For $0\le i\le 2g+1$, we define $\widetilde{\lambda}_{2i}\in\mathbb{C}$ such that the following equality holds:
\[\widetilde{M}(X)=\widetilde{\lambda}_0X^{2g+1}+\widetilde{\lambda}_2X^{2g}+\widetilde{\lambda}_4X^{2g-1}+\cdots+\widetilde{\lambda}_{4g}X+\widetilde{\lambda}_{4g+2}.\]
Let 
\[\widetilde{R}(X)=(X-X_1)\cdots(X-X_g),\]
where $X_1,\dots, X_g$ are defined in Section \ref{2024.11.4.1}. For variables $\widetilde{e}_1,\widetilde{e}_2$, we set 
\begin{gather*}
\widetilde{\nabla}=\sum_{i=1}^g\frac{Y_i}{(\widetilde{e}_1-X_i)(\widetilde{e}_2-X_i)\widetilde{R}'(X_i)},\\
\widetilde{f}(\widetilde{e}_1,\widetilde{e}_2)=\sum_{i=0}^{g}(\widetilde{e}_1)^i(\widetilde{e}_2)^i\left\{2\widetilde{\lambda}_{4g+2-4i}+\widetilde{\lambda}_{4g-4i}(\widetilde{e}_1+\widetilde{e}_2)\right\}.
\end{gather*}

\begin{lem}[{\cite[p.~146]{B}}]\label{2024.11.28.1}
We have 
\[\widetilde{f}(\widetilde{e}_1,\widetilde{e}_1)=2\widetilde{M}(\widetilde{e}_1),\qquad \left.\frac{\partial \widetilde{f}}{\partial \widetilde{e}_2}\right|_{\widetilde{e}_2=\widetilde{e}_1}=\left.\frac{d\widetilde{M}}{dX}\right|_{X=\widetilde{e}_1}.\]
\end{lem}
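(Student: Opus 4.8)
The plan is to mimic exactly the structure of Lemma~\ref{2024.10.24.1} (and its analogue Lemma~\ref{2024.11.28.1} is simply the same statement transported to the curve $\widetilde{C}$), so the proof should be a direct computation with the explicitly given polynomial $\widetilde{f}$. Recall that
\[
\widetilde{f}(\widetilde{e}_1,\widetilde{e}_2)=\sum_{i=0}^{g}(\widetilde{e}_1)^i(\widetilde{e}_2)^i\bigl\{2\widetilde{\lambda}_{4g+2-4i}+\widetilde{\lambda}_{4g-4i}(\widetilde{e}_1+\widetilde{e}_2)\bigr\},
\]
where $\widetilde{M}(X)=\sum_{i=0}^{2g+1}\widetilde{\lambda}_{2i}X^{2g+1-i}$ and, as in the convention around~(\ref{2024.10.21.345}), $\widetilde{\lambda}_0=1$ and $\widetilde{\lambda}_k=0$ for $k<0$.

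First I would set $\widetilde{e}_2=\widetilde{e}_1=:\widetilde{e}$ in the sum. Each summand contributes $(\widetilde{e})^{2i}\bigl\{2\widetilde{\lambda}_{4g+2-4i}+2\widetilde{\lambda}_{4g-4i}\widetilde{e}\bigr\}=2\widetilde{\lambda}_{4g+2-4i}\widetilde{e}^{2i}+2\widetilde{\lambda}_{4g-4i}\widetilde{e}^{2i+1}$. Summing over $i=0,\dots,g$ produces a polynomial in which the coefficient of $\widetilde{e}^{2g+1-k}$ is $2\widetilde{\lambda}_{2k}$ for every $k$ in range: the even powers $\widetilde{e}^{2i}$ run through the terms $\widetilde{\lambda}_{4g+2-4i}$ and the odd powers $\widetilde{e}^{2i+1}$ run through the terms $\widetilde{\lambda}_{4g-4i}$, and together these two arithmetic progressions of subscripts exhaust all of $\{0,2,4,\dots,4g+2\}$ exactly once. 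Hence $\widetilde{f}(\widetilde{e},\widetilde{e})=2\widetilde{M}(\widetilde{e})$, which is the first identity. The only thing to check carefully is the bookkeeping at the endpoints $i=0$ (giving $2\widetilde{\lambda}_{4g+2}+2\widetilde{\lambda}_{4g}\widetilde{e}$, the two lowest-degree terms of $2\widetilde{M}$) and $i=g$ (giving $2\widetilde{\lambda}_{2}\widetilde{e}^{2g}+2\widetilde{\lambda}_0\widetilde{e}^{2g+1}$, the two top terms), so that no subscript is used twice and none is omitted.

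Next I would compute $\partial\widetilde{f}/\partial\widetilde{e}_2$. Differentiating the general summand,
\[
\frac{\partial}{\partial\widetilde{e}_2}\Bigl[(\widetilde{e}_1)^i(\widetilde{e}_2)^i\bigl\{2\widetilde{\lambda}_{4g+2-4i}+\widetilde{\lambda}_{4g-4i}(\widetilde{e}_1+\widetilde{e}_2)\bigr\}\Bigr]
=(\widetilde{e}_1)^i\Bigl[i(\widetilde{e}_2)^{i-1}\bigl\{2\widetilde{\lambda}_{4g+2-4i}+\widetilde{\lambda}_{4g-4i}(\widetilde{e}_1+\widetilde{e}_2)\bigr\}+(\widetilde{e}_2)^i\widetilde{\lambda}_{4g-4i}\Bigr].
\]
Setting $\widetilde{e}_2=\widetilde{e}_1=\widetilde{e}$ gives $2i\widetilde{\lambda}_{4g+2-4i}\widetilde{e}^{2i-1}+(2i+1)\widetilde{\lambda}_{4g-4i}\widetilde{e}^{2i}$. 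Summing over $i$ and matching powers against $d\widetilde{M}/dX=\sum_{k}(2g+1-k)\widetilde{\lambda}_{2k}X^{2g-k}$: the term $2i\widetilde{\lambda}_{4g+2-4i}\widetilde{e}^{2i-1}$ has $4g+2-4i=2k$ with exponent $2i-1=2g-k$, i.e.\ $2i=2g+1-k$, so the coefficient is $2i=2g+1-k$, matching; the term $(2i+1)\widetilde{\lambda}_{4g-4i}\widetilde{e}^{2i}$ has $4g-4i=2k$ with exponent $2i=2g-k$, so the coefficient is $2i+1=2g-k+1$, again matching. Thus $\partial\widetilde{f}/\partial\widetilde{e}_2|_{\widetilde{e}_2=\widetilde{e}_1}=d\widetilde{M}/dX|_{X=\widetilde{e}_1}$.

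I do not expect any genuine obstacle here; the statement is purely formal and this is just the degree-$(2g+1)$ version of Lemma~\ref{2024.10.24.1}, with the polynomial having odd degree and no $\nu_0$-type leading oddity because $\widetilde{M}$ is monic of odd degree. The one mildly delicate point, worth a sentence in the write-up, is the index convention $\widetilde{\lambda}_0=1$, $\widetilde{\lambda}_k=0$ for $k<0$: since $\widetilde{f}$ as written only runs $i$ from $0$ to $g$ (not to $g+1$ as for the genus-$(g+1)$-flavoured $f$ of Lemma~\ref{2024.10.24.1}), there are no vanishing boundary terms to discard, and the two identities follow by the index-matching just described. Alternatively, and perhaps most cleanly for the paper, one can simply note that $\widetilde{M}$ has degree $2g+1$ exactly like $M(X)$ in Section~\ref{2024.10.21.1}, so $\widetilde{f}$ is built from $\widetilde{M}$ by the same recipe that would produce the analogue of $f$ for the curve $\widetilde{C}$, and hence Lemma~\ref{2024.11.28.1} is Lemma~\ref{2024.10.24.1} verbatim with $N,f,\nu_i$ replaced by $\widetilde{M},\widetilde{f},\widetilde{\lambda}_i$; this reduces the proof to citing the already-established Lemma~\ref{2024.10.24.1}.
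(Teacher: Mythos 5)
Your computation is correct: the index bookkeeping in both identities checks out (the subscripts $4g+2-4i$ and $4g-4i$ for $i=0,\dots,g$ together run exactly once over $\{0,2,\dots,4g+2\}$, and the derivative coefficients $2i$ and $2i+1$ match $2g+1-k$ as you verify). The paper itself gives no proof of this lemma --- it is stated with a citation to Baker, just like Lemma \ref{2024.10.24.1} --- so your direct verification fills in an omitted argument rather than diverging from one. One small caveat on your closing remark: the reduction to Lemma \ref{2024.10.24.1} is not quite ``verbatim,'' since $f$ sums over $i=0,\dots,g+1$ with subscripts $4g+4-4i$, $4g+2-4i$ (degree $2g+2$) while $\widetilde{f}$ sums over $i=0,\dots,g$ with subscripts shifted down by $2$ (degree $2g+1$); the argument is the same in structure but the index ranges differ, so the explicit computation you give first is the cleaner route.
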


\begin{theorem}[{\cite[p.~138]{B}}, {\cite[pp.~328, 329]{B2}}, {\cite[p.~4729]{M}}]\label{2024.11.5.1}
The following equality holds$:$
\begin{gather*}
\sum_{i,j=1}^g\wp_{2g+1-2i, 2g+1-2j}(Dv)(\widetilde{e}_1)^{i-1}(\widetilde{e}_2)^{j-1}\\
=\frac{\widetilde{f}(\widetilde{e}_1,\widetilde{e}_2)}{(\widetilde{e}_1-\widetilde{e}_2)^2}+\widetilde{R}(\widetilde{e}_1)\widetilde{R}(\widetilde{e}_2)\widetilde{\nabla}^2
-\frac{1}{(\widetilde{e}_1-\widetilde{e}_2)^2}\left(\frac{\widetilde{M}(\widetilde{e}_1)\widetilde{R}(\widetilde{e}_2)}{\widetilde{R}(\widetilde{e}_1)}+\frac{\widetilde{M}(\widetilde{e}_2)\widetilde{R}(\widetilde{e}_1)}{\widetilde{R}(\widetilde{e}_2)}\right).
\end{gather*}
\end{theorem}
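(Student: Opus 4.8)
The plan is to derive this identity for the curve $\widetilde{C}$ as a specialization of the fundamental formula already available for a hyperelliptic curve with one infinite point. The curve $\widetilde{C}$ is defined by $Y^2=\widetilde{M}(X)$ with $\widetilde{M}$ monic of degree $2g+1$ (since $\widetilde\lambda_0=1$ by \eqref{2025.6.5.5387549100}), so it has the shape of the curve $C$ from Section \ref{2024.10.21.1}. For such curves the two-index Kleinian $\wp$-functions satisfy the well-known ``$\wp$-formula'' packaging Theorem \ref{2024.11.20.1}(ii) into generating-function form: if $(X_1,Y_1),\dots,(X_g,Y_g)$ are points with $\sum_i\int_\infty^{(X_i,Y_i)}\omega=Dv$, and $\widetilde R(X)=\prod_{i=1}^g(X-X_i)$, then
\[
\sum_{i,j=1}^g\wp_{2g+1-2i,\,2g+1-2j}(Dv)(\widetilde e_1)^{i-1}(\widetilde e_2)^{j-1}
=\frac{F(\widetilde e_1,\widetilde e_2)}{(\widetilde e_1-\widetilde e_2)^2\widetilde R(\widetilde e_1)\widetilde R(\widetilde e_2)},
\]
where $F$ is built from $\widetilde f$, $\widetilde M$, $\widetilde R$ and the analogue $\widetilde\nabla$ of $\nabla$, exactly as in the references cited in the theorem's attribution. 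This is the statement appearing essentially verbatim in \cite[p.~138]{B}, \cite[pp.~328,329]{B2}, and \cite[p.~4729]{M}. So the first step is simply to recall that formula and identify its right-hand side with the claimed expression.

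Second, I would verify that the right-hand side as written is the correct rewriting of $F(\widetilde e_1,\widetilde e_2)/\{(\widetilde e_1-\widetilde e_2)^2\widetilde R(\widetilde e_1)\widetilde R(\widetilde e_2)\}$. Expanding the definition of $F$ in the one-point case,
\[
F(\widetilde e_1,\widetilde e_2)=\widetilde f(\widetilde e_1,\widetilde e_2)\widetilde R(\widetilde e_1)\widetilde R(\widetilde e_2)+(\widetilde e_1-\widetilde e_2)^2\widetilde R(\widetilde e_1)^2\widetilde R(\widetilde e_2)^2\widetilde\nabla^2-\widetilde M(\widetilde e_1)\widetilde R(\widetilde e_2)^2-\widetilde M(\widetilde e_2)\widetilde R(\widetilde e_1)^2,
\]
and dividing termwise by $(\widetilde e_1-\widetilde e_2)^2\widetilde R(\widetilde e_1)\widetilde R(\widetilde e_2)$ yields precisely the three summands on the right of the theorem. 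One must of course check that $F$ is genuinely divisible by $(\widetilde e_1-\widetilde e_2)^2\widetilde R(\widetilde e_1)\widetilde R(\widetilde e_2)$, so that the $\wp$-functions so defined are holomorphic; this is the exact analogue of the two divisibility lemmas proved for $F(e_1,e_2)$ in Section \ref{2024.10.25.1} (vanishing of $F$ at $\widetilde e_1=X_i$ and at $\widetilde e_1=\widetilde e_2$ together with $\partial_{\widetilde e_2}F|_{\widetilde e_2=\widetilde e_1}=0$), using Lemma \ref{2024.11.28.1} in place of Lemma \ref{2024.10.24.1}. The degree bookkeeping — $\widetilde f$ has degree $g$ in each variable here (not $g+1$), because $\widetilde\lambda_{4g+4}$ does not exist — is what makes the quotient a polynomial of the right size, degree $g-1$ in each $\widetilde e_i$, matching the double sum on the left.

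Third, I would pin down why the arguments of the $\wp$-functions are $Dv$ rather than $v$: by the very definition of $D$ in Section \ref{2024.11.4.1}, $\zeta^*\omega_i=\sum_j D_{ij}\mu_j$, and the computation already carried out in the proof of Proposition \ref{55555} shows $\sum_{i=1}^g\int_\infty^{(X_i,Y_i)}\omega=Dv$ where $(X_i,Y_i)=\zeta(T_i)$. So applying the one-point $\wp$-formula to the tuple $(X_i,Y_i)$ on $\widetilde C$, with its Abel image $Dv$, gives exactly the left-hand side of the theorem. The only genuine work, and the place I would be most careful, is the divisibility/holomorphy check and the matching of the polynomial $\widetilde f$ (degree $g$, coefficients $2\widetilde\lambda_{4g+2-4i}+\widetilde\lambda_{4g-4i}(\widetilde e_1+\widetilde e_2)$) against the standard ``second-kind'' generating polynomial attached to $\widetilde M$; everything else is a direct quotation of the classical result together with the change-of-variables already established. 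Since the statement is explicitly attributed to \cite{B}, \cite{B2}, \cite{M}, the cleanest exposition is to cite those sources for the one-point $\wp$-formula and then perform only the short verification that, after the substitution $X=\mathfrak{s}/(x-a)$ and the normalization by $D$, it takes the displayed form.
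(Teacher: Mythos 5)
The paper gives no proof of this theorem --- it is quoted directly from Baker \cite{B}, \cite{B2} and Matsutani \cite{M} --- and your plan (invoke the classical one-infinite-point generating formula for the two-index $\wp$-functions, verify divisibility of the numerator by $(\widetilde{e}_1-\widetilde{e}_2)^2\widetilde{R}(\widetilde{e}_1)\widetilde{R}(\widetilde{e}_2)$ via the analogues of the two divisibility lemmas of Section 4 together with Lemma \ref{2024.11.28.1}, and identify the Abel image as $Dv$ exactly as in the proof of Proposition \ref{55555}) is precisely the content of those citations, so your route agrees with the paper's. Two small corrections: the generating-function inversion formula is the two-index refinement of Theorem \ref{2024.10.22.1} rather than a repackaging of the quadratic four-index relations of Theorem \ref{2024.11.20.1}(ii); and $\widetilde{f}$ has degree $g+1$, not $g$, in each variable, since its $i=g$ term contains $\widetilde{\lambda}_0(\widetilde{e}_1)^{g+1}(\widetilde{e}_2)^{g}$ --- the quotient is still of degree $g-1$ because the numerator's dominant term $\widetilde{M}(\widetilde{e}_1)\widetilde{R}(\widetilde{e}_2)^2$ has degree $2g+1$ in $\widetilde{e}_1$ while the denominator has degree $g+2$.
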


\begin{prop}\label{2024.11.30.9065376}
$(\mathrm{i})$ For $g^2$ variables $z=\{z_{2k-1,2l-1}\}_{k,l=1}^g$, there exist $g^2$ polynomials $K_{i,j}(z)\in\mathbb{C}[z]$ with $1\le i,j\le g$ such that the degree of $K_{i,j}(z)$ is $1$ in the variables $z$ and 
\[\mathcal{P}_{2g+2-2i,2g+2-2j}(v)=K_{i,j}\Bigl(\bigl\{\wp_{2k-1, 2l-1}(Dv)\bigr\}_{k,l=1}^g\Bigr),\qquad 1\le i,j\le g,\]
where in this equality we substitute $\wp_{2k-1,2l-1}(Dv)$ into $z_{2k-1,2l-1}$ for any $k$ and $l$. 

\vspace{1ex}

\noindent$(\mathrm{ii})$ If $a=0$, then the $(i,j)$ element of $D$ is $(\mathfrak{s}^{g+1-i}/\mathfrak{t})\delta_{i,j}$ for $1\le i,j\le g$ and we have 
\[\mathcal{P}_{2g+2-2i,2g+2-2j}(v)=\mathfrak{s}^{2g-i-j+2}\mathfrak{t}^{-2}\wp_{2i-1,2j-1}(Dv),\qquad 1\le i,j\le g.\]

\vspace{1ex}

\noindent$(\mathrm{iii})$ If $a=0$, $\nu_{4g+2}=1$, and $\mathfrak{s}=1$, then we have 
\[\mathcal{P}_{2g+2-2i,2g+2-2j}(v)=\wp_{2i-1,2j-1}(v),\qquad 1\le i,j\le g.\]
\end{prop}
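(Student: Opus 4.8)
The plan is to transport the isomorphism $\zeta$ of Proposition~\ref{3} through the two ``fundamental'' generating functions in play: the polynomial $G(e_1,e_2)$ that defines the $\mathcal P_{i,j}$ (Definition~\ref{2024.11.29.12345}), and the right-hand side of Baker's formula in Theorem~\ref{2024.11.5.1}, which generates the $\wp_{2k-1,2l-1}(Dv)$ on $\widetilde C$. I work on the symmetric-product side. For $1\le i\le g$ put $(X_i,Y_i)=\zeta(T_i)$; exactly as in the proof of Proposition~\ref{55555}, $\sum_i\int_\infty^{(X_i,Y_i)}\omega=Dv$, so Theorem~\ref{2024.11.5.1} evaluates the $\wp$-side at precisely the argument $Dv$ occurring in the statement. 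The claim to be established is that the substitution $e_m=a+\mathfrak s/\widetilde e_m$ carries $G$ into a scalar multiple of Baker's right-hand side modulo a symmetric polynomial with constant coefficients.

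First I would record the elementary consequences of (\ref{1}): $x_i-a=\mathfrak s/X_i$, $\;y_i=\mathfrak s^{\,g+1}Y_i\mathfrak t^{-1}X_i^{-g-1}$, and $\widetilde M(X)=\mathfrak t^2N(a+\mathfrak s/X)X^{2g+2}\mathfrak s^{-2g-2}$. Under $e_m=a+\mathfrak s/\widetilde e_m$ these give $e_1-e_2=\mathfrak s(\widetilde e_2-\widetilde e_1)(\widetilde e_1\widetilde e_2)^{-1}$, $\;R(e_m)=(-1)^g\mathfrak s^{\,g+1}\widetilde R(\widetilde e_m)\widetilde e_m^{-g-1}(\prod_jX_j)^{-1}$, $\;N(e_m)=\mathfrak s^{\,2g+2}\widetilde M(\widetilde e_m)\mathfrak t^{-2}\widetilde e_m^{-2g-2}$, and, using $R'(x_i)=(x_i-a)\prod_{j\ne i}(x_i-x_j)$, $\;\nabla=(-1)^{g-1}\widetilde e_1\widetilde e_2(\prod_jX_j)(\mathfrak t\mathfrak s)^{-1}\widetilde\nabla$. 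Substituting into $G=f(e_1,e_2)(e_1-e_2)^{-2}+R(e_1)R(e_2)\nabla^2-(e_1-e_2)^{-2}\bigl(N(e_1)R(e_2)R(e_1)^{-1}+N(e_2)R(e_1)R(e_2)^{-1}\bigr)$, the last two terms become $(\mathfrak s^{2g}/\mathfrak t^2)(\widetilde e_1\widetilde e_2)^{1-g}$ times the corresponding terms of Theorem~\ref{2024.11.5.1}. For the first term one sets $\widehat f(\widetilde e_1,\widetilde e_2):=(\widetilde e_1\widetilde e_2)^{g+1}f(a+\mathfrak s/\widetilde e_1,a+\mathfrak s/\widetilde e_2)$, a symmetric polynomial of degree $g+1$ in each variable; by Lemma~\ref{2024.10.24.1}, Lemma~\ref{2024.11.28.1}, and $\mathfrak s^{2g+1}/\mathfrak t^2=N'(a)$ the polynomials $\widehat f/\mathfrak s^2$ and $(\mathfrak s^{2g}/\mathfrak t^2)\widetilde f$ agree on the diagonal $\widetilde e_1=\widetilde e_2$ together with their first $\widetilde e_2$-derivatives, so (cf.\ Lemma~\ref{2024.11.27.1}) their difference equals $(\widetilde e_1-\widetilde e_2)^2\Delta(\widetilde e_1,\widetilde e_2)$ for a symmetric polynomial $\Delta$ of degree $\le g-1$ in each variable with constant coefficients. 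Collecting everything,
\[
(\widetilde e_1\widetilde e_2)^{g-1}\,G(a+\mathfrak s/\widetilde e_1,\,a+\mathfrak s/\widetilde e_2)=\frac{\mathfrak s^{2g}}{\mathfrak t^2}\sum_{k,l=1}^{g}\wp_{2g+1-2k,\,2g+1-2l}(Dv)\,\widetilde e_1^{\,k-1}\widetilde e_2^{\,l-1}+\Delta(\widetilde e_1,\widetilde e_2).
\]
Undoing the substitution (multiply by $(e_1-a)^{g-1}(e_2-a)^{g-1}\mathfrak s^{2-2g}$, set $\widetilde e_m=\mathfrak s/(e_m-a)$) and reading off the coefficient of $e_1^{\,i-1}e_2^{\,j-1}$ — which is $P_{2g+2-2i,2g+2-2j}$ — shows that $\mathcal P_{2g+2-2i,2g+2-2j}(v)$ is an affine, hence degree-$1$, polynomial $K_{i,j}$ in the $\wp_{2k-1,2l-1}(Dv)$; this is (i).

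For (ii), with $a=0$ the form (\ref{1}) gives at once $\zeta^*(\omega_i)=(\mathfrak s^{\,g+1-i}/\mathfrak t)\mu_i$, so $D$ is diagonal with $D_{ij}=(\mathfrak s^{\,g+1-i}/\mathfrak t)\delta_{ij}$; moreover $a=0$ forces $\nu_{4g+4}=N(0)=0$, and expanding $\widehat f/\mathfrak s^2$ in the monomials $(\widetilde e_1\widetilde e_2)^j$ and $(\widetilde e_1\widetilde e_2)^j(\widetilde e_1+\widetilde e_2)$ and comparing with $\widetilde f$ (using $\mathfrak s^{2g+1}/\mathfrak t^2=N'(0)=\nu_{4g+2}$ and the relation between the $\widetilde\lambda_{2m}$ and the $\nu_{2m}$ coming from Proposition~\ref{3}) shows $\widehat f/\mathfrak s^2=(\mathfrak s^{2g}/\mathfrak t^2)\widetilde f$ exactly, i.e.\ $\Delta\equiv0$. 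The coefficient extraction then collapses to the single term $k=g+1-i$, $l=g+1-j$ and yields $\mathcal P_{2g+2-2i,2g+2-2j}(v)=\mathfrak s^{2g-i-j+2}\mathfrak t^{-2}\wp_{2i-1,2j-1}(Dv)$. For (iii), $\mathfrak s=1$ and $\nu_{4g+2}=1$ force $\mathfrak t^2=\mathfrak s^{2g+1}/N'(0)=1$, so $\mathfrak s^{2g-i-j+2}\mathfrak t^{-2}=1$ and $D=(1/\mathfrak t)I=\pm I$; since $\wp_{2i-1,2j-1}$ is an even function of its argument (it is a second logarithmic derivative of $\sigma$, which satisfies $\sigma(-u)=\pm\sigma(u)$), $\wp_{2i-1,2j-1}(Dv)=\wp_{2i-1,2j-1}(v)$, giving (iii).

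The step I expect to be the main obstacle is the bookkeeping in the change of variables: tracking the powers of $\mathfrak s$ and $\widetilde e_m$ and the signs, and — the delicate point — verifying that the Laurent-type expressions produced by $e_m=a+\mathfrak s/\widetilde e_m$ have all their negative powers of $\widetilde e_m$ cancel, so that one genuinely lands on polynomials of degree $\le g-1$ in each variable and the scalar $\mathfrak s^{2g}/\mathfrak t^2$ factors out of all three terms of $G$ uniformly. Once this is in place, (i) is coefficient extraction, (ii) reduces to the vanishing of $\Delta$ plus a direct identification of $D$, and (iii) is a one-line specialization using the evenness of $\wp_{2i-1,2j-1}$.
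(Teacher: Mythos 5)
Your argument is correct and is essentially the paper's own proof: both hinge on the substitution $\widetilde e_m=\mathfrak s/(e_m-a)$ from Proposition~\ref{3} relating the generating function $G(e_1,e_2)$ to the right-hand side of Theorem~\ref{2024.11.5.1}, with the discrepancy between the two ``$f$''-polynomials controlled by the diagonal conditions and the uniqueness statement of Lemma~\ref{2024.11.27.1}, and with parts (ii)--(iii) obtained from the vanishing of that discrepancy when $a=0$, the diagonal form of $D$, and the evenness of $\wp_{2i-1,2j-1}$. The only cosmetic difference is the direction of transport: you push $G$ over to the $\widetilde C$ side (so you need the analogue of Lemma~\ref{2024.11.27.1} for $\widetilde M$, which you correctly flag with ``cf.''), whereas the paper pulls Baker's formula back to the $V$ side, where Lemma~\ref{2024.11.27.1} applies verbatim.
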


\begin{proof}
We substitute $\widetilde{e}_i=\mathfrak{s}/(e_i-a)$ for $i=1,2$ into the both sides of the equality in Theorem \ref{2024.11.5.1} and multiply this equality by $(e_1-a)^{g-1}(e_2-a)^{g-1}$. 
We have 
\begin{equation}
(e_1-a)^{g-1}(e_2-a)^{g-1}\frac{\widetilde{f}(\widetilde{e}_1,\widetilde{e}_2)}{(\widetilde{e}_1-\widetilde{e}_2)^2}=\mathfrak{s}^{-2}\mathfrak{t}^2\frac{\overline{f}(e_1,e_2)}{(e_1-e_2)^2},\label{2024.11.30.1}
\end{equation}
where 
\[\overline{f}(e_1,e_2)=\mathfrak{t}^{-2}(e_1-a)^{g+1}(e_2-a)^{g+1}\widetilde{f}\left(\frac{\mathfrak{s}}{e_1-a}, \frac{\mathfrak{s}}{e_2-a}\right).\]
The polynomial $\overline{f}(e_1,e_2)$ is a symmetric polynomial in $e_1$ and $e_2$. 
From $\widetilde{\lambda}_{4g+2}\neq0$, the degree of $\overline{f}(e_1,e_2)$ is $g+1$ in each variable. 
From (\ref{2025.6.5.5387549100}), we have 
\begin{equation}
(e_1-a)^{2g+2}\widetilde{M}\left(\frac{\mathfrak{s}}{e_1-a}\right)=\mathfrak{t}^2N(e_1).\label{2024.11.28.1594}
\end{equation}
From Lemma \ref{2024.11.28.1} and (\ref{2024.11.28.1594}), we have $\overline{f}(e_1,e_1)=2N(e_1)$. 
By differentiating the both sides of (\ref{2024.11.28.1594}) with respect to $e_1$, we have 
\begin{equation}
(2g+2)(e_1-a)^{2g+1}\widetilde{M}\left(\frac{\mathfrak{s}}{e_1-a}\right)-\mathfrak{s}(e_1-a)^{2g}\widetilde{M}'\left(\frac{\mathfrak{s}}{e_1-a}\right)=\mathfrak{t}^2N'(e_1).\label{2024.11.28.209}
\end{equation}
We have 
\begin{align*}
\frac{\partial\overline{f}(e_1,e_2)}{\partial e_2}&=(g+1)\mathfrak{t}^{-2}(e_1-a)^{g+1}(e_2-a)^{g}\widetilde{f}\left(\frac{\mathfrak{s}}{e_1-a}, \frac{\mathfrak{s}}{e_2-a}\right)\\
&\quad -\mathfrak{s}\mathfrak{t}^{-2}(e_1-a)^{g+1}(e_2-a)^{g-1}\frac{\partial \widetilde{f}}{\partial \widetilde{e}_2}\left(\frac{\mathfrak{s}}{e_1-a}, \frac{\mathfrak{s}}{e_2-a}\right). 
\end{align*}
From Lemma \ref{2024.11.28.1} and (\ref{2024.11.28.209}), we have 
\[\left.\frac{\partial \overline{f}}{\partial e_2}\right|_{e_2=e_1}=N'(e_1).\]
From Lemma \ref{2024.11.27.1}, there exist complex numbers $\left\{\mathfrak{n}_{i,j}\right\}_{i,j=1}^g$ such that $\mathfrak{n}_{i,j}=\mathfrak{n}_{j,i}$ and
\[\overline{f}(e_1,e_2)=f(e_1,e_2)+(e_1-e_2)^2\sum_{i,j=1}^g\mathfrak{n}_{i,j}e_1^{i-1}e_2^{j-1}.\]
From (\ref{1}), we can check
\begin{gather}
(e_1-a)^{g-1}(e_2-a)^{g-1}\widetilde{R}(\widetilde{e}_1)\widetilde{R}(\widetilde{e}_2)\widetilde{\nabla}^2=\mathfrak{s}^{-2}\mathfrak{t}^2R(e_1)R(e_2)\nabla^2,\label{2024.11.30.2}\\
(e_1-a)^{g-1}(e_2-a)^{g-1}\frac{\widetilde{M}(\widetilde{e}_1)\widetilde{R}(\widetilde{e}_2)}{(\widetilde{e}_1-\widetilde{e}_2)^2\widetilde{R}(\widetilde{e}_1)}=\mathfrak{s}^{-2}\mathfrak{t}^2\frac{N(e_1)R(e_2)}{(e_1-e_2)^2R(e_1)},\label{2024.11.30.3}\\
(e_1-a)^{g-1}(e_2-a)^{g-1}\frac{\widetilde{M}(\widetilde{e}_2)\widetilde{R}(\widetilde{e}_1)}{(\widetilde{e}_1-\widetilde{e}_2)^2\widetilde{R}(\widetilde{e}_2)}=\mathfrak{s}^{-2}\mathfrak{t}^2\frac{N(e_2)R(e_1)}{(e_1-e_2)^2R(e_2)}.\label{2024.11.30.4}
\end{gather}
From Theorem \ref{2024.11.5.1}, for $g^2$ variables $z=\{z_{2k-1,2l-1}\}_{k,l=1}^g$, there exist $g^2$ polynomials $K_{i,j}(z)\in\mathbb{C}[z]$ with $1\le i,j\le g$ such that the degree of $K_{i,j}(z)$ is $1$ in the variables $z$ and 
\[\sum_{i,j=1}^gK_{i,j}\Bigl(\bigl\{\wp_{2k-1, 2l-1}(Dv)\bigr\}_{k,l=1}^g\Bigr)e_1^{i-1}e_2^{j-1}=G(e_1,e_2).\]
From Definition \ref{2024.11.29.12345}, we obtain the statement of (i). 
Let us consider the case $a=0$. 
The $(i,j)$ element of $D$ is $(\mathfrak{s}^{g+1-i}/\mathfrak{t})\delta_{i,j}$ for $1\le i,j\le g$. 
We have 
\begin{gather}
\begin{split}
&e_1^{g-1}e_2^{g-1}\sum_{i,j=1}^g\wp_{2g+1-2i, 2g+1-2j}(Dv)(\widetilde{e}_1)^{i-1}(\widetilde{e}_2)^{j-1}\\
&=\sum_{i,j=1}^g\mathfrak{s}^{2g-i-j}\wp_{2i-1, 2j-1}(Dv)e_1^{i-1}e_2^{j-1}.\label{2024.11.30.10984}
\end{split}
\end{gather}
The polynomial $\overline{f}(e_1,e_2)$ has the form  
\[\overline{f}(e_1,e_2)=\sum_{i=0}^{g+1}e_1^ie_2^i\bigl\{2\rho_{4g+4-4i}+\rho_{4g+2-4i}(e_1+e_2)\bigr\},\qquad \rho_j\in\mathbb{C},\quad \rho_{-2}=0.\]
From $\overline{f}(e_1,e_1)=2N(e_1)$, we have $\rho_j=\nu_j$ for any $j$. 
Thus, we have $\overline{f}(e_1,e_2)=f(e_1,e_2)$. 
From Theorem \ref{2024.11.5.1}, (\ref{2024.11.30.1}), and (\ref{2024.11.30.2})--(\ref{2024.11.30.10984}), we have 
\[\sum_{i,j=1}^g\mathfrak{s}^{2g-i-j}\wp_{2i-1, 2j-1}(Dv)e_1^{i-1}e_2^{j-1}=\mathfrak{s}^{-2}\mathfrak{t}^2G(e_1,e_2).\]
From Definition \ref{2024.11.29.12345}, we obtain the statement of (ii). 
Let us consider the case $a=0$, $\nu_{4g+2}=1$, and $\mathfrak{s}=1$. 
We have $N'(0)=\nu_{4g+2}=1$. Thus, we have $\mathfrak{t}^2=1$. 
For $1\le i,j\le g$, the function $\wp_{2i-1,2j-1}$ is an even function. 
From (ii), we obtain the statement of (iii). 
\end{proof}

\begin{rem}
Since we arranged the holomorphic 1-forms on $V$ in the form (\ref{2025.6.1.457637532final}), if $a=0$, the matrix $D$ is a diagonal matrix. 
Thus, the formulas in Proposition \ref{2024.11.30.9065376} (ii) and (iii) are simple.
\end{rem}

\begin{cor}
We have 
\[\mathcal{P}_{2\cdot2}(v)=\mathfrak{s}^2\mathfrak{t}^{-2}\wp_{(2g-1)\cdot2}(Dv)-\kappa,\]
where $\kappa=\mathfrak{t}^{-2}\bigl\{a^2g(g+1)\widetilde{\lambda}_{4g+2}-ag\mathfrak{s}\widetilde{\lambda}_{4g}\bigr\}$. 
\end{cor}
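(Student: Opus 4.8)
The plan is to specialize the coefficient comparison carried out in the proof of Proposition \ref{2024.11.30.9065376} to the single top monomial $e_1^{g-1}e_2^{g-1}$. This isolates $\mathcal{P}_{2\cdot2}(v)$ on one side and $\wp_{(2g-1)\cdot2}(Dv)$ on the other, and reduces the statement to evaluating the constant $\mathfrak{n}_{g,g}$ occurring in that proof.

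First I reproduce the identity obtained there: substituting $\widetilde{e}_i=\mathfrak{s}/(e_i-a)$ into the formula of Theorem \ref{2024.11.5.1}, multiplying by $(e_1-a)^{g-1}(e_2-a)^{g-1}$, and using (\ref{2024.11.30.1}), (\ref{2024.11.30.2})--(\ref{2024.11.30.4}), the decomposition $\overline{f}=f+(e_1-e_2)^2\sum_{i,j=1}^g\mathfrak{n}_{i,j}e_1^{i-1}e_2^{j-1}$, and the defining relation $G(e_1,e_2)=\sum_{i,j=1}^gP_{2g+2-2i,2g+2-2j}e_1^{i-1}e_2^{j-1}$, one gets, after reindexing the left sum by $k=g+1-i$, $l=g+1-j$ and evaluating at the point $\sum_{i=1}^gT_i$ (which maps to $v$ under $I$, as in the proof of Proposition \ref{55555}, so that $P_{2,2}$ becomes $\mathcal{P}_{2\cdot2}(v)$):
\[
\sum_{k,l=1}^g\wp_{2k-1,2l-1}(Dv)\,\mathfrak{s}^{2g-k-l}(e_1-a)^{k-1}(e_2-a)^{l-1}=\mathfrak{s}^{-2}\mathfrak{t}^2\,G(e_1,e_2)+\mathfrak{s}^{-2}\mathfrak{t}^2\sum_{i,j=1}^g\mathfrak{n}_{i,j}e_1^{i-1}e_2^{j-1}.
\]
On the left, $(e_1-a)^{k-1}(e_2-a)^{l-1}$ has degree $k-1\le g-1$ in $e_1$ and $l-1\le g-1$ in $e_2$, so $e_1^{g-1}e_2^{g-1}$ occurs only for $k=l=g$, with coefficient $\wp_{(2g-1)\cdot2}(Dv)$; on the right its coefficient is $\mathfrak{s}^{-2}\mathfrak{t}^2\bigl(\mathcal{P}_{2\cdot2}(v)+\mathfrak{n}_{g,g}\bigr)$. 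Comparing, $\mathcal{P}_{2\cdot2}(v)=\mathfrak{s}^2\mathfrak{t}^{-2}\wp_{(2g-1)\cdot2}(Dv)-\mathfrak{n}_{g,g}$, so it remains to show $\mathfrak{n}_{g,g}=\kappa$.

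To evaluate $\mathfrak{n}_{g,g}$ I read off the coefficient of $e_1^{g+1}e_2^{g-1}$ in $\overline{f}(e_1,e_2)-f(e_1,e_2)=(e_1-e_2)^2\sum_{i,j=1}^g\mathfrak{n}_{i,j}e_1^{i-1}e_2^{j-1}$. In $(e_1-e_2)^2=e_1^2-2e_1e_2+e_2^2$ only the $e_1^2$ term can raise a monomial of $\sum\mathfrak{n}_{i,j}e_1^{i-1}e_2^{j-1}$ (whose $e_1$-degree is at most $g-1$) to $e_1$-degree $g+1$, which forces that monomial to be $e_1^{g-1}e_2^{g-1}$; hence this coefficient equals $\mathfrak{n}_{g,g}$. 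Since $f(e_1,e_2)$ contains only monomials whose two exponents differ by at most $1$, its coefficient of $e_1^{g+1}e_2^{g-1}$ is $0$, so $\mathfrak{n}_{g,g}$ is the coefficient of $e_1^{g+1}e_2^{g-1}$ in $\overline{f}(e_1,e_2)=\mathfrak{t}^{-2}(e_1-a)^{g+1}(e_2-a)^{g+1}\widetilde{f}\bigl(\mathfrak{s}/(e_1-a),\,\mathfrak{s}/(e_2-a)\bigr)$. Expanding $\widetilde{f}$ term by term, only its $i=0$ summand $2\widetilde{\lambda}_{4g+2}+\widetilde{\lambda}_{4g}(\widetilde{e}_1+\widetilde{e}_2)$ can produce $e_1$-degree $g+1$, where it contributes $2\mathfrak{t}^{-2}\widetilde{\lambda}_{4g+2}(e_1-a)^{g+1}(e_2-a)^{g+1}$ and $\mathfrak{t}^{-2}\widetilde{\lambda}_{4g}\mathfrak{s}(e_1-a)^{g+1}(e_2-a)^{g}$; extracting from $(e_2-a)^{g+1}$ and $(e_2-a)^{g}$ the $e_2^{g-1}$-coefficients $\binom{g+1}{2}a^2$ and $-ga$ gives
\[
\mathfrak{n}_{g,g}=\mathfrak{t}^{-2}\bigl\{a^2g(g+1)\widetilde{\lambda}_{4g+2}-ag\mathfrak{s}\widetilde{\lambda}_{4g}\bigr\}=\kappa,
\]
which yields the claimed formula. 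The only delicate point is this final bookkeeping --- checking that no other $\mathfrak{n}_{i,j}$ and no monomial of $f$ feeds into the $e_1^{g+1}e_2^{g-1}$-coefficient, and correctly isolating the two surviving $i=0$ contributions of $\overline{f}$ with their binomial coefficients; the rest is a direct specialization of results already in hand.
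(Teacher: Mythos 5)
Your proof is correct and follows essentially the same route as the paper: the paper's proof likewise reduces the corollary to the identity $\mathfrak{n}_{g,g}=\kappa$, obtained by observing that the coefficient of $e_1^{g+1}e_2^{g-1}$ is $\kappa$ in $\overline{f}$ and $0$ in $f$, and then invokes Proposition \ref{2024.11.30.9065376}~(i). You merely spell out the coefficient extraction at $e_1^{g-1}e_2^{g-1}$ and the binomial bookkeeping that the paper leaves implicit; all of these details check out.
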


\begin{proof}
Since the coefficients of $e_1^{g+1}e_2^{g-1}$ in $\overline{f}(e_1,e_2)$ and $f(e_1,e_2)$ are $\kappa$ and $0$, respectively, 
we have $\mathfrak{n}_{g,g}=\kappa$. From Proposition \ref{2024.11.30.9065376} (i), we obtain the statement of the corollary. 
\end{proof}

\begin{ex}
Let us consider the case $g=2$. 
We have 
\[D=\frac{\mathfrak{s}}{\mathfrak{t}}\begin{pmatrix}\mathfrak{s}&0\\-a&1\end{pmatrix}.\] 
From Proposition \ref{2024.11.30.9065376} (i), we have 
\begin{align*}
\mathcal{P}_{2,4}(v)&=\mathfrak{t}^{-2}\bigl\{\mathfrak{s}^3\wp_{1,3}(Dv)-a\mathfrak{s}^2\wp_{3\cdot2}(Dv)-a^2\mathfrak{s}\widetilde{\lambda}_8+2a^3\widetilde{\lambda}_{10}\bigr\},\\
\mathcal{P}_{4\cdot2}(v)&=\mathfrak{t}^{-2}\bigl\{\mathfrak{s}^4\wp_{1\cdot2}(Dv)-2a\mathfrak{s}^3\wp_{1,3}(Dv)+a^2\mathfrak{s}^2\wp_{3\cdot2}(Dv)\\
&\quad +a\mathfrak{s}^3\widetilde{\lambda}_4-2a^2\mathfrak{s}^2\widetilde{\lambda}_6+4a^3\mathfrak{s}\widetilde{\lambda}_8-6a^4\widetilde{\lambda}_{10}\bigr\}.
\end{align*}
\end{ex}

\section*{Acknowledgments}

The authors would like to thank the referees for reading our manuscript carefully and giving useful comments. 
The authors are grateful to Petr Grinevich for useful discussions of the results of this work, Shigeki Matsutani for useful discussions of the work of Baker, and Atsushi Nakayashiki for useful comments on the KP equation. 
The work of Takanori Ayano was supported by JSPS KAKENHI Grant Number JP21K03296 and was partly supported by MEXT Promotion of Distinctive Joint Research Center Program JPMXP0723833165.




\end{document}